\tikzset{
  on each segment/.style={
    decorate,
    decoration={
      show path construction,
      moveto code={},
      lineto code={
        \path [#1]
        (\tikzinputsegmentfirst) -- (\tikzinputsegmentlast);
      },
      curveto code={
        \path [#1] (\tikzinputsegmentfirst)
        .. controls
        (\tikzinputsegmentsupporta) and (\tikzinputsegmentsupportb)
        ..
        (\tikzinputsegmentlast);
      },
      closepath code={
        \path [#1]
        (\tikzinputsegmentfirst) -- (\tikzinputsegmentlast);
      },
    },
  },
  mid arrow/.style={postaction={decorate,decoration={
        markings,
        mark=at position .5 with {\arrow[#1]{stealth}}
      }}},
}
\newtheorem{theorem}{Theorem}[section]
\newtheorem{lemma}[theorem]{Lemma}
\newtheorem{proposition}[theorem]{Proposition}
\newtheorem{definition}[theorem]{Definition}
\newtheorem{remark}[theorem]{Remark}
\newtheorem{corollary}[theorem]{Corollary}
\newtheorem{assumption}[theorem]{Assumption}
\numberwithin{equation}{section}
\newcommand{\nc}{\normalcolor}
\newcommand{\E}{\mathbf{E}}
\newcommand{\R}{\mathbf{R}}
\newcommand{\C}{\mathbf{C}}
\newcommand{\Z}{\mathbf{Z}}
\newcommand{\N}{\mathbf{N}}
\newcommand{\supp}{\mathrm{supp}}
\newcommand{\ii}{\mathrm{i}}
\newcommand{\ee}{\mathrm{e}}
\newcommand{\rd}{\mathrm{d}}
\newcommand{\cO}{\mathcal{O}}
\newcommand{\cE}{\mathcal{E}}
\newcommand{\Tr}{\mathrm{Tr}}
\newcommand{\A}{\mathfrak{a}}
\newcommand{\B}{\mathfrak{p}}
\newcommand{\cau}{\phi}
\newcommand{\sing}{\mathcal{F}_{\mathrm{sing}}}
\newcommand{\tilsing}{\widetilde{\mathcal{F}}_{\mathrm{sing}}}
\newcommand{\Dela}{\frac{1}{2}\kappa_0}
\newcommand{\Delbound}{\frac{1}{6}\kappa_0}
\newcommand{\Delb}{\frac{3}{4}\kappa_0}
\newcommand{\Delc}{2\Delta}
\newcommand{\Deld}{3\Delta}
\newsavebox{\@brx}
\newcommand{\llangle}[1][]{\savebox{\@brx}{\(\m@th{#1\langle}\)}%
	\mathopen{\copy\@brx\kern-0.5\wd\@brx\usebox{\@brx}}}
\newcommand{\rrangle}[1][]{\savebox{\@brx}{\(\m@th{#1\rangle}\)}%
	\mathclose{\copy\@brx\kern-0.5\wd\@brx\usebox{\@brx}}}
\title{Prethermalization for deformed Wigner matrices}
\thanks{All authors were supported by the ERC Advanced Grant ``RMTBeyond'' No.~101020331. }
\date{\today} 
\begin{document}
	\maketitle
	\vspace{0.25cm}
	
	\renewcommand{\thefootnote}{\fnsymbol{footnote}}

	\noindent
	\mbox{}%
	\hfill%
	\begin{minipage}{0.21\textwidth}
		\centering
		{L\'aszl\'o Erd\H{o}s}\footnotemark[1]\\
		\footnotesize{\textit{lerdos@ist.ac.at}}
	\end{minipage}
	\hfill%
	\begin{minipage}{0.21\textwidth}
		\centering
		{Joscha Henheik}\footnotemark[1]\\
		\footnotesize{\textit{joscha.henheik@ist.ac.at}}
	\end{minipage}
	\hfill%
	\begin{minipage}{0.21\textwidth}
		\centering
		{Jana Reker}\footnotemark[1]\\
		\footnotesize{\textit{jana.reker@ist.ac.at}}
	\end{minipage}
	\hfill%
	\begin{minipage}{0.21\textwidth}
		\centering
		{Volodymyr Riabov}\footnotemark[1]\\
		\footnotesize{\textit{vriabov@ist.ac.at}}
	\end{minipage}
	\hfill%
	\mbox{}%
	\footnotetext[1]{Institute of Science and Technology Austria, Am Campus 1, 3400 Klosterneuburg, Austria. 
	}

	\renewcommand*{\thefootnote}{\arabic{footnote}}
	\vspace{0.25cm}
	
\begin{abstract}
We prove that a class of weakly perturbed Hamiltonians of the form $H_\lambda = H_0 + \lambda W$, with $W$ being a Wigner matrix, exhibits  \emph{prethermalization}.
That is, the time evolution generated by $H_\lambda$ relaxes to its ultimate thermal state via an intermediate prethermal state with a lifetime of order $\lambda^{-2}$.
 Moreover, we obtain a general relaxation formula, expressing the perturbed dynamics via the unperturbed dynamics and the ultimate thermal state. The proof relies on a two-resolvent law for the deformed Wigner matrix $H_\lambda$. 
\end{abstract}
\vspace{0.15cm}

\footnotesize \textit{Keywords:} Quantum Dynamics, Relaxation, Thermalization, Matrix Dyson Equation.

\footnotesize \textit{2020 Mathematics Subject Classification:} 60B20, 82C10.
\vspace{0.25cm}
\normalsize
\section{Introduction} \label{sec:intro}
It is well-known (see, e.g., \cite{Gogolin}) that certain macroscopic observables in an isolated quantum system with  many interacting degrees of freedom tend to equilibrate, i.e., their expectation values become essentially constant at large times. However, if the system is coupled  to the environment~(reservoir), then  the process of relaxation to equilibrium may take different forms depending on the properties of the initial system and the structure of the perturbation. 

In this work, we consider a weakly coupled system of the form
\begin{equation} \label{eq:Hamiltonian}
	H_\lambda := H_0  + \lambda W\,,
\end{equation}
where $H_0$ is a single-body or a many-body \emph{Hamiltonian},  $W$ is an energy preserving (Hermitian) \emph{perturbation}, and $\lambda$ is a small coupling constant. For our phenomenological study, we consider a mean-field random perturbation that  couples all modes. Following the extensive physics literature \cite{Wigner, Deutsch1991, NationPorras, DRpretherm, DRrelax1, DRrelax2}, we choose the perturbation $W$ to be a \emph{Wigner random matrix}, i.e., a random matrix with centered, independent identically distributed (i.i.d.) entries (modulo the Hermitian symmetry). 

The central object of our study is the perturbed time evolution of the quantum expectation value
\begin{equation} \label{eq:tracing}
	\langle A \rangle_{P_\lambda(t)} := \Tr [P_\lambda(t) A]
\end{equation}
of an \emph{observable} $A$, compared to the unperturbed evolution $\langle A \rangle_{P_0(t)} = \Tr [P_0(t) A]$, which is considered known. Here
\begin{equation}\label{eq:timeev}
	P_\lambda(t):=\ee^{-\ii t H_\lambda}P\ee^{\ii t H_\lambda} \qquad \text{resp.} \qquad 	P_0(t):=\ee^{-\ii t H_0}P\ee^{\ii t H_0}
\end{equation}
denote the Heisenberg time evolution of an initial \emph{state} $P$ governed by the (un)perturbed Hamiltonian.  
We point out that the \emph{unperturbed} evolution strongly depends on all its constituents 
and hence, it might exhibit qualitatively different and generally complex behavior.

In contrast, the \emph{perturbed} system relaxes to equilibrium via a robust mechanism, and it can be described by a fairly simple general \emph{relaxation formula}
\begin{equation}\label{eq:keyeq}
	\langle A\rangle_{P_\lambda(t)}\approx\langle A\rangle_{\widetilde{P}_\lambda}+|g_\lambda(t)|^2
	\big[\langle A\rangle_{P_0(t)}-\langle A\rangle_{\widetilde{P}_\lambda}\big], \qquad  g_\lambda(t): = \ee^{-\alpha\lambda^2t},
	\quad \alpha>0,
\end{equation}
where $\widetilde{P}_\lambda$ is the thermal state of the composite system~\eqref{eq:Hamiltonian}. In this form, Eq.~\eqref{eq:keyeq} is first mentioned in~\cite[Eq.~(40)]{NationPorras}, where it describes the time dependence of the expectation of an observable in a nonintegrable system after perturbation by a random matrix.

The relaxation formula \eqref{eq:keyeq} shows convergence to the thermal state at an exponential rate on time scales of 
order $\lambda^{-2}$ (in agreement with \emph{Fermi's golden rule}),
 but it also carries more refined information about the role of the unperturbed dynamics
in the process. A particularly interesting case occurs if both the perturbed and unperturbed systems equilibrate but do not approach the same limiting value. This often happens if $H_0$ has an additional symmetry (conserved quantity) that is broken by the perturbation. If the time scale $\lambda^{-2}$ of the perturbed equilibration is smaller than that of the unperturbed one, then the former robust process eclipses the latter. In particular, the precise form of $\langle A\rangle_{P_0(t)}$ in~\eqref{eq:keyeq} is irrelevant whenever the prefactor $|g_\lambda(t)|^2$ is already exponentially small. In the opposite case, however, the equilibration of the perturbed dynamics happens in two stages. This phenomenon, known as \emph{prethermalization} in the physics literature, was first described in a paper by Moeckel and Kehrein~\cite{MK2008}. We remark, however, that this terminology was already used to describe a different phenomenon a few years earlier~\cite{BBW2004}. 
  
Nowadays, prethermalization has been extensively studied both experimentally (see, e.g., the review~\cite{LGS2016}) and theoretically (e.g., in~\cite{BISZ2016,DieplingerBera,GalloneLangella,KWE2011,Reimann2016,Ueda2020, MRdR2019}, see also the review~\cite{MIKUreview}). Reimann and Dabelow \cite{DRpretherm} studied the first relaxation stage of a prethermalization process, which is governed by $H_0$. More precisely, assuming that $P_0(t)$ relaxes to a non-thermal steady state, they find that the perturbed time evolution $\langle A\rangle_{P_\lambda(t)}$ with a sufficiently weak perturbation ($\lambda \ll 1$) closely follows the unperturbed time evolution $\langle A\rangle_{P_0(t)}$ for times $t \ll \lambda^{-2}$. In particular, the perturbed time evolution $\langle A\rangle_{P_\lambda(t)}$ is close to the non-thermal steady state of $H_0$ for times $1 \ll t \ll \lambda^{-2}$.\footnote{Unperturbed systems $H_0$, for which the time evolution $\langle A \rangle_{P_0(t)}$ does not approach the microcanonical prediction of equilibrium statistical mechanics, but nevertheless have a large-$t$ limit (a non-thermal steady state), are studied in \cite{BalzReimann}. In a sense, these systems exhibit prethermalization, although they do not approach thermal equilibrium after the first steady state is reached.} 
The authors of \cite{DRpretherm} further extended their principal approach to a general study of relaxation theory for perturbed quantum dynamics in~\cite{DRrelax1, DRrelax2}. These works now include all times and also the \emph{strong} coupling regime (in case of banded matrices), which yields a characteristic power-law time decay (given more precisely by a Bessel function) instead of the exponential decay in~\eqref{eq:keyeq}. The theoretical model is then applied to several examples and compare the prediction to numerical and experimental works (see also Dabelow's PhD thesis~\cite{DabelowThesis} for further details). Finally, we also mention that prethermalization in the form of the existence of an effectively conserved quantity for very long times has been rigorously established in \cite{AdRHH2016} for periodically driven quantum systems if the frequency is large compared with the size of the driving potential.

In this paper, we approach prethermalization from the viewpoint of random matrix theory, interpreting the unperturbed Hamiltonian $H_0 \equiv H_0(N) \in \C^{N \times N}$ as a fixed sequence of bounded self-adjoint deterministic matrices and the perturbation $W\equiv W(N)$ as an $N\times N$ Wigner random matrix. Our Hamiltonian $H_\lambda$ in the setting of~\eqref{eq:Hamiltonian} is also called~\emph{deformed Wigner matrix} in random matrix theory, or it can be viewed as a Wigner random matrix with nonzero expectation.
Wigner matrices are encountered in many related physics models, e.g., the recent rigorous study of thermalization problems~\cite{ETHpaper,thermalization,multiG}. Here, the key technical result is a strong concentration property of the resolvent  $G(z)= (H_\lambda-z)^{-1}$
or products of several resolvents around their deterministic approximation. Such results are commonly called \emph{multi-resolvent
global or local laws}, depending on the distance of the spectral parameter from the spectrum.
For example, a typical \emph{two-resolvent} law computes 
\begin{equation}\label{eq:GAGA}
\llangle G(z_1)A_1G(z_2)A_2\rrangle=\llangle(H_\lambda-z_1)^{-1}A_1(H_\lambda-z_2)^{-1}A_2\rrangle
\end{equation}
to leading order in $N$, \nc
where $\llangle\cdot\rrangle$ denotes the normalized trace, $z_1,z_2\in\C\setminus\R$, and $A_1,A_2\in\C^{N\times N}$ are deterministic matrices. Using functional calculus, the resolvents can be replaced by more 
general and even $N$-dependent functions, thus linking~\eqref{eq:GAGA} to the Heisenberg time evolution.
Recent work~\cite{equipart} establishes a multi-resolvent local law for deformed Wigner matrices in the bulk regime of the spectrum, which motivated our study of perturbed quantum systems.

\subsection{Description of the main results}
The principal goal of this work is a rigorous proof of the relaxation formula (Corollary \ref{cor:relax}) and prethermalization (Corollary \ref{cor:PreT1}) for perturbed quantum Hamiltonians of the form \eqref{eq:Hamiltonian}. We thereby assume that the unperturbed Hamiltonian $H_0$ has a (locally) regular limiting density of states $\rho_0$ around a reference energy $E_0$ and only energies in a microscopically large but macroscopically small interval $I_\Delta:= [E_0 - \Delta, E_0+ \Delta]$ are populated by the initial state $P$ (similar assumptions are made in \cite{DRpretherm, DRrelax1, DRrelax2}). We then show the following corollaries of our main Theorem \ref{thm:main}: 
\begin{itemize}
\item[Cor.~\ref{cor:relax}:] The relaxation formula \eqref{eq:keyeq} holds generally for short and long kinetic times, i.e.~$t \ll \lambda^{-2}$ and $t \gg \lambda^{-2}$, corresponding to $|g_\lambda(t)|^2 \approx 1$ and $|g_\lambda(t)|^2 \approx 0$, respectively. At intermediate times, $t \sim \lambda^{-2}$ it is generally \emph{not} valid, unless the quadratic forms $\langle \bm u_j, A \bm u_j \rangle$ of overlaps with the eigenvectors $\bm u_j$ of $H_0$ behave regularly in $j$ (cf.~Definition \ref{def:LOR}). This happens, e.g., if $H_0$ satisfies the Eigenstate Thermalization Hypothesis (ETH). 
\item[Cor.~\ref{cor:PreT1}:] Assuming that the unperturbed time evolution has a long time limit $\langle A \rangle_{P_0(t)} \overset{t \to \infty}{\longrightarrow} \langle A \rangle_{P_{\mathrm{pre}}}$, such that the \emph{prethermal state} $P_{\rm pre}$ is distinguishable from the thermal state $\widetilde{P}_{\lambda}$ of the perturbed system, i.e.~$\langle A \rangle_{P_{\mathrm{pre}}} \neq \langle A \rangle_{\widetilde{P}_\lambda}$
 (cf.~Definition \ref{def:preTcond}), we show the characteristic two-step relaxation of a prethermalization process; see Figure \ref{fig-2plateau}. 
\end{itemize}

\begin{figure}[h]
	\begin{center}
		\begin{tikzpicture}[scale=\textwidth/15.2cm]
			\nc
			\draw[->] (-0.5,0) -- (10,0);
			\draw (10,0) node[below=1pt] {$t$};
			\draw[->] (0, -0.5) -- (0,5);
			\draw (0,5) node[left=1pt] {$\langle A\rangle_{P_\lambda(t)}$};

			\draw[dotted] (0,3.5) -- (10,3.5);
			\draw (0,3.5) node[left=1pt] {$\langle A\rangle_{P_{\rm pre}}$};
			\draw[dotted] (0,1.5) -- (10,1.5);
			\draw (0,1.5) node[left=1pt] {$\langle A\rangle_{\widetilde{P}_\lambda}$};
			
			\draw (5,0) node[below=1pt] {$\sim \lambda^{-2}$};
			
			\draw[red,thick,dashed] (0.25,4.5) .. controls (0.75,4.5) .. (1,4.142);
			\draw[red,thick,smooth] (1,4.142) .. controls (1.5,3.4889) .. (3,3.4889);
			
			\draw[red,thick,smooth] plot[variable=\x,domain=3:10] (\x,{-2/(1+exp((-(\x-5))/0.4))+3.5});
			
		\end{tikzpicture}
	\end{center}
	\captionof{figure}{Depicted is a schematic graph of the prethermalization process: For times $1 \ll t\ll \lambda^{-2}$, the perturbed time evolution of the quantum expectation $\langle A \rangle_{P_\lambda(t)}$ (see \eqref{eq:tracing}) is close to the quantum expectation of $A$ in the prethermal state $P_{\rm pre}$, i.e.~$\langle A \rangle_{P_\lambda(t)} \approx \langle A \rangle_{P_{\mathrm{pre}}}$. For times $t \gg \lambda^{-2}$, we have that $\langle A \rangle_{P_\lambda(t)}$ is close to the limiting thermal quantum expectation, i.e.~$\langle A \rangle_{P_\lambda(t)} \approx \langle A \rangle_{\widetilde{P}_{\lambda}}$. This value is typically different from the prethermal quantum expectation $\langle A \rangle_{P_{\mathrm{pre}}}$.
		This means, the ultimate relaxation of $\langle A \rangle_{P_\lambda(t)}$ towards  $\langle A \rangle_{\widetilde{P}_{\lambda}}$ happens via an intermediate \emph{prethermal} value $\langle A \rangle_{P_{\mathrm{pre}}}$ in two steps, whose time scales are separated by~$\lambda^{-2}$. }\label{fig-2plateau}
\end{figure}
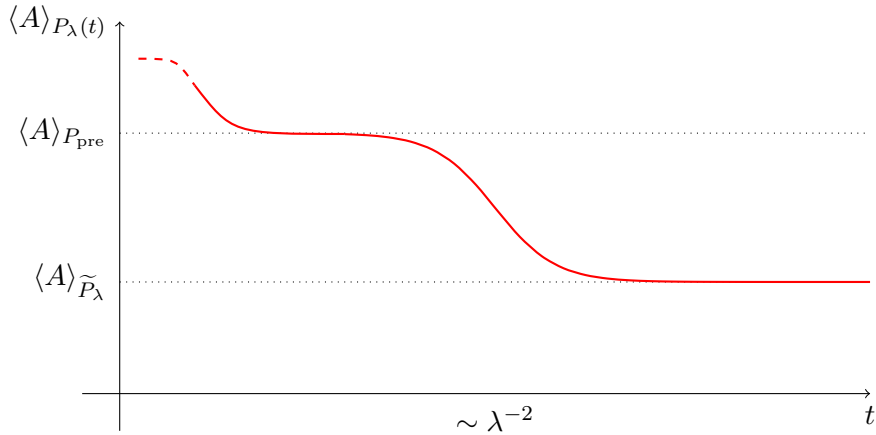

\subsection{Outline of the paper}
The general task in this paper is to approximately evaluate the random time evolution $\langle A \rangle_{P_\lambda(t)}$ from \eqref{eq:tracing}. This is carried out in several steps 
summarized schematically in Figure \ref{fig-structure}.
First, in Theorem~\ref{thm:main}~(a) in Section \ref{subsec:relaxation}, 
the true  time evolution $P_\lambda(t)$ is expressed 
as a linear combination of the unperturbed time evolution $P_0(t)$ 
and another deterministic time-dependent object $\widetilde{P}_{\lambda, t}$ that is conceptually simpler
than $P_\lambda(t)$.
Then, in Theorem~\ref{thm:main}~(b), we identify a time-independent state $\widetilde{P}_{\lambda}$
as the large time limit of
$\widetilde{P}_{\lambda, t}$. Combining both parts of Theorem \ref{thm:main}, we arrive at Corollary \ref{cor:relax}, which establishes the relaxation formula \eqref{eq:keyeq} at small and large kinetic times. As mentioned above, at intermediate times, it holds only for observables having the \emph{local overlap regularity (LOR) property} (see Definition \ref{def:LOR}).
In the subsequent Section~\ref{subsec:pretherm}, dropping the LOR property, but assuming additionally
that the unperturbed Hamiltonian $H_0$ and the initial state $P$ have the 
\emph{prethermalization property} (see Definition \ref{def:preTcond}), we obtain the characteristic two-scale relaxation of $P_\lambda(t)$ towards $\widetilde{P}_{\lambda}$ via an intermediate \emph{prethermal state} $P_{\rm pre}$ (see Corollary~\ref{cor:PreT1} and Figure \ref{fig-2plateau}). 
As an additional result, in Theorem \ref{thm:PreT2} in Section \ref{subsec:mcensemble}
we relate $\widetilde{P}_\lambda$ to the microcanonical ensemble of $H_\lambda$, called $P_\lambda^{\rm (mc)}$, which is  independent of the initial state $P$. Finally, our results are illustrated by two simple examples in Section \ref{subsec:example}. 

While most proofs are given in Section \ref{sec:proofs}, some auxiliary results and additional proofs are deferred to Appendix \ref{app:app}.

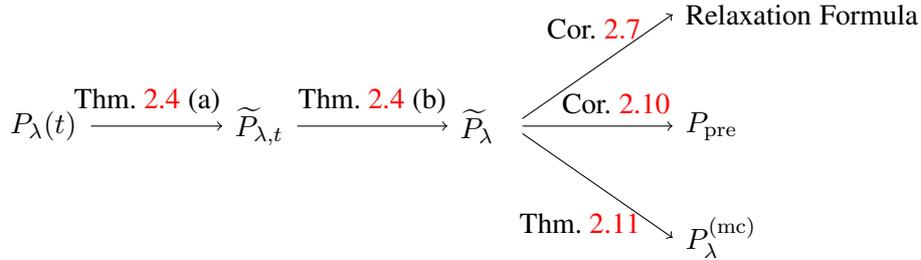
\begin{figure}[h]
	\begin{center}
		\begin{tikzpicture}[scale=\textwidth/15.2cm]
			\nc
			\draw (0,0) node[right=1pt] {$P_\lambda(t)$};
			\draw (3,0) node[right=1pt] {$\widetilde{P}_{\lambda,t}$};
			\draw (6,0) node[right=1pt] {$\widetilde{P}_\lambda$};
			\draw (9,1.5) node[right=1pt] {Relaxation Formula};
			\draw (9,0) node[right=1pt] {$P_{\rm pre}$};
			\draw (9,-1.5) node[right=1pt] {$P_\lambda^{({\rm mc})}$};
			
			\draw[->] (1.25,0) -- (3,0);
			\draw[->] (4,0) -- (6,0);
			\draw[->] (7,0.1) -- (9,1.5);
			\draw[->] (7,0) -- (9,0);
			\draw[->] (7,-0.1) -- (9,-1.5);
			
			\draw (2,0) node[above=1pt] {Thm.~\ref{thm:main} (a)};
			\draw (5,0) node[above=1pt] {Thm.~\ref{thm:main} (b)};
			\draw (8.75,1) node[above left=1pt] {Cor.~\ref{cor:relax}};
			\draw (8.25,0) node[above=1pt] {Cor.~\ref{cor:PreT1}};
			\draw (8.75,-1) node[below left=1pt] {Thm.~\ref{thm:PreT2}};
			
		\end{tikzpicture}
	\end{center}
	\captionof{figure}{The structure of our main results.}\label{fig-structure}
\end{figure}

\subsection{Notation}
For positive quantities $f,g$ we write $f\lesssim g$ (or $f=\mathcal{O}(g)$) and $f\sim g$ if $f \le C g$ or $c g\le f\le Cg$, respectively, for some constants $c,C>0$ which only depend on the constants appearing in the moment condition (see \eqref{eq:moment_bound}) and the definition of the set of admissible energies (see \eqref{eq:admiss_spec}).
In informal explanations, we frequently use the notation $f \ll g$, which indicates that $f$ is "much smaller" than $g$. Moreover, we shall also write $w \approx z$ to indicate the closeness of $w, z \in \C$ with a not precisely specified error. 

For any natural number $n$ we set $[n]: =\{ 1, 2,\ldots ,n\}$. Matrix entries are indexed by lowercase Roman letters $a, b, c, ...$ and $ i,j,k, ...$ from the beginning or the middle of the alphabet and unrestricted sums over $a,b,c,...$ and $ i,j,k,...$ are always understood to be over $[N] = \{1,...,N\}$. 

We denote vectors by bold-faced lowercase Roman letters ${\bm x}, {\bm y}\in\C ^N$, for some $N\in\N$. Vector and matrix norms, $\lVert {\bm x}\rVert$ and $\lVert A\rVert$, indicate the usual Euclidean norm and the corresponding induced matrix norm. For any $N\times N$ matrices $A,B$ we use the notations $\llangle A\rrangle:= N^{-1}\mathrm{Tr}  A$ to denote the normalized trace of $A$ and $\langle A \rangle_B := \Tr[AB]$ is the trace of the product $AB$. We denote the spectrum of a matrix or operator $A$ by $\sigma(A)$. 
Moreover, for vectors ${\bm x}, {\bm y}\in\C^N$ and matrices  $A\in\C^{N\times N}$ we define 
\[
\langle {\bm x},{\bm y}\rangle:= \sum_{i} \overline{x}_i y_i\,, \qquad A_{\bm x \bm y} := \langle \bm x, A \bm y \rangle \,. 
\]
For a unit vector $\bm v \in \C^{N}$ we shall also use the notation $\ket{\bm v} \bra{\bm v}$ for the projection onto the one-dimensional subspace spanned by $\bm v$.

Finally, we use the concept of ``with very high probability'' \emph{(w.v.h.p.)} meaning that for any fixed $C>0$, the probability of an $N$-dependent event is bigger than $1-N^{-C}$ for $N\ge N_0(C)$. We introduce the notion of \emph{stochastic domination} (see e.g.~\cite{semicirclegeneral}): given two families of non-negative random variables
\[
X=\left(X^{(N)}(u) : N\in\N, u\in U^{(N)} \right) \quad \mathrm{and}\quad Y=\left(Y^{(N)}(u) : N\in\N, u\in U^{(N)} \right)
\] 
indexed by $N$ (and possibly some parameter $u$  in some parameter space $U^{(N)}$), 
we say that $X$ is stochastically dominated by $Y$, if for all $\xi, C>0$ we have 
\begin{equation}
	\label{stochdom}
	\sup_{u\in U^{(N)}} \mathbf{P}\left[X^{(N)}(u)>N^\xi  Y^{(N)}(u)\right]\le N^{-C}
\end{equation}
for large enough $N\ge N_0(\xi,C)$. In this case we use the notation $X\prec Y$ or $X= \mathcal{O}_\prec(Y)$.

\subsection*{Acknowledgments} We thank Peter Reimann and Lennart Dabelow for helpful comments.

\section{Main Results} \label{sec:main}

In Section \ref{subsec:Assumptions}, we give the precise definition of Wigner random matrices and the assumptions on the Hamiltonian, observables and states under consideration in \eqref{eq:Hamiltonian}--\eqref{eq:timeev}. Afterward, we formulate our main results in Sections \ref{subsec:relaxation}--\ref{subsec:mcensemble}. Finally, in Section \ref{subsec:example} we discuss our findings in the context of two simple examples.

\subsection{Assumptions} \label{subsec:Assumptions} 
We begin with formulating the assumption on the Wigner matrix $W$. 
		\begin{assumption}[Wigner matrix]  \label{ass:W}
		Let $W \equiv W(N) = (w_{ij})_{i,j \in [N]}$ from \eqref{eq:Hamiltonian} be a real symmetric or complex Hermitian random matrix $W = W^*$ with independent entries distributed according to the laws $w_{ij} \stackrel{\mathrm{d}}{=} N^{-1/2}\chi_{\mathrm{od}}$ for $i < j$ and $w_{jj} \stackrel{\mathrm{d}}{=} N^{-1/2}\chi_{\mathrm{d}}$. The random variables $\chi_{\mathrm{od}}$ and $\chi_{\mathrm{d}}$ satisfy the following assumptions:\footnote{A careful examination of our proof reveals that the entries of $W$ need not be distributed identically. Indeed, only the matching of the second moments is necessary, but higher moments can differ.}
		We assume that $\chi_{\mathrm{d}}$ is a centered real random variable, and $\chi_{\mathrm{od}}$ is a real or complex random variable with $\E \chi_{\mathrm{od}} = 0$ and $\E |\chi_{\mathrm{od}}|^2 = 1$.
		
		Furthermore, we assume the existence of higher moments, namely
		\begin{equation} \label{eq:moment_bound}
			\E |\chi_{\mathrm{d}}|^p + \E |\chi_{\mathrm{od}}|^p \le C_p,
		\end{equation}
		for all $p\in \N$, where $C_p$ are positive constants.
	\end{assumption}
For concreteness, we focus on the complex case with the additional assumptions $\E \chi_{\mathrm{od}}^2 = 0$ and $\E |\chi_{\mathrm{d}}|^2 = 1$; all other cases can also be handled as in \cite{ETHpaper}. The precise conditions on the Wigner matrix only play a role in the underlying \emph{two resolvent global law} (Proposition \ref{prop:LL}).

For the Hamiltonian $H_0 \equiv H_0(N)$ in \eqref{eq:Hamiltonian} we assume the following. 
	\begin{assumption}[$H_0$ and its density of states] \label{ass:H0}
		The Hamiltonian $H_0 \equiv H_0(N)$ is deterministic, self-adjoint $H_0 = H_0^*$, and uniformly bounded $\Vert H_0 \Vert \lesssim 1$. We denote the resolvent of $H_0$ at any spectral parameter $z \in \C\setminus \R$ by
		\begin{equation*}
M_0(z):= \frac{1}{H_0 -z}\,.
		\end{equation*}
		Moreover, we assume the following: 
		\begin{itemize}
\item[(i)] 	There exists a compactly supported measurable function $\rho_0 : \R \to [0,+\infty)$ with 
$$\int_\R \rho_0(x) \rd x = 1$$
 and two positive sequences $\epsilon_0(N)$ and $\eta_0(N)$, both converging to zero as $N\to\infty$,
  such that, uniformly in $z \in \C\backslash\R$ with $\eta:=|\Im z| \ge \eta_0 \equiv \eta_0(N)$, we have
\begin{equation} \label{eq:rho0}
	\llangle[\bigl] M_0(z)\rrangle[\bigr] = m_0(z) + \mathcal{O}(\epsilon_0) \quad \text{with} \quad \epsilon_0 \equiv \epsilon_0(N)\,.
\end{equation}
Here, 
\begin{equation} \label{eq:m0}
m_0(z):= \int_\R \frac{\rho_0(x)}{x-z}\mathrm{d}x
\end{equation} 
is the Stieltjes transform of $\rho_0$. We refer to $\rho_0$ as the \emph{limiting density of states}, and to $\supp(\rho_0)$ as the \emph{limiting spectrum} of $H_0$. 
\item[(ii)] For small positive constants $\kappa,c>0$, we define the set of \emph{admissible energies} $\sigma_{\mathrm{adm}}^{(\kappa,c)}$ in the limiting spectrum of $H_0$ by\footnote{Here, $C^{1,1}(J)$ denotes the set of continuously differentiable functions with a Lipschitz-continuous derivative on an interval $J$, equipped with the norm $\lVert f\rVert_{C^{1,1}(J)} := \lVert f\rVert_{C^1(J)} + \sup\limits_{x,y\in J: x\neq y} \frac{|f'(x)-f'(y)|}{|x-y|}$.}
\begin{equation} \label{eq:admiss_spec}
\sigma_{\mathrm{adm}}^{(\kappa,c)} := \left\{x \in \supp(\rho_0) : \inf_{|y-x|\le\kappa}\rho_0(y) > c,\, \lVert\rho_0\rVert_{C^{1,1}([x-\kappa, x+\kappa])} \le 1/c\right\}.
\end{equation}
We assume that for some positive $\kappa,c> 0$, $\sigma_{\mathrm{adm}}^{(\kappa,c)}$ is not empty.
		\end{itemize}
	\end{assumption}

Assuming that the set of admissible energies in~\eqref{eq:admiss_spec} is non-empty ensures that there is a part of the limiting spectrum $\mathrm{supp}(\rho_0)$, where the limiting density of states $\rho_0$ behaves regularly, i.e.~it is strictly positive and sufficiently smooth. Finally, we formulate our assumptions on the observables and states considered in \eqref{eq:tracing} and~\eqref{eq:timeev}. 
\begin{assumption}[States and observables] \label{ass:stateandobs}
Given Assumption \ref{ass:H0}, we first pick a \emph{reference energy}
\begin{equation} \label{eq:E0}
E_0 \in \sigma_{\mathrm{adm}}^{(\kappa_0,c_0)} \quad  \text{for some}  \quad \kappa_0, c_0>0,
\end{equation}
and further introduce $I_\delta:=[E_0-\delta,E_0+\delta]$  for any $0<\delta<\kappa_0$. 
Moreover, take an \emph{energy width} $\Delta \in (0, \Delbound)$ and let
 $\Pi_\Delta := \mathds{1}_{I_\Delta}(H_0)$ be the spectral projection of 
 $H_0$ onto the interval $I_\Delta$. 
 Then, we assume the following: 
\begin{itemize}
\item[(i)] The (deterministic) initial state $P \equiv P(N) \in \C^{N \times N}$ in \eqref{eq:timeev} is a state in the usual sense~($P = P^*$, $0 \le P \le 1$, and $\Tr [P] = 1$), and is localized in $I_\Delta$, i.e.
\begin{equation} \label{eq:Plocalize}
	P = \Pi_\Delta P \Pi_\Delta\,. 
\end{equation}
\item[(ii)] The observable $A \equiv A(N) \in \C^{N \times N}$ is deterministic, self-adjoint and satisfies $\Vert A \Vert \lesssim 1$. 
\end{itemize}
\end{assumption}
Note that we assume only the state $P$ to be localized in $I_\Delta$ and not the observable. However, 
by inspecting the proof, we see that $A$ and $P$ play essentially symmetric roles, and thus, our results hold verbatim if we assume localization of $A$ instead of $P$. Moreover, for ease of notation, we drop the $N$-dependence of all the involved matrices. 

\subsection{Relaxation of perturbed quantum dynamics} \label{subsec:relaxation} In this section, we present our main result on the time evolution of the random quantum expectation $\langle A \rangle_{P_\lambda(t)}$ from \eqref{eq:tracing}.
Its relaxation is described in two steps, hence Parts (a) and (b) in the following theorem. In the first 
step, we eliminate the randomness and  identify the leading deterministic part of $\langle A \rangle_{P_\lambda(t)}$
in terms of $M_0$, the unperturbed resolvent. In the second step, we consider the short and long-time limits
of the leading term. Further explanatory comments come after the theorem and 
in Remark \ref{rmk:main} below. 

\begin{theorem}[Relaxation of perturbed dynamics]\label{thm:main}
Let $H_\lambda = H_0 + \lambda W$ be a perturbed Hamiltonian like in \eqref{eq:Hamiltonian} with $\lambda > 0$, whose constituents satisfy Assumptions \ref{ass:W} and \ref{ass:H0}, respectively. Pick a reference energy $E_0$ like in \eqref{eq:E0}. Let $P$ be a state satisfying Assumption \ref{ass:stateandobs}~(i) for some energy width $\Delta > 0$ and $A$ an observable satisfying Assumption \ref{ass:stateandobs}~(ii). 

Then, using the notations \eqref{eq:tracing} and~\eqref{eq:timeev}, we have the following two approximation statements:
\begin{itemize}
\item[(a)]  \emph{[Relaxation in the kinetic limit]} The perturbed dynamics $\langle A \rangle_{P_\lambda(t)}$ satisfies
\begin{equation} \label{eq:main}
	\langle A \rangle_{P_\lambda(t)}=|g_\lambda(t)|^2 \langle A\rangle_{P_0(t)}+ \langle A \rangle_{\widetilde{P}_{\lambda,t}}+ \cE\,,
\end{equation}
 where we denoted 
\begin{equation} \label{eq:gfunction}
	g_\lambda(t) := \ee^{-\alpha \lambda^2 t} \quad \text{with} \quad \alpha := \pi \rho_0(E_0) \,. 
\end{equation}
Moreover, we introduced\footnote{Recall that the imaginary part of a matrix $B\in\C^{N\times N}$ is given by $\Im B=\frac{1}{2i}(B-B^*)$.}
\begin{equation} \label{eq:kernelstate}
	\widetilde{P}_{\lambda,t} := \frac{\int_\R \int_\R \Im M_0(x + \ii \alpha \lambda^2) \, K_{\lambda, t}(x-y) \, \langle \Im M_0(y + \ii \alpha \lambda^2)  \rangle_P \, \rd x \rd y}{\int_\R \Tr [\Im M_0(x + \ii \alpha \lambda^2)] \,  \langle \Im M_0(x + \ii \alpha \lambda^2)  \rangle_P \,  \rd x }\,, 
\end{equation}
with an explicit kernel given by
\begin{equation} \label{eq:kernel}
K_{\lambda, t}(u) := \frac{1}{\pi}\frac{2 \alpha \lambda^2 }{u^2 + (2 \alpha \lambda^2)^2} \left( 2\alpha \lambda^2 t \frac{\sin(tu)}{tu} - \cos(tu) + \ee^{-2\alpha \lambda^2 t}\right) \,. 
\end{equation}
Finally, we have $\cE = \mathcal{O}(\cE_0) + \mathcal{O}_\prec (C(\lambda, t)/\sqrt{N})$ for some constant $C(\lambda, t) > 0$ and for every fixed $T \in (0, \infty)$, the deterministic error $\cE_0=\cE_0(\lambda, t, \Delta, N)$ satisfies
\begin{equation} \label{eq:triplelimvanish}
	\lim_{\Delta\rightarrow0}\lim_{\substack{t\rightarrow\infty\\\lambda\rightarrow0\\ \lambda^2t=T}}\lim_{N\rightarrow\infty} \cE_0=0\,.
\end{equation}
\item[(b)] \emph{[Long and short kinetic time limit]}
Defining
\begin{equation} \label{eq:kernelstateapprox}
	\widetilde{P}_{\lambda} := \frac{\int_\R  \Im M_0(x + \ii \alpha \lambda^2) \,  \langle \Im M_0(x + \ii \alpha \lambda^2) \rangle_P  \, \rd x }{\int_\R \Tr [\Im M_0(x + \ii \alpha \lambda^2)] \,  \langle \Im M_0(x + \ii \alpha \lambda^2)  \rangle_P \,  \rd x }\,, 
\end{equation}
it holds that
\begin{equation} \label{eq:mainb}
\langle A \rangle_{\widetilde{P}_{\lambda,t}} = \big(1 - |g_\lambda(t)|^2\big) \langle A \rangle_{\widetilde{P}_{\lambda}} + \mathcal{R}\,, 
\end{equation}
where, for every fixed $T \in (0,\infty)$, the error term $\mathcal{R} = \mathcal{R}(\lambda, t, \Delta, N)$ satisfies
\begin{equation} \label{eq:convergencetildeP}
\begin{split}
\limsup_{\Delta\rightarrow0} \limsup_{\substack{t\rightarrow\infty\\\lambda\rightarrow0\\ \lambda^2 t=T}} \limsup_{N\rightarrow\infty} |\mathcal{R}| \lesssim T\ee^{-2\alpha T}.
\end{split}
\end{equation}
\end{itemize}
\end{theorem}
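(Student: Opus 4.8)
\emph{Strategy and linearisation.} The plan is to eliminate the randomness via a multi-resolvent law and then perform an explicit deterministic computation based on the matrix Dyson equation (MDE). Write $G(z):=(H_\lambda-z)^{-1}$; by Stone's formula, with $z_i:=x_i+\ii\sigma_i\eta$,
\[
\langle A\rangle_{P_\lambda(t)}=\Tr\!\big[P\,\ee^{\ii tH_\lambda}A\,\ee^{-\ii tH_\lambda}\big]
=\lim_{\eta\downarrow0}\frac{N}{(2\pi\ii)^2}\sum_{\sigma_1,\sigma_2=\pm}\sigma_1\sigma_2\int_\R\!\int_\R \ee^{\ii t(x_1-x_2)}\,\llangle G(z_1)\,A\,G(z_2)\,P\rrangle\,\rd x_1\,\rd x_2,
\]
the limit $\eta\downarrow0$ taken after $N\to\infty$. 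Let $M_\lambda(z)$ be the deterministic approximation of $G(z)$, i.e.\ the solution of the MDE $M_\lambda(z)^{-1}=H_0-z-\lambda^2\llangle M_\lambda(z)\rrangle$, and $m_\lambda(z):=\llangle M_\lambda(z)\rrangle$; since $\|H_0\|\lesssim1$, $\|A\|\lesssim1$ and $P=\Pi_\Delta P\Pi_\Delta$, the two-resolvent global law (Proposition~\ref{prop:LL}) applies uniformly in the spectral parameters arising here and replaces $\llangle GAGP\rrangle$ by its deterministic counterpart with error $\mathcal O_\prec(1/\sqrt N)$ per point, which integrates to the fluctuation term $\mathcal O_\prec(C(\lambda,t)/\sqrt N)$ in \eqref{eq:main}. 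For the flat (Wigner-type) self-energy the deterministic chain equals $\llangle M_\lambda(z_1)AM_\lambda(z_2)P\rrangle+\lambda^2\llangle M_\lambda(z_1)AM_\lambda(z_2)\rrangle\,(1-\lambda^2\llangle M_\lambda(z_1)M_\lambda(z_2)\rrangle)^{-1}\llangle M_\lambda(z_1)M_\lambda(z_2)P\rrangle$, and, since $M_\lambda(z)$ is diagonal in the $H_0$-eigenbasis with entries $(e_j-z-\lambda^2 m_\lambda(z))^{-1}$, the MDE yields (writing $M_i:=M_\lambda(z_i)$, $m_i:=m_\lambda(z_i)$) the identities
\[
\llangle M_1BM_2\rrangle=\frac{\llangle M_1B\rrangle-\llangle M_2B\rrangle}{(z_1-z_2)+\lambda^2(m_1-m_2)}\,,\qquad
\big(1-\lambda^2\llangle M_1M_2\rrangle\big)^{-1}=\frac{(z_1-z_2)+\lambda^2(m_1-m_2)}{z_1-z_2}\,.
\]

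\emph{Deterministic analysis (part (a)).} By Assumption~\ref{ass:H0} and the stability of the MDE, $m_\lambda(x+\ii0)\to m_0(E_0)$ as $\lambda\to0,\Delta\to0$, uniformly for $x\in I_\Delta$; hence $\Im M_\lambda(x+\ii0)\approx\pi\,Q_{\alpha\lambda^2}\!\big(H_0-x-\lambda^2\Re m_0(E_0)\big)$, where $\alpha=\pi\rho_0(E_0)$ and $Q_a(u):=\tfrac1\pi\tfrac{a}{u^2+a^2}$. In the first (non-ladder) summand the sum over $\sigma_1,\sigma_2$ collapses the two resolvents into $\Im M_\lambda$'s, and the $x_1,x_2$-integrals reconstruct $\ee^{\pm\ii tH_0}$, each carrying the damping $\widehat{Q_{\alpha\lambda^2}}(t)=\ee^{-\alpha\lambda^2 t}=g_\lambda(t)$ while the phases $\ee^{\pm\ii t\lambda^2\Re m_0(E_0)}$ cancel; this gives exactly $|g_\lambda(t)|^2\langle A\rangle_{P_0(t)}$. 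The ladder summand, via the identities above, becomes
\[
\frac{\lambda^2\big(\widetilde A(\mu(z_1))-\widetilde A(\mu(z_2))\big)\big(\widetilde P(\mu(z_1))-\widetilde P(\mu(z_2))\big)}{\big(\mu(z_1)-\mu(z_2)\big)(z_1-z_2)}\,,\qquad \mu(z):=z+\lambda^2 m_\lambda(z)\,,
\]
with $\widetilde A(w):=\tfrac1N\Tr[(H_0-w)^{-1}A]$ and $\widetilde P(w):=\tfrac1N\langle(H_0-w)^{-1}\rangle_P$; evaluating the $x_1,x_2$-integrals by residues and summing over the four boundary values — using that for opposite boundary values $\mu(x_1+\ii0)-\mu(x_2-\ii0)\approx(x_1-x_2)+2\ii\alpha\lambda^2$ near $E_0$, so the effective regularisation is $2\alpha\lambda^2$ (not $0$), while the remaining factors vary in $x$ on the scale $\lambda^2$, which resonates with $t\sim\lambda^{-2}$ — reproduces precisely $\langle A\rangle_{\widetilde P_{\lambda,t}}$ with the kernel \eqref{eq:kernel} and normalisation \eqref{eq:kernelstate}. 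The deterministic error $\cE_0$ collects the corrections from (i) $\llangle M_0(\,\cdot\,+\ii\eta)\rrangle=m_0+\mathcal O(\epsilon_0)$ and $\eta\ge\eta_0$ with $\epsilon_0,\eta_0\to0$, (ii) the $C^{1,1}$-regularity of $\rho_0$ on $[E_0-\kappa_0,E_0+\kappa_0]$, used to replace $m_0(x\pm\ii0)$ by $m_0(E_0)$, and (iii) replacing $M_\lambda(x\pm\ii0)$ by $M_0(x\pm\ii\alpha\lambda^2)$ and the slowly varying factors by their values at $E_0$; all of these vanish in the iterated limit \eqref{eq:triplelimvanish}.

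\emph{Part (b).} Diagonalise $H_0=\sum_j e_j\ket{\bm u_j}\bra{\bm u_j}$ and put $p_k:=\langle\bm u_k,P\bm u_k\rangle$, $a:=\alpha\lambda^2$. Then $\Im M_0(x+\ii a)=\pi\sum_j Q_a(x-e_j)\ket{\bm u_j}\bra{\bm u_j}$ and $\langle\Im M_0(y+\ii a)\rangle_P=\pi\sum_k p_kQ_a(y-e_k)$, so $\widetilde P_{\lambda,t}$ and $\widetilde P_\lambda$ reduce to sums over $j,k$ with the composed kernels $Q_a*K_{\lambda,t}*Q_a=K_{\lambda,t}*Q_{2a}$ and $Q_a*Q_a=Q_{2a}$; consequently
\[
\mathcal R=\frac{\sum_{j,k}\langle\bm u_j,A\bm u_j\rangle\,p_k\,\Psi(e_j-e_k)}{\sum_{j,k}p_k\,Q_{2a}(e_j-e_k)}\,,\qquad
\Psi:=K_{\lambda,t}*Q_{2a}-\big(1-|g_\lambda(t)|^2\big)\,Q_{2a}\,.
\]
A direct Fourier computation (with $\widehat f(\xi)=\int f(u)\ee^{-\ii u\xi}\,\rd u$) gives $\widehat{K_{\lambda,t}}(\xi)=\big(1-\ee^{-2a(t-|\xi|)}\big)\mathds 1_{|\xi|\le t}$; in particular $\int_\R K_{\lambda,t}=1-\ee^{-2at}=1-|g_\lambda(t)|^2$, so $\widehat\Psi(0)=0$ and $\widehat\Psi(\xi)=-\ee^{-2at}(1-\ee^{-2a|\xi|})\mathds 1_{|\xi|\le t}-(1-\ee^{-2at})\ee^{-2a|\xi|}\mathds 1_{|\xi|>t}$. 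Rescaling to the unit-width profile $\widetilde\Psi$, which depends on $(\lambda,t)$ only through $\alpha T$, one has $\|\widehat{\widetilde\Psi}\|_{L^1}=2\alpha T\,\ee^{-2\alpha T}$ (hence $\|\widetilde\Psi\|_{L^\infty}\le\tfrac1\pi\alpha T\,\ee^{-2\alpha T}$) and $\|(\widehat{\widetilde\Psi})''\|_{L^1}\lesssim\alpha T\,\ee^{-2\alpha T}$ on the smooth pieces, while the slope discontinuities of $\widehat{\widetilde\Psi}$ at $0$ and $\pm2\alpha T$ combine into a $\sin^2$-type oscillation controlling the $(\,\cdot\,)^{-2}$-tail; together these give $\|\Psi\|_{L^1}=\|\widetilde\Psi\|_{L^1}\lesssim\alpha T\,\ee^{-2\alpha T}$. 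Since $\sum_j Q_{2a}(e_j-e_k)=\tfrac N\pi\Im m_0(e_k+2\ii a)+\mathcal O(N\epsilon_0)\gtrsim Nc_0$ for $e_k\in I_\Delta$, and $\sum_j|\Psi(e_j-e_k)|\lesssim N\|\Psi\|_{L^1}$ — valid because the scale $\lambda^2$ of $\Psi$ beats the eigenvalue spacing $\sim N^{-1}$ once $N\to\infty$ first, its $(\,\cdot\,)^{-2}$-tail being summable against the regular eigenvalue density — and $\sum_kp_k=1$, one gets $|\mathcal R|\lesssim\|A\|\,\|\Psi\|_{L^1}\lesssim T\,\ee^{-2\alpha T}$, which is \eqref{eq:convergencetildeP}; the residual corrections from Assumption~\ref{ass:H0} and $\Delta\to0$ account for the iterated $\limsup$.

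\emph{Main obstacle.} The delicate part is the ladder computation in (a): with $t\sim\lambda^{-2}$, neither the near-diagonal singularity of the stability factor (with its $2\ii\alpha\lambda^2$ regularisation), nor the $\lambda^2$-scale variation of $\widetilde A(\mu(\,\cdot\,))$, $\widetilde P(\mu(\,\cdot\,))$, nor the oscillatory weight may be estimated crudely, and one must identify the resulting residue/Fourier integral \emph{exactly} with \eqref{eq:kernel}. A subsidiary difficulty is justifying the deterministic replacement uniformly down to imaginary parts of order $\lambda^2$, which is exactly why the order of limits ($N\to\infty$ innermost) is built into the statement.
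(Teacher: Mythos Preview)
Your overall strategy matches the paper's: apply a two-resolvent global law to replace $\langle A\rangle_{P_\lambda(t)}$ by a deterministic double integral, split the deterministic approximation into a ``regular'' term (the $M_1AM_2P$ piece) and a ``singular/ladder'' term (the piece with the stability factor $(1-\lambda^2\llangle M_1M_2\rrangle)^{-1}$), and then replace $m_\lambda(z)$ by $m_0(E_0)$ near $E_0$ to extract $|g_\lambda(t)|^2\langle A\rangle_{P_0(t)}$ and $\langle A\rangle_{\widetilde P_{\lambda,t}}$ respectively. Your part~(b) is essentially the paper's Proposition~\ref{prop:kernel}: both compute $\widehat{K_{\lambda,t}}$, form the convolution with the Cauchy kernel, split off $(1-\ee^{-2\alpha T})Q_{2a}$, and bound the remainder (the paper's $\mathfrak R_{\lambda,t}$ is your $\Psi$ up to normalization) pointwise by $\lambda^2 t\,\ee^{-2\alpha\lambda^2 t}$ times a sum of two Cauchy kernels, then sum over eigenvalues using the regularity of $\rho_0$ near $E_0$.

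There is, however, a concrete gap in your part~(a) setup. You invoke Stone's formula with $\eta\downarrow0$ and assert that Proposition~\ref{prop:LL} ``applies uniformly in the spectral parameters arising here''. But Proposition~\ref{prop:LL} is a \emph{global} law: it requires $\min\{|\Im z_1|,|\Im z_2|\}\ge c$ for some $N$-independent $c>0$, and its constant blows up polynomially as $c\to0$. Taking $\eta\downarrow0$ after $N\to\infty$ does not produce an error of the form $\mathcal O_\prec(C(\lambda,t)/\sqrt N)$ with $C$ independent of the auxiliary $\eta$, which is what the theorem asserts; moreover, the Stone integral is over all of $\R^2$ and the pointwise $\mathcal O_\prec(1/\sqrt N)$ does not integrate against $|\ee^{\ii t(x_1-x_2)}|\equiv1$. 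The paper sidesteps both issues by representing $\langle A\rangle_{P_\lambda(t)}$ as a contour integral over \emph{compact} semicircular contours $\gamma_1,\gamma_2$ whose flat pieces sit at distance $t^{-1}$ from $\R$ (see~\eqref{eq:contourint} and Figure~\ref{fig:contours}). Since $t$ is fixed before $N\to\infty$, Proposition~\ref{prop:LL} applies with $c=t^{-1}$ and the resulting constant is genuinely $C(\lambda,t)$; the arc contributions are controlled by $\|G\|,\|M\|\lesssim1$ there. Your approach becomes equivalent to the paper's if you simply \emph{fix} $\eta=t^{-1}$ (or any $N$-independent positive value) and use the exact contour representation instead of the $\eta\to0$ limit.

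A second, softer point: your deterministic analysis in~(a) is much sketchier than what is actually needed. The paper isolates the replacement $m_\lambda(z)\to m_0(E_0)$ into two dedicated estimates (the First and Second Replacement Lemmas~\ref{lem:replace} and~\ref{lemma:replace2}), each of which decomposes the contour into the piece near $I_\Delta$, the remaining horizontal piece, and the arc, and tracks the error in each region using the MDE bounds of Lemma~\ref{lemma:M_bounds}. For the singular term, the paper then computes the residue integral $F_{\lambda,t}(x,y)$ explicitly (eqs.~\eqref{eq:Flambdat}--\eqref{eq:F_approx}) and identifies it with $K_{\lambda,t}(x-y)/(\alpha\pi)$ plus lower-order pieces. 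Your statement that the residue computation ``reproduces precisely $\langle A\rangle_{\widetilde P_{\lambda,t}}$'' is correct in outcome but hides exactly the work that produces the explicit $\cE_{\mathrm{reg}}$, $\cE_{\mathrm{sing}}$ needed for~\eqref{eq:triplelimvanish}.
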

We point out that the error $\cE$ in~\eqref{eq:main} naturally consists of two parts, a deterministic and a stochastic one. The stochastic part of order $\mathcal{O}_\prec (C(\lambda, t)/\sqrt{N})$ is obtained from a \emph{global law} for two resolvents of the random matrix $H_\lambda$ (see \eqref{eq:globallambda} below); the deterministic part $\mathcal{O}(\cE_0)$ is obtained from estimating the deterministic leading term in~\eqref{eq:terms}. 

Note that the error $\mathcal{R}$ is small compared to the first  term
 in the rhs. of~\eqref{eq:mainb} only in the regime where $T$ is large,
 in particular $\langle A \rangle_{\widetilde{P}_{\lambda,t}} $ 
 converges to $\langle A \rangle_{\widetilde{P}_{\lambda}}$ exponentially fast.
In the small $T$ regime, both terms on the right-hand side of~\eqref{eq:mainb} vanish  linearly in $T$. 
We chose the above formulation~\eqref{eq:mainb} because, in this way, it
relates directly to the relaxation formula \eqref{eq:keyeq} (see Corollary \ref{cor:relax} below).

\begin{remark} \label{rmk:main} We have two further comments on Theorem \ref{thm:main}. 
	\begin{itemize}
		\item[(i)] The triple limits in~\eqref{eq:triplelimvanish} and \eqref{eq:convergencetildeP} consist of a thermodynamic limit ($N\rightarrow\infty$), a kinetic limit or \emph{van Hove limit} ($t\rightarrow\infty$ and $\lambda\rightarrow0$ while $\lambda^2t$ is fixed), and an infinitesimal spectral localization ($\Delta\rightarrow0$). Note that the \emph{kinetic time parameter} $T=\lambda^2t$ is natural, as the time scale prescribed for relaxation in the physics literature, e.g., by explicit analysis of the Pauli master equation in~\cite[Sect.~5.2.6]{MIKUreview}, is $\cO(\lambda^{-2})$. The $\Delta\rightarrow0$ limit is needed only to ensure that the mean level spacing is approximately constant near $E_0$ on the scale~$\Delta$. If the density of states is flat, the $\Delta\rightarrow0$ limit can be omitted. We emphasize that the error terms in Theorem~\ref{thm:main} are explicit in the sense that their dependence on the scaling parameters $N$, $t$, $\lambda$, and $\Delta$ is tracked throughout the proof. The limit in~\eqref{eq:triplelimvanish} is then 
		the natural order of limits in which these errors vanish. 
		Finally, we remark that the explicitly tracked errors allow for certain combined limits, although for simplicity, we do not pursue these extensions.
		\item[(ii)] The idea behind \eqref{eq:kernelstateapprox}--\eqref{eq:convergencetildeP} is that the kernel \eqref{eq:kernel} is an approximate delta function with $T=\lambda^2t$-dependent magnitude. More precisely, its Fourier transform\footnote{\label{ftn:FT}We use the convention that the Fourier transform of $f \in L^1(\R)$ is defined as
			$\widehat{f}(p):= (2 \pi)^{-1/2} \int_{\R} f(x) \ee^{-\ii px} \rd x$. }
		\begin{equation*}
			\widehat{K}_{\lambda, t}(p) = \frac{1}{\sqrt{2 \pi}}  \, \big(1 - \ee^{-2 \alpha \lambda^2(t - |p|)}\big) \, \mathds{1}(|p|\le t)
		\end{equation*}
		converges uniformly in compact intervals to a constant. That is, for every fixed $T \in (0,\infty)$ and for every compact set $\Omega \subset \R$, we have
		\begin{equation*}
			\lim_{\substack{t\rightarrow\infty\\\lambda\rightarrow0\\ \lambda^2 t=T}} \sup_{p \in \Omega}\left| \widehat{K}_{\lambda, t}(p)  - \frac{1}{\sqrt{2 \pi} } \big(1 - \ee^{-2\alpha T}\big)\right| = 0\,. 
		\end{equation*}
	However, since $x \mapsto \Im M_0(x + \ii \alpha \lambda^2)$ is only regular on scale $\lambda^2$, the approximation $$K_{\lambda, t}(x-y) \approx \big(1 - \ee^{-2\alpha \lambda^2 t}\big) \delta(x-y),$$ 
	used in heuristically obtaining \eqref{eq:mainb} from \eqref{eq:kernelstate} and \eqref{eq:kernelstateapprox}, is \emph{not} generally valid unless~$T$ is very small or very large, as \eqref{eq:convergencetildeP} indicates.
	\end{itemize}
\end{remark}

In the remaining part of the current Section \ref{subsec:relaxation}, we connect Theorem \ref{thm:main} to the relaxation formula \eqref{eq:keyeq}.  As a preparation, we formulate the following \emph{local overlap regularity property}, required in Corollary \ref{cor:relax}~(c) below. For this purpose, let  $\mu_j$ and $\bm u_j$ denote the eigenvalues and corresponding normalized eigenvectors of
\begin{equation} \label{eq:specdecH0mainres}
H_0 = \sum_j \mu_j \ket{\bm u_j} \bra{\bm u_j}\,.
\end{equation}

\begin{definition}[Local overlap regularity (LOR)] \label{def:LOR}
Let the Hamiltonian $H_0$ be as in Assumption~\ref{ass:H0}. We say that an observable $A$ satisfying Assumption \ref{ass:stateandobs}~(ii) has the \emph{local overlap regularity (LOR) property} if and only if the eigenvector overlaps $\langle \bm u_j, A \bm u_j \rangle$ are approximately constant in the following sense: There exists a constant $\mathfrak{A} \in \R $
such that\footnote{\label{ftn:LOR}In fact, it is sufficient to assume that the overlaps are regular in $j$ in the sense that there exists a uniformly equicontinuous sequence $(\mathfrak{A}_N)_{N \in \N}$ of functions $\mathfrak{A}_N: I_{2 \Delta} \to \R$ such that 
\begin{equation*} 
	\langle \bm u_j, A \bm u_j \rangle = \mathfrak{A}_N(\mu_j) + \mathcal{O}\big(\cE_{{\rm LOR}}\big) \quad \text{for all} \quad j \in \N \quad \text{with} \quad \mu_j \in I_{2\Delta}\,. 
\end{equation*}
Alternatively, we could also assume that, for every fixed $\Delta > 0$ small enough, the overlaps $	\langle \bm u_j, A \bm u_j \rangle$ are well approximated by $\mathfrak{A}_N(\mu_j) $ in $\ell^p$-sense for some $p\geq1$, i.e.,
\begin{displaymath}
\frac1N\sum_{j:\mu_j\in I_{2\Delta}} \left| \langle \bm u_j, A \bm u_j \rangle-\mathfrak{A}_N(\mu_j)\right|^p \overset{N\rightarrow\infty}{\longrightarrow}0 \,. 
\end{displaymath}
The case $p=2$ is reminiscent of the so called \textit{weak ETH} studied in~\cite{BKL2010, DRecho}.}
\begin{equation} \label{eq:LORP}
\langle \bm u_j, A \bm u_j \rangle = \mathfrak{A} + \mathcal{O}\big(\cE_{{\rm LOR}}\big) \quad \text{for all} \quad j \in \N \quad \text{with} \quad \mu_j \in I_{2\Delta}\,, 
\end{equation}
where the error $\cE_{\rm LOR} = \cE_{{\rm LOR}}(\Delta, N)$ satisfies
\begin{equation*}
	\lim_{\Delta\rightarrow0}\lim_{N\rightarrow\infty} \cE_{{\rm LOR}} = 0\,. 
\end{equation*}
\end{definition}
The LOR property \eqref{eq:LORP} is satisfied, e.g., if $H_0$ satisfies the Eigenstate Thermalization Hypothesis \cite{Deutsch1991, Srednicki} (see also the discussion in \cite{DRrelax1}). For general systems, the ETH remains an unproven hypothesis. We remark, however, that it has been rigorously proven for a large class of mean-field random matrices $H_0$ (see \cite{ETHpaper, equipart}), including Wigner matrices and their deformations. 

The following corollary collects our  rigorous results on the relaxation formula \eqref{eq:keyeq}. The first two parts
(items (a) and (b) below) immediately follow  from Theorem \ref{thm:main}~(a) and (b). The third part, item (c), involves the LOR property of the observable $A$ and requires a separate argument, provided in Section~\ref{subsec:relaxproof}.

\begin{corollary}[Relaxation formula] \label{cor:relax}
	Under the assumptions and using the notations of Theorem~\ref{thm:main}, it holds that
\begin{equation} \label{eq:relaxform} 
	\langle A \rangle_{P_\lambda(t)} =  \langle A \rangle_{\widetilde{P}_{\lambda}}   + |  g_\lambda(t)|^2  \left[ \langle A\rangle_{P_0(t)}- \langle A \rangle_{\widetilde{P}_{\lambda}}\right] + \mathcal{R} + \mathcal{E}\,. 
\end{equation}
In particular, we have the following: 
\begin{itemize}
\item[(a)] {\emph{[Short kinetic time behavior]}} Let $0<T\lesssim1$. Then, it holds that 
\begin{equation} \label{eq:short}
\limsup_{\Delta\rightarrow0} \limsup_{\substack{t\rightarrow\infty\\\lambda\rightarrow0\\ \lambda^2 t=T}} \limsup_{N\rightarrow\infty} \left| 	\langle A \rangle_{P_\lambda(t)}  - \langle A\rangle_{P_0(t)}\right| \lesssim T \qquad \text{\emph{almost surely (a.s.)}}
\end{equation}

\item[(b)]{\emph{[Long kinetic time behavior]}} Let $T\gtrsim1$. Then it holds that (recall $\alpha = \pi \rho_0(E_0)$)
\begin{equation} \label{eq:long}
	\limsup_{\Delta\rightarrow0} \limsup_{\substack{t\rightarrow\infty\\\lambda\rightarrow0\\ \lambda^2 t=T}} \limsup_{N\rightarrow\infty} \left| 	\langle A \rangle_{P_\lambda(t)}  - \langle A \rangle_{\widetilde{P}_{\lambda}}\right| \lesssim T \ee^{-2 \alpha T} \qquad \text{\emph{a.s.}}
\end{equation}
\end{itemize}
Moreover, additionally assuming that $A$ has the LOR property from Definition \ref{def:LOR}, we have: 
\begin{itemize}
\item[(c)] \emph{[Intermediate kinetic times under LOR]} For every fixed $T \in (0,\infty)$ it holds that 
\begin{equation} \label{eq:LORRsmall}
	\lim_{\Delta\rightarrow0}\lim_{\substack{t\rightarrow\infty\\\lambda\rightarrow0\\ \lambda^2t=T}}\lim_{N\rightarrow\infty}\big[ |\mathcal{R}| + |\cE| \big]= 0  \qquad  \text{\emph{a.s.}}\,,
\end{equation}
i.e.~the relaxation formula \eqref{eq:keyeq} is valid at \emph{all} kinetic times $T \in (0,\infty)$. 
 \end{itemize}
\end{corollary}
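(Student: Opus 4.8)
The plan is to show that once $A$ has the LOR property, the obstruction identified in Remark \ref{rmk:main}(ii) disappears: the kernel $K_{\lambda,t}(x-y)$ may be replaced by $(1-\ee^{-2\alpha\lambda^2 t})\delta(x-y)$ up to a vanishing error, because the observable-dependent part of $\langle A\rangle_{\widetilde P_{\lambda,t}}$ now only sees a quantity that is genuinely regular on scales much larger than $\lambda^2$. Concretely, I would start from the exact formula \eqref{eq:kernelstate} and insert the spectral decomposition \eqref{eq:specdecH0mainres} of $H_0$, writing
\[
\langle\Im M_0(y+\ii\alpha\lambda^2)A\rangle=\sum_{j}\frac{\alpha\lambda^2}{(\mu_j-y)^2+(\alpha\lambda^2)^2}\,\langle\bm u_j,A\bm u_j\rangle,
\]
and similarly (with $A$ replaced by $P$) for $\langle\Im M_0 A\rangle$-type factors appearing in $\widetilde P_{\lambda,t}$. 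Using \eqref{eq:LORP}, i.e.\ $\langle\bm u_j,A\bm u_j\rangle=\mathfrak A+\mathcal O(\cE_{\rm LOR})$ for $\mu_j\in I_{2\Delta}$, together with the localization $P=\Pi_\Delta P\Pi_\Delta$ (which restricts all relevant $\mu_j$ to $I_\Delta\subset I_{2\Delta}$ after controlling the tails of the Poisson kernels, whose width $\alpha\lambda^2$ is $\ll\Delta$ in the relevant limit), the numerator of $\langle A\rangle_{\widetilde P_{\lambda,t}}$ factorizes: the $A$-dependence collapses to the constant $\mathfrak A$ times a scalar double integral of $\Im m_0$-type against $K_{\lambda,t}$, plus an error $\mathcal O(\cE_{\rm LOR})$ times an $L^1$-bounded kernel quantity. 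The same collapse applied to \eqref{eq:kernelstateapprox} gives $\langle A\rangle_{\widetilde P_\lambda}=\mathfrak A+\mathcal O(\cE_{\rm LOR})$.

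Next I would estimate the remaining scalar object. After the collapse, the ``intermediate-time defect'' $\langle A\rangle_{\widetilde P_{\lambda,t}}-(1-|g_\lambda(t)|^2)\langle A\rangle_{\widetilde P_\lambda}$ becomes $\mathfrak A$ times
\[
\frac{\int\!\!\int \langle\Im M_0(x+\ii\alpha\lambda^2)\rangle\,\big[K_{\lambda,t}(x-y)-(1-\ee^{-2\alpha\lambda^2 t})\delta(x-y)\big]\,\langle\Im M_0(y+\ii\alpha\lambda^2)\rangle_P\,\rd x\,\rd y}{\int\Tr[\Im M_0(x+\ii\alpha\lambda^2)]\,\langle\Im M_0(x+\ii\alpha\lambda^2)\rangle_P\,\rd x}
\]
plus $\mathcal O(\cE_{\rm LOR})$. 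Here I replace the normalized traces $\langle\Im M_0\rangle$ by $\pi\rho_0$ (using \eqref{eq:rho0}, valid since $\alpha\lambda^2\ge\eta_0$ in the limit, with error $\mathcal O(\epsilon_0)$) and note that $\langle\Im M_0(y+\ii\alpha\lambda^2)\rangle_P$ is a probability-type density in $y$ concentrated in $I_\Delta$. On $I_{2\Delta}$ the function $x\mapsto\pi\rho_0(x)$ is $C^{1,1}$ by \eqref{eq:admiss_spec}, hence for the purpose of integrating against the (approximate) identity $K_{\lambda,t}$ it \emph{is} regular on the relevant scale once we first extract the overall $(1-\ee^{-2\alpha\lambda^2 t})$ factor and the leftover kernel $K_{\lambda,t}-(1-\ee^{-2\alpha\lambda^2 t})\delta$ has Fourier transform $\frac{1}{\sqrt{2\pi}}\big[-\ee^{-2\alpha\lambda^2(t-|p|)}\mathds 1(|p|\le t)+\ee^{-2\alpha\lambda^2 t}\big]$, which by the computation in Remark \ref{rmk:main}(ii) converges to $0$ uniformly on compacts in $p$ in the kinetic limit. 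A stationary-phase / Fourier-side estimate, combined with the uniform $C^{1,1}$ control of $\rho_0$ on $I_{2\Delta}$ and the $\lambda^2$-scale regularity it provides, then shows this scalar defect vanishes as $N\to\infty$, $\lambda^2 t=T$ fixed, for \emph{every} fixed $T$ — not just small or large $T$. This upgrades \eqref{eq:convergencetildeP} to $\lim|\mathcal R|=0$ under LOR. Since $\cE$ already satisfies \eqref{eq:triplelimvanish} for fixed $T$ (and the stochastic part is $\mathcal O_\prec(C(\lambda,t)/\sqrt N)\to0$ a.s.\ along the limit by Borel--Cantelli), combining gives \eqref{eq:LORRsmall}, and plugging back into \eqref{eq:main}--\eqref{eq:mainb} yields the relaxation formula \eqref{eq:keyeq} at all kinetic times.

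The main obstacle I anticipate is the rigorous control of the scalar defect at \emph{intermediate} $T\sim1$: here neither $K_{\lambda,t}$ is close to a multiple of $\delta$ in any naive pointwise sense, nor is the prefactor $(1-\ee^{-2\alpha\lambda^2 t})$ negligible, so one genuinely must exploit that after the LOR collapse the surviving factor is $\rho_0$ (smooth on $I_{2\Delta}$) rather than the $\lambda^2$-scale object $\Im M_0$. The delicate point is matching the scales: the width $\alpha\lambda^2$ of the Poisson smoothing, the kinetic scale $t\sim\lambda^{-2}$, and the localization width $\Delta$ must be juggled so that (i) boundary/tail contributions from outside $I_{2\Delta}$ where $\rho_0$ may be irregular are suppressed (they are, with room to spare, since $\alpha\lambda^2\ll\Delta$ in the iterated limit), and (ii) the Fourier-side convergence in Remark \ref{rmk:main}(ii), which holds \emph{uniformly on compact $p$-sets}, is enough — this requires checking that the Fourier transform of $\langle\Im M_0\rangle\,\langle\Im M_0\rangle_P$ has negligible mass at large $|p|$, which follows from the $C^{1,1}$-regularity of $\rho_0$ (giving $p^{-2}$-type decay) after subtracting the constant limit. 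Carrying the explicit $\Delta,N,\lambda,t$ dependence through these estimates, as promised in Remark \ref{rmk:main}(i), is the bookkeeping-heavy but conceptually routine part.
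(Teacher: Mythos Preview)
Your overall strategy is correct and matches the paper's: use LOR to collapse the $A$-dependence of $\mathcal{R}$ to the constant $\mathfrak{A}$, then exploit that the remaining scalar object involves only the \emph{smooth} function $\rho_0$ rather than the $\lambda^2$-scale object $\Im M_0$. The paper, however, executes this more directly. Rather than staying at the level of the kernel $K_{\lambda,t}$ and arguing via Fourier convergence on compacts, it starts from the already-computed explicit expression (Remark~\ref{rmk:relax}(ii) / proof of Proposition~\ref{prop:kernel})
\[
\mathcal{R}=\frac{1}{r}\sum_{j,k}\langle\bm u_j,A\bm u_j\rangle\langle\bm u_k,P\bm u_k\rangle\,\mathfrak{R}_{\lambda,t}(\mu_j-\mu_k),
\]
substitutes $\langle\bm u_j,A\bm u_j\rangle=\mathfrak{A}+\mathcal{O}(\cE_{\rm LOR})$ for $\mu_j\in I_{2\Delta}$, converts the resulting sum $N^{-1}\sum_{\mu_j\in I_{2\Delta}}\mathfrak{R}_{\lambda,t}(\mu_j-\mu_k)$ into $\int_{I_{2\Delta}}\rho_0(u)\mathfrak{R}_{\lambda,t}(u-\mu_k)\,\rd u$ by a short contour argument, and then uses the \emph{exact} identity $\int_\R\mathfrak{R}_{\lambda,t}(u)\,\rd u=0$ together with $\rho_0(u)=\rho_0(E_0)+\mathcal{O}(\Delta)$ on $I_{2\Delta}$. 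This zero-integral identity is the real-space incarnation of the cancellation you are trying to capture on the Fourier side, and using it directly bypasses the delicate non-compact-$p$ tail analysis you flag as the main obstacle. In particular, the paper needs only that $\rho_0$ is \emph{Lipschitz} on $I_{2\Delta}$ (to write $\rho_0\approx$ const.), not the full $C^{1,1}$ $p^{-2}$-decay you invoke.

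One point where your write-up is imprecise: you speak of ``the Fourier transform of $\llangle\Im M_0\rrangle\,\langle\Im M_0\rangle_P$'' having good decay, but the scalar defect is a bilinear form $\int\!\!\int f(x)L(x-y)g(y)\,\rd x\,\rd y$, so in Fourier space it is $\int\hat f(p)\hat L(p)\overline{\hat g(p)}\,\rd p$ with the factors separate. Since $g(y)=\langle\Im M_0(y+\ii\alpha\lambda^2)\rangle_P$ is a sum of Poisson kernels of width $\alpha\lambda^2$, its Fourier transform decays only like $\ee^{-\alpha\lambda^2|p|}$; the $|p|^{-2}$ decay comes solely from the $\rho_0$-factor $\hat f$. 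Your estimate still closes (the combined decay $|p|^{-2}\ee^{-\alpha\lambda^2|p|}$ against the leftover $\hat L$ gives an $\mathcal{O}(\lambda^2|\log\lambda|)$ error), but the justification you gave needs this correction.
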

Summarizing Corollary \ref{cor:relax}, we have that the relaxation formula \eqref{eq:keyeq} generally holds in the two limiting regimes (a) $|g_\lambda(t)|^2 \approx 0$ and (b) $|g_\lambda(t)|^2 \approx 1$, i.e.~$T \ll 1$ or $T \gg 1$, respectively. In between, \eqref{eq:keyeq} is valid under the additional assumption that $A$ has the LOR property from Definition \ref{def:LOR}, as this allows for the improved bound \eqref{eq:LORRsmall} on $\mathcal{R}$ compared to \eqref{eq:convergencetildeP}. However, \emph{without} this regularity assumption, only the bound \eqref{eq:convergencetildeP} (i.e.~\eqref{eq:short} and \eqref{eq:long}) can hold, which indicates that the relaxation formula \eqref{eq:keyeq} is \emph{not} generally valid for intermediate kinetic times $T \sim 1$. Indeed, it is easy to construct a counterexample. Finally, we remark that Corollary \ref{cor:relax}~(c) holds verbatim if the state $P$ satisfies the LOR condition instead of the observable $A$. This simply follows by inspecting the proof in Section~\ref{subsec:relaxproof}. 

\begin{remark} \label{rmk:relax}We have two further comments on Corollary \ref{cor:relax}. 
\begin{itemize}
\item[(i)] The relaxation formula \eqref{eq:relaxform} is in perfect agreement with the main result of Dabelow and Reimann, see \cite[Eq.~(16)]{DRrelax1}. In fact, the state $\widetilde{P}_\lambda$ defined in \eqref{eq:kernelstateapprox} agrees with $\widetilde{\rho}$ from \cite[Eq.~(16)]{DRrelax1}, named the "'washed out' descendant of the so-called diagonal ensemble" \cite{DRrelax1}. 
\item[(ii)] 
In fact, recalling \eqref{eq:specdecH0mainres}, the proof of Theorem \ref{thm:main}~(b) in Section \ref{subsec:step3} reveals that (see \eqref{eq:Rdef}) the error $\mathcal{R}$ in \eqref{eq:mainb} and \eqref{eq:relaxform} is given by
\begin{equation} \label{eq:Rexplicit}
	\mathcal{R} = \frac{1}{r} \sum_{j,k} \langle \bm u_j, A \bm u_j \rangle \langle \bm u_k, P \bm u_k \rangle \mathfrak{R}_{\lambda, t}(\mu_j - \mu_k)\,,
\end{equation}
where we denoted 
\begin{equation} \label{eq:r_def}
	r := \int_\R \Tr [\Im M_0(x + \ii \alpha \lambda^2)] \,  \langle \Im M_0(x + \ii \alpha \lambda^2)  \rangle_P \,  \rd x
\end{equation} and 
\begin{equation} \label{eq:tildephi}
	\mathfrak{R}_{\lambda, t}(u) := \pi\ee^{-2\alpha\lambda^2 t}\frac{2\alpha\lambda^2}{u^2 + (2\alpha\lambda^2)^2}\biggl(1-\cos(tu)-2\alpha\lambda^2 t \frac{\sin(tu)}{tu}\biggr).
\end{equation}
The explicit error term \eqref{eq:Rexplicit}--\eqref{eq:tildephi} is in precise agreement with the error term in \cite{DRrelax1}, see Eqs.~(17) and (18). In particular, assuming the LOR property \eqref{eq:LORP} for $A$ (or $P$), the smallness of $\mathcal{R}$ in \eqref{eq:LORRsmall} for all kinetic times $T \in (0,\infty)$ is a consequence of the fact that $\int_\R 	\mathfrak{R}_{\lambda, t}(u) \rd u = 0$.
\end{itemize}
\end{remark}

\subsection{Prethermalization} \label{subsec:pretherm}
In this section, we specialize the general relaxation theory of perturbed quantum dynamics from Theorem \ref{thm:main} to a class of unperturbed Hamiltonians $H_0$ and states $P$ which \emph{has the prethermalization property} in the following sense. 
\begin{definition}[Prethermalization property]\label{def:preTcond}
Let the Hamiltonian $H_0$ and the state $P$ be defined as in Assumptions~\ref{ass:H0} and~\ref{ass:stateandobs}. We say that $(H_0,P)$ has the \emph{prethermalization property} if and only if there exists a state $P_{\rm pre}$ (called the \emph{prethermal state}) such that we have the following: 
\begin{subequations}\label{eq:prethermprop}
	\begin{itemize}
\item[(a)] The unperturbed time evolution $P_0(t)$ converges to $P_{\rm pre}$, i.e.,~for all\footnote{A common variant of this requirement in the physics literature is to assume the validity of \eqref{eq:prethermpropa} only for \emph{local} observables, i.e.~supported in a finite region of an underlying space (see, e.g., \cite[Section~5.2]{MIKUreview}).} observables $A$ satisfying Assumption \ref{ass:stateandobs}~(ii), it holds that 
\begin{equation} \label{eq:prethermpropa}
	\lim_{t\rightarrow\infty} \lim\limits_{N \to \infty}\left[\langle A\rangle_{P_0(t)}- \langle A \rangle_{P_{\mathrm{pre}}}\right] = 0 \,. 
\end{equation} 
\item[(b)] There exists an observable $A_0$ (satisfying Assumption \ref{ass:stateandobs}~(ii)) which distinguishes $P_{\rm pre}$ from $\widetilde{P}_\lambda$ (cf.~\eqref{eq:kernelstateapprox}), i.e.~there exists a constant $\mathfrak{c}_{\rm pre}>0$ such that 
\begin{equation} \label{eq:prethermpropb}
\liminf_{ \lambda \to 0} \liminf_{N \to \infty} \left| \langle A_0 \rangle_{P_{\mathrm{pre}}} - \langle A_0 \rangle_{\widetilde{P}_\lambda} \right| \ge \mathfrak{c}_{\rm pre}\,. 
\end{equation} 
	\end{itemize}
\end{subequations}

\end{definition}
We emphasize that $(H_0, P)$ having the \emph{prethermalization property} is a purely deterministic condition, i.e., in particular, it does not depend on the Wigner matrix $W$. In the physics literature (see, e.g., \cite{MRdR2019} but also \cite{MIKUreview, LGS2016, KWE2011, BalzReimann}), the prethermalization property is generally expected to be satisfied if $H_0$ is an \emph{integrable} Hamiltonian having at least one additional conserved quantity $Q$ for which $[H_0, Q] = 0$.\footnote{We use the usual notation for the commutator, i.e.~$[A,B] := AB - BA$ for all matrices $A,B$.} This symmetry is then broken by a generic perturbation $W$, i.e.~$[W,Q] \neq 0$. 
 In the presence of $M$ conserved quantities $(Q_k)_{k=1}^M$, a good candidate for the prethermal state $P_{\rm pre}$ is given by the so called \emph{generalized Gibbs ensemble (GGE)}
$$
P_{\rm GGE} = \frac{\ee^{- \sum_{k=1}^M \lambda_k Q_k}}{\Tr \, \ee^{- \sum_{k=1}^M \lambda_k Q_k}}\,,
$$
where the parameters $\lambda_k$ are chosen in such a way that $\Tr Q_k P_{\rm GGE} = \Tr Q_k P$ for all $k\in [M]$ (see, e.g., \cite{KWE2011} and \cite[Section~5.1]{MIKUreview}). 
 \nc
Exemplary pairs $(H_0, P)$ and observables $A_0$ satisfying the conditions in Definition \ref{def:preTcond} are given in Section \ref{subsec:example}.

 {Assuming that $(H_0, P)$ has the prethermalization property, Theorem~\ref{thm:main} reads as follows.}

\begin{corollary}[Prethermalization]\label{cor:PreT1}
Under the assumptions of Theorem~\ref{thm:main}, let further $(H_0,P)$ have the prethermalization property from Definition \ref{def:preTcond}. Then, recalling the notations from Theorem \ref{thm:main}, it holds that
\begin{equation} \label{eq:preTregimes}
	\langle A \rangle_{P_\lambda(t)} =  \langle A \rangle_{\widetilde{P}_{\lambda}}   + |  g_\lambda(t)|^2  \left[ \langle A\rangle_{P_{\rm pre}}- \langle A \rangle_{\widetilde{P}_{\lambda}}\right] + \mathcal{R} + \mathcal{E}'\,. 
\end{equation}
We have $\cE' = \mathcal{O}(\cE_0') + \mathcal{O}_\prec (C(\lambda, t)/\sqrt{N})$ for some constant $C(\lambda, t) > 0$ and for every fixed $T \in (0, \infty)$, the deterministic errors $\cE_0'=\cE_0'(\lambda, t, \Delta, N)$ and $\mathcal{R} = \mathcal{R}(\lambda, t, \Delta, N)$ satisfy
\begin{equation*} 
	\lim_{\Delta\rightarrow0}\lim_{\substack{t\rightarrow\infty\\\lambda\rightarrow0\\ \lambda^2t=T}}\lim_{N\rightarrow\infty} \cE_0'=0 \qquad \text{and} \qquad 		\limsup_{\Delta\rightarrow0} \limsup_{\substack{t\rightarrow\infty\\\lambda\rightarrow0\\ \lambda^2 t=T}} \limsup_{N\rightarrow\infty} |\mathcal{R}| \lesssim T\ee^{-2\alpha T}\,.
\end{equation*}

\end{corollary}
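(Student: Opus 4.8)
\textbf{Proof proposal for Corollary \ref{cor:PreT1}.}

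The plan is to combine Corollary \ref{cor:relax} with the prethermalization property from Definition \ref{def:preTcond}, essentially by replacing $\langle A\rangle_{P_0(t)}$ in the relaxation formula \eqref{eq:relaxform} by its large-time limit $\langle A\rangle_{P_{\rm pre}}$. Concretely, starting from \eqref{eq:relaxform} we write
\[
\langle A\rangle_{P_\lambda(t)} = \langle A\rangle_{\widetilde P_\lambda} + |g_\lambda(t)|^2\big[\langle A\rangle_{P_{\rm pre}} - \langle A\rangle_{\widetilde P_\lambda}\big] + |g_\lambda(t)|^2\big[\langle A\rangle_{P_0(t)} - \langle A\rangle_{P_{\rm pre}}\big] + \mathcal R + \mathcal E,
\]
so that $\mathcal E' := |g_\lambda(t)|^2\big[\langle A\rangle_{P_0(t)} - \langle A\rangle_{P_{\rm pre}}\big] + \mathcal E$. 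The bound on $\mathcal R$ is inherited verbatim from Theorem \ref{thm:main}~(b) / Corollary \ref{cor:relax}, so nothing new is needed there. The stochastic part of $\mathcal E$ already has the form $\mathcal O_\prec(C(\lambda,t)/\sqrt N)$, which we keep. It remains to absorb the extra term $|g_\lambda(t)|^2\big[\langle A\rangle_{P_0(t)} - \langle A\rangle_{P_{\rm pre}}\big]$ together with the deterministic part $\mathcal O(\cE_0)$ of $\mathcal E$ into a new deterministic error $\mathcal O(\cE_0')$ satisfying the triple-limit vanishing.

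The one subtlety is the \emph{order of limits}. In \eqref{eq:prethermpropa} the convergence $\langle A\rangle_{P_0(t)} \to \langle A\rangle_{P_{\rm pre}}$ holds as first $N\to\infty$ and then $t\to\infty$, whereas in the kinetic (van Hove) limit $t\to\infty$ and $\lambda\to0$ are coupled through $\lambda^2 t = T$ fixed, with $N\to\infty$ taken first. Since $N\to\infty$ is the innermost limit in both \eqref{eq:triplelimvanish} and \eqref{eq:prethermpropa}, and $|g_\lambda(t)|^2 = \ee^{-2\alpha\lambda^2 t} = \ee^{-2\alpha T}$ is bounded by $1$ and in fact $N$-independent, we may set $\cE_0' := \cE_0 + |g_\lambda(t)|^2\,\big|\langle A\rangle_{P_0(t)} - \langle A\rangle_{P_{\rm pre}}\big|$. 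Then
\[
\lim_{\Delta\to0}\lim_{\substack{t\to\infty\\ \lambda\to0\\ \lambda^2 t=T}}\lim_{N\to\infty}\cE_0' = \lim_{\Delta\to0}\lim_{\substack{t\to\infty\\ \lambda\to0\\ \lambda^2 t=T}}\Big(\lim_{N\to\infty}\cE_0 + \ee^{-2\alpha T}\lim_{N\to\infty}\big|\langle A\rangle_{P_0(t)} - \langle A\rangle_{P_{\rm pre}}\big|\Big),
\]
and the first term vanishes by \eqref{eq:triplelimvanish} while the second vanishes because, after $N\to\infty$, the condition $\lambda^2 t = T$ with $\lambda\to0$ forces $t\to\infty$, so \eqref{eq:prethermpropa} applies. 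This also explains why part (b) of the prethermalization property (the non-degeneracy \eqref{eq:prethermpropb}) is \emph{not} used in the proof of \eqref{eq:preTregimes} itself: it only serves to guarantee that the prethermal plateau is genuinely distinct from the thermal one (so that the two-step picture of Figure \ref{fig-2plateau} is non-trivial for the observable $A_0$), and hence is invoked only in the accompanying discussion, not in the identity.

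The main (and only real) obstacle is the bookkeeping of the limits just described: one must check that the coupling $\lambda^2 t = T$ together with $\lambda \to 0$ is compatible with the $t\to\infty$ requirement of \eqref{eq:prethermpropa}, and that the $N\to\infty$ limit can legitimately be taken innermost in the combined error $\cE_0'$ — which is immediate since both constituent errors already have $N\to\infty$ as their innermost limit and $|g_\lambda(t)|^2$ carries no $N$-dependence. No new probabilistic input (no further resolvent law) is required beyond what already entered Theorem \ref{thm:main}; the statement is a clean corollary obtained by substituting the deterministic large-time asymptotics of the unperturbed dynamics into \eqref{eq:relaxform}.
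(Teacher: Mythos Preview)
Your proof is correct and follows exactly the approach the paper itself indicates (the paper omits the proof as ``immediate from Definition~\ref{def:preTcond} and Theorem~\ref{thm:main}'' and remarks afterwards that $\cE_0' = \cE_0 + \cE_{\rm pre}$ with $\cE_{\rm pre}$ the absolute value of the error in \eqref{eq:prethermpropa}). Your additional care with the order of limits and your observation that part~(b) of Definition~\ref{def:preTcond} plays no role in the identity \eqref{eq:preTregimes} itself are both accurate and match the paper's discussion.
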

We remark that the error term $\cE_0'$ contributing in \eqref{eq:preTregimes} consists of two parts, $\cE_0' = \cE_0 + \cE_{\rm pre}$, with $\cE_0 = \cE_0(\lambda, t, \Delta, N)$ from Theorem \ref{thm:main} and $\cE_{\rm pre} = \cE_{\rm pre}(t,N)$ being the (absolute value of the) error in \eqref{eq:prethermpropa}. 
Note that~\eqref{eq:preTregimes} in particular implies the following small and large $T$ behaviors:
\begin{equation} \label{eq:pretherm}
	\begin{split}
	\limsup_{\Delta\rightarrow0} \limsup_{\substack{t\rightarrow\infty\\\lambda\rightarrow0\\ \lambda^2 t=T}} \limsup_{N\rightarrow\infty}\left| \langle A\rangle_{P_\lambda(t)} - \langle A\rangle_{P_{\rm pre}}\right| &\lesssim T \quad \text{for} \quad T \lesssim1 \quad \text{{a.s.}}\,, \\[2mm]
\text{and} \quad \	\limsup_{\Delta\rightarrow0} \limsup_{\substack{t\rightarrow\infty\\\lambda\rightarrow0\\ \lambda^2 t=T}} \limsup_{N\rightarrow\infty}\left| \langle A\rangle_{P_\lambda(t)}-\langle A\rangle_{\widetilde{P}_{\lambda}} \right|&\lesssim T\ee^{-2\alpha T} \quad \text{for} \quad T \gtrsim 1 \quad \text{{a.s.}}
	\end{split}
\end{equation}
Moreover, \eqref{eq:prethermpropb} ensures that $\langle A \rangle_{P_{\mathrm{pre}}} \neq \langle A \rangle_{\widetilde{P}_\lambda}$ for at least one observable $A= A_0$, which, together with \eqref{eq:pretherm} establishes Figure \ref{fig-2plateau} as a schematic graph of a prethermalization process.

\subsection{Connection to the microcanonical ensemble} \label{subsec:mcensemble}
Under an additional regularity assumption on $x \mapsto \llangle \Im M_0(x + \ii \alpha \lambda^2 ) A \rrangle$ we can relate the state $\widetilde{P}_\lambda$ from \eqref{eq:kernelstateapprox} to the microcanonical ensemble.

\begin{theorem}[Microcanonical average]\label{thm:PreT2}
	Under the assumptions of Theorem~\ref{thm:main},
	let us further assume that 
	\begin{equation*}
		h \equiv h(\lambda, N) : x \mapsto \llangle \Im M_0(x + \ii \alpha \lambda^2 ) A \rrangle
	\end{equation*}
	is a Lipschitz continuous map on $I_\Delta$ with Lipschitz constant $\mathrm{Lip}_{I_\Delta} (h)$ bounded in the sense that
	\begin{equation} \label{eq:hC1 norm}
		\limsup_{\Delta \to 0} \, \limsup_{\lambda \to 0} \, \limsup_{N \to \infty}\mathrm{Lip}_{I_\Delta} (h) \lesssim 1\,. 
	\end{equation}
	Then
	\begin{equation}\label{eq:mc}
		\langle A\rangle_{\widetilde{P}_\lambda}= \langle A \rangle_{P^{\rm (mc)}_{\lambda}}+\cE_{\rm mc} \qquad \text{with} \qquad P^{\rm (mc)}_\lambda := \frac{\Im M_0(E_0+ \ii \alpha \lambda^2)}{\Tr[\Im M_0(E_0 + \ii \alpha \lambda^2)]} \,, 
	\end{equation}
	where the error $\cE_{\rm mc}=\cE_{\rm mc}(\lambda, \Delta, N)$ satisfies 
	\begin{displaymath}
		\lim_{\Delta\rightarrow0}\lim_{\substack{\lambda\rightarrow0}}\lim_{N\rightarrow\infty}\cE_{\rm mc}=0.
	\end{displaymath}
\end{theorem}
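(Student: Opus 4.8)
\emph{Proof plan.} Abbreviate $\eta:=\alpha\lambda^2$ and set $\phi(x):=\langle\Im M_0(x+\ii\eta)\rangle_P$, $g(x):=\llangle\Im M_0(x+\ii\eta)\rrangle$ and $h(x):=\llangle\Im M_0(x+\ii\eta)A\rrangle$. From the definition \eqref{eq:kernelstateapprox} of $\widetilde{P}_\lambda$ and that of $P^{\rm(mc)}_\lambda$ in \eqref{eq:mc}, evaluating against $A$ and cancelling the factors $N$ coming from the normalized traces, one gets
\[
\langle A\rangle_{\widetilde{P}_\lambda}=\frac{\int_\R h(x)\phi(x)\,\rd x}{\int_\R g(x)\phi(x)\,\rd x}\,,\qquad \langle A\rangle_{P^{\rm(mc)}_\lambda}=\frac{h(E_0)}{g(E_0)}\,,
\]
and therefore
\[
\cE_{\rm mc}=\frac{\int_\R\bigl[h(x)\,g(E_0)-h(E_0)\,g(x)\bigr]\phi(x)\,\rd x}{g(E_0)\int_\R g(x)\phi(x)\,\rd x}\,.
\]
The plan is to bound the denominator below by an order-one constant and the numerator by a quantity vanishing in the prescribed order $N\to\infty$, $\lambda\to0$, $\Delta\to0$, by splitting the $x$-integral according to $|x-E_0|$.

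As elementary inputs I collect: since $P=\Pi_\Delta P\Pi_\Delta$, one has $\phi\ge0$ and $\phi(x)=\sum_{j:\,\mu_j\in I_\Delta}\langle\bm u_j,P\bm u_j\rangle\,\eta\,[(\mu_j-x)^2+\eta^2]^{-1}$, a nonnegative superposition of Cauchy kernels centred in $\overline{I_\Delta}$; hence $\int_\R\phi=\pi\Tr P=\pi$, $\int_{\R\setminus I_R}\phi\le2\eta/(R-\Delta)$ for $R>\Delta$, and $\int_{I_\Delta}\phi\gtrsim1$ once $\eta\ll\Delta$. By Assumption~\ref{ass:H0}~(i) at imaginary part $\eta\ge\eta_0(N)$ and the admissibility of $E_0$, a Poisson-extension estimate gives $g=\Im m_0(\cdot+\ii\eta)+\cO(\epsilon_0)=\pi\rho_0+\cO(\eta+\epsilon_0)$ uniformly on $I_{\kappa_0/2}$, so $c\le g\le C$ there and, via a Cauchy estimate for the analytic error $\llangle M_0\rrangle-m_0$, $\mathrm{Lip}_{I_{\kappa_0/2}}(g)\lesssim1+\epsilon_0/\lambda^2$; moreover $\int_\R g=\pi$ and $|h(x)|\le\|A\|g(x)$ pointwise, because $\Im M_0\ge0$. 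Thus the denominator is $\ge g(E_0)\int_{I_\Delta}g\phi\gtrsim\int_{I_\Delta}\phi\gtrsim1$. For the numerator, split $\R=I_\Delta\sqcup(I_{\kappa_0/2}\setminus I_\Delta)\sqcup(\R\setminus I_{\kappa_0/2})$. On $I_\Delta$ the integrand is $\le g(E_0)|h(x)-h(E_0)|+h(E_0)|g(x)-g(E_0)|\lesssim(\mathrm{Lip}_{I_\Delta}(h)+1)\Delta$ by \eqref{eq:hC1 norm} and the regularity of $g$; on $\R\setminus I_{\kappa_0/2}$ one uses $|h|\le\|A\|g$, $\phi\lesssim\eta$ there and $\int_\R g=\pi$ to get a bound of order $\eta$; both are $o(1)$ in the prescribed limits. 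The annulus $I_{\kappa_0/2}\setminus I_\Delta$ is the crux: there only $|h|,g\lesssim1$ is available, so its contribution is controlled solely by $\int_{\R\setminus I_\Delta}\phi$, which \emph{need not be small} --- a state $P$ populating eigenvectors with eigenvalues within $\cO(\eta)$ of $\partial I_\Delta$ forces an order-one fraction of the mass of $\phi$ to protrude just outside $I_\Delta$, where \eqref{eq:hC1 norm} gives no information.

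To handle the annulus I would recast $\widetilde{P}_\lambda$ via the Cauchy semigroup identity $\int_\R\Im M_0(x+\ii\eta)\,\eta\,[(s-x)^2+\eta^2]^{-1}\,\rd x=\pi\,\Im M_0(s+2\ii\eta)$, applied in both numerator and denominator of \eqref{eq:kernelstateapprox}, which yields
\[
\langle A\rangle_{\widetilde{P}_\lambda}=\frac{\int_\R\llangle\Im M_0(s+2\ii\eta)A\rrangle\,\rd\nu_P(s)}{\int_\R\llangle\Im M_0(s+2\ii\eta)\rrangle\,\rd\nu_P(s)}\,,\qquad \nu_P:=\sum_j\langle\bm u_j,P\bm u_j\rangle\,\delta_{\mu_j}\,,
\]
where $\nu_P$ is now a probability measure \emph{supported in $\overline{I_\Delta}$}. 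The right-hand side is a $\nu_P$-average of $\widetilde{q}(s):=\llangle\Im M_0(s+2\ii\eta)A\rrangle/\llangle\Im M_0(s+2\ii\eta)\rrangle$ against the positive, order-one density $\llangle\Im M_0(\cdot+2\ii\eta)\rrangle$, so $|\cE_{\rm mc}|\le\sup_{s\in\overline{I_\Delta}}|\widetilde{q}(s)-\widetilde{q}(E_0)|+|\widetilde{q}(E_0)-h(E_0)/g(E_0)|$. Both terms are controlled by writing $\llangle\Im M_0(\cdot+2\ii\eta)A\rrangle=\mathcal{P}_\eta*h$ with $\mathcal{P}_\eta(u):=\pi^{-1}\eta(u^2+\eta^2)^{-1}$, splitting the convolution at $\partial I_\Delta$, and using \eqref{eq:hC1 norm} for the part of the integration inside $I_\Delta$ and the bounds $|h|\le\|A\|g$, $\int_\R g=\pi$ for the part outside; together with the regularity of $\llangle\Im M_0(\cdot+2\ii\eta)\rrangle$ from the previous paragraph this produces errors of the schematic form $\mathrm{Lip}_{I_\Delta}(h)\,\Delta+(\text{powers of }\eta/\Delta)+\epsilon_0$.

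The comparison $\widetilde{q}(E_0)$ versus $h(E_0)/g(E_0)$ is clean because $E_0$ is the centre of $I_\Delta$, so $\mathcal{P}_\eta*h$ there samples $h$ only well inside $I_\Delta$ up to a tail of order $\eta/\Delta^2$. The genuinely subtle contribution to $\sup_{s\in\overline{I_\Delta}}|\widetilde{q}(s)-\widetilde{q}(E_0)|$ is that of $s$ within $\cO(\eta)$ of the endpoints of $I_\Delta$, where $\mathcal{P}_\eta*h$ unavoidably samples $h$ outside the interval on which \eqref{eq:hC1 norm} is assumed; here one must exploit the ordering of limits (so that $\eta\ll\Delta$, making the leaked mass $\int_{\R\setminus I_R}\phi=\cO(\eta/(R-\Delta))$ tiny on scales $R-\Delta\gtrsim\Delta$) together with the collapse $\overline{I_\Delta}\to\{E_0\}$ as $\Delta\to0$. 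I expect this step --- converting the localization of $P$ in $I_\Delta$ into honest control of the smoothed numerator $\mathcal{P}_\eta*h$ on all of $\overline{I_\Delta}$, i.e.\ taming the tail of the Cauchy profile $\phi$ beyond $I_\Delta$ --- to be the main obstacle; everything else is elementary Stieltjes-transform bookkeeping using only Assumption~\ref{ass:H0}, the admissibility of $E_0$, and the positivity of $\Im M_0$.
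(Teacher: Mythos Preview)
Your initial decomposition and the estimates for the denominator and for the pieces $I_\Delta$ and $\R\setminus I_{\kappa_0/2}$ of the numerator are correct and match the paper's route exactly: the paper writes out \eqref{eq:kernelstateapprox}, quotes Lemma~\ref{lem:rho0approx} for the denominator, and treats the numerator ``completely analogously'' by splitting the Cauchy integral around each $\mu_j\in I_\Delta$.

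The obstacle you single out --- that for $\mu_j$ near $\partial I_\Delta$ an order-one fraction of the Cauchy mass of $\phi$ leaks into the annulus where \eqref{eq:hC1 norm} ``gives no information'' --- is not actually present. The point you are missing is that $h=h(\lambda,N)$ does \emph{not} depend on $\Delta$; the interval $I_\Delta$ enters only through the localisation of $P$. Hence the hypothesis \eqref{eq:hC1 norm}, read with $2\Delta$ in place of $\Delta$, gives equally well
\[
\limsup_{\Delta\to0}\,\limsup_{\lambda\to0}\,\limsup_{N\to\infty}\mathrm{Lip}_{I_{2\Delta}}(h)\lesssim 1\,.
\]
Now split the $x$-integral at $I_{2\Delta}$ rather than at $I_\Delta$. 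On $I_{2\Delta}$ you have $|h(x)-h(E_0)|\lesssim\Delta$ by the Lipschitz bound just stated; on $I_{2\Delta}^c$ every centre $\mu_j$ lies at distance at least $\Delta$, so the Cauchy tail is $\mathcal{O}(\lambda^2/\Delta)$ uniformly in $j$, and the bound $|h|\le\lVert A\rVert g$ together with $\int_\R g=\pi$ finishes that piece. This is exactly the split the paper uses (by analogy with \eqref{eq:rhocauchy}, where the cut is also at $I_{2\Delta}$), and it yields $\cE_{\rm mc}=\mathcal{O}(\epsilon_0+\Delta+\lambda^2/\Delta)$ in two lines.

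Your Cauchy-semigroup rewriting is correct and pleasant, but it only trades the boundary issue at level $\eta$ for the same issue at level $2\eta$; it does not avoid the need to control $h$ on a slightly larger interval, which is what you flag again in your last paragraph. Once you observe that \eqref{eq:hC1 norm} already provides that control on $I_{2\Delta}$, the detour becomes unnecessary and the ``main obstacle'' disappears.
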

We emphasize that $P^{\rm (mc)}_\lambda$ is completely independent of the initial state $P$. Moreover, as mentioned above and already indicated by the notation, we can interpret $\langle A \rangle_{P_\lambda^{\rm (mc)}}$ from~\eqref{eq:mc} as the microcanonical average of $H_\lambda$ at energy $E_0$. The reason underlying this interpretation is that for any normalized eigenvector $\bm v_{\lambda}$ of $H_\lambda$ with eigenvalue $E_\lambda$ very close to $E_0$, it holds that\footnote{Indeed, taking, say, $E_\lambda \in I_{\Delta/2}$ it can be rigorously shown that the difference $\langle \bm v_{\lambda}, A \bm v_{\lambda} \rangle- \langle A \rangle_{P^{\rm (mc)}_{\lambda}}$ vanishes in the triple limit \eqref{eq:triplelimvanish}. More precisely, this follows from the \emph{Eigenstate Thermalization Hypothesis (ETH)} for the random matrix $H_\lambda = H_0 + \lambda W$ (see \cite[Theorem 2.6]{equipart}) and using assumption \eqref{eq:hC1 norm} together with \eqref{eq:MDE} and \eqref{eq:trM_est}.}
\begin{equation*}
	\langle \bm v_{\lambda}, A \bm v_{\lambda} \rangle \approx 
	\frac{\Tr [\Im M_0(E_0+ \ii \alpha \lambda^2)A ]}{\Tr[\Im M_0(E_0 + \ii \alpha \lambda^2)]}  =\langle A \rangle_{P^{\rm (mc)}_{\lambda}} \,. 
\end{equation*}
This means, $P^{\rm (mc)}_\lambda$ is a close effective approximation to the actual projection $ \ket{\bm v_{\lambda}} \bra{\bm v_{\lambda}}$ onto the eigenspace spanned by $\bm v_{\lambda}$.

\subsection{Examples} \label{subsec:example}
In this section, we give two examples of physical settings where prethermalization occurs and connect them to our assumptions. Note that both examples are one-dimensional, however, the extension to higher dimensions is straightforward.

\subsubsection{Next-nearest neighbor hopping}
For $N \in \N$ even, we consider the Laplacian-like Hamiltonian $H_0$ acting on functions $\psi\in\ell^2(\Z/(N\Z))$ as
\begin{equation}\label{eq:NNNHamiltonian}
(H_0\psi)(x):=2\psi(x)-\psi(x-2)-\psi(x+2)
\end{equation}
where $x-2$ and $x+2$ are interpreted mod $N$. Note that $H_0$ is similar to the discrete Laplacian with periodic boundary condition but induces next-nearest neighbor hopping instead of nearest neighbor hopping. In particular, $H_0$ conserves parity in the sense that functions that are only supported on the even or odd points of $\Z/(N\Z)$, respectively, remain invariant, and thus its spectrum has an additional two-fold 
degeneracy. Similar to the routine computations done for the discrete Laplacian, one can readily check the following: 
\begin{itemize}
\item The Hamiltonian $H_0$ is bounded, $\|H_0\|\lesssim 1$. 
\item Its spectrum is given by $\sigma(H_0) = \left\{ 2 (1 - \cos(2 p_j)) : p_j = 2 \pi j /N \right\}_{j \in [N]} \subset [0,4]$. 
\item The limiting density of states as $N\rightarrow\infty$ evaluates to
\begin{equation} \label{eq:DOS}
\rho_0(x)=\frac{1}{\pi\sqrt{x(4-x)}}\mathds{1}_{[0,4]}(x)
\end{equation}
which is compactly supported and satisfies the regularity assumptions in Assumption~\ref{ass:H0} for $x$ bounded away from $0$ and $4$.  
\end{itemize}
In this setting, we fix $k$ such that the eigenvalue $2(1 - \cos(2 p_k))$ satisfies $p_k\in(0,\pi/2)$. Now take  $P:= \ket{\bm u_k} \bra{\bm u_k}$ with $\bm u_k$ being the normalized eigenvector of $H_0$ supported on the \emph{even} sub-lattice corresponding to the eigenvalue $2(1 - \cos(2 p_k))$. By construction, for every observable $A$ satisfying Assumption \ref{ass:stateandobs}, we have
\begin{displaymath}
\langle A\rangle_{P_0(t)}= \langle A \rangle_{P} = \langle A \rangle_{P_{\rm pre}}\,, \quad \text{for all} \quad t\geq0\,,
\end{displaymath}
since $[P,H_0]=0$. Hence, the symmetry implies that $P_{\rm pre} = P$. In particular, for $A:=\mathbf{1}_{\rm odd}$ being the identity operator on the \emph{odd} sub-lattice, its prethermal value is given by $\langle A\rangle_{P_{\rm pre}}=0$. Moreover, by spectral decomposition of $H_0 = \sum_j \mu_j \ket{\bm u_j} \bra{\bm u_j}$, we obtain
\begin{equation} \label{eq:specdecexample}
\llangle\Im M_0(x+\ii\lambda^2\alpha)A\rrangle=\frac{1}{N}\sum_j\langle\bm{u}_j,A\bm{u}_j\rangle\frac{\alpha\lambda^2}{|x-\mu_j|^2+(\alpha\lambda^2)^2}>0,
\end{equation}
which implies that $\langle A\rangle_{\widetilde{P}_\lambda}\neq\langle A\rangle_{P_{\rm pre}}$ (recall the definition of $\widetilde{P}_\lambda$ in \eqref{eq:kernelstateapprox}) for $A = \mathbf{1}_{\rm odd}$. Hence, we deduce that $(H_0, P)$ has the prethermalization property from Definition~\ref{def:preTcond}.

\subsubsection{Free spinless fermions on a lattice}
As our second example, we consider a model of spinless fermions in a periodic one-dimensional lattice of even length $N$ (cf.~\cite[App.~B]{DRrelax1}), which can be seen as a many-body analog of the first example (although with nearest neighbor hopping instead of next-nearest neighbor hopping). Let
\begin{equation}\label{eq:FFHamiltonian}
H_0=\frac{1}{\sqrt{N}}\sum_j c_j^{\dagger}c_{j+1}+c_{j+1}^\dagger c_j,
\end{equation}
where $c_j^\dagger$ and $c_j$ denote the fermionic creation and annihilation operators at site $j$, and the summation indices are considered modulo $N$. Note that the Hamiltonian in~\eqref{eq:FFHamiltonian} conserves the particle number. It is readily checked that $H_0$ admits a limiting density of states which is not compactly supported but has fast decaying (Gaussian) tails. As the regularity assumptions in Assumption~\ref{ass:H0}~(ii) are satisfied, this example is still sufficiently close to our theory to be described by it reasonably well. 

In this setting, pick $\psi_j$ as the orthonormal eigenfunctions of the discrete Laplacian describing nearest neighbor hopping with periodic boundary conditions (i.e.~the analog of \eqref{eq:NNNHamiltonian} with $\pm 1$ instead of $\pm 2$) corresponding to the eigenvalues $2(1 - \cos(p_j))$ with 
\begin{displaymath}
p_j :=\frac{2 \pi j}{N},\quad \frac{j}{N} \in \Big[\frac18, \frac38\Big] \cup \Big[\frac58, \frac78\Big].
\end{displaymath}
We then construct $P:=\ket{\psi} \bra{\psi}$ as a rank$-1$ projection onto an eigenstate of $H_0$ by taking
\begin{displaymath}
\psi:=\bigwedge_{j}\psi_j
\end{displaymath}
as a Slater determinant of the $N/2$ one-particle wave functions $\psi_j$. This ensures that $P$ satisfies Assumption \ref{ass:stateandobs}~(i), as the density of states (which is the same as \eqref{eq:DOS}) is regular in such intervals. Noting that $[P,H_0]=0$, we obtain
\begin{displaymath}
\langle A\rangle_{P_0(t)}= \langle A \rangle_{P} = \langle A \rangle_{P_{\rm pre}}\,, \quad \text{for all} \quad t\geq0\,,
\end{displaymath}
for every observable $A$ satisfying Assumption~\ref{ass:stateandobs}. Hence, $P_{\rm pre} = P$, similar to the first example. In particular, for ${A=\mathbf{1}_{\mathcal{H}_{N/2}^{\perp}}}$ being the identity on the orthogonal complement of the $N/2$-particle sector of the Fock space, the prethermal value is given by $\langle A\rangle_{P_{\rm pre}} = 0$. Moreover, by spectral decomposition of $H_0$, similarly to \eqref{eq:specdecexample},  we find that $\langle A\rangle_{\widetilde{P}_\lambda} \neq \langle A\rangle_{P_{\rm pre}}$. Hence, we deduce that $(H_0, P)$ has the prethermalization property from Definition~\ref{def:preTcond}.

\section{Proofs} \label{sec:proofs}
In this section, we provide the proofs of our main results formulated in Section \ref{sec:main}. 
We begin by giving the proof of Theorem~\ref{thm:main}, which we organize in three steps:
\begin{itemize}
\item[(i)] In Section \ref{subsec:step1}, as the first step, we approximate the random time evolution $\langle A \rangle_{P_\lambda(t)}$ by a deterministic object, up to an error vanishing as $N \to \infty$
with very high probability. 
 This is done using a suitable \emph{global law} for two resolvents of the random matrix $H_\lambda$ (see Proposition \ref{prop:LL} below). 
\item[(ii)] The deterministic object resulting from Step (i) consists of two terms, a regular and a singular one. In Section \ref{subsec:step2}, we evaluate these terms up to errors captured by $\mathcal{E}$ (see Proposition \ref{prop:regsing}). This proves Theorem \ref{thm:main}~(a). 
\item[(iii)] As the third and final step in Section \ref{subsec:step3}, we examine the behavior of the singular term in the limits $T \to 0$ and $T \to \infty$ for $T := \lambda^2 t$ (see Proposition \ref{prop:kernel}). This proves Theorem~\ref{thm:main}~(b).  
\end{itemize}
Afterwards, we give the proofs of Corollary \ref{cor:relax} and Theorem~\ref{thm:PreT2} in Sections \ref{subsec:relaxproof} and \ref{subsec:proofmc}, respectively. The proof of Corollary~\ref{cor:PreT1} is immediate from Definition~\ref{def:preTcond} and Theorem~\ref{thm:main} and hence omitted.

\subsection{Step (i): Global law} \label{subsec:step1} Ignoring $\lambda$ for a minute, let $H := D + W$ such that $D\in\C^{N\times N}$ is a self-adjoint deterministic matrix with $\|D\|\lesssim1$ and $W$ is a Wigner matrix satisfying Assumption~\ref{ass:W}. We refer to $H$ as a \emph{deformed Wigner matrix}. It is well known  \cite{AEK2019, slowcorr, AEK2020, edgelocallaw}, that the resolvent  $G(z):= (H-z)^{-1}$ of $H$ at spectral parameter $z \in \C\setminus \R$ is very well approximated by a deterministic matrix $M$, which is the unique solution to the \emph{Matrix Dyson Equation (MDE)}\footnote{\label{ftn:MDE}The MDE in the context of mean-field random matrices was introduced in \cite{AEK2019} and extensively analyzed in \cite{AEK2020}.} 
\begin{equation} \label{eq:MDEnolambda}
	-\frac{1}{M(z)}=z-D+\llangle M(z)\rrangle \,, \quad \text{with} \quad \Im M(z) \Im z > 0\,. 
\end{equation}
In particular (see \cite[Theorem~2.1]{slowcorr}), for $|\Im z| \gtrsim 1$ and arbitrary deterministic matrix $B \in \C^{N \times N}$ with $\|B\|\lesssim1$, it holds that 
\begin{equation} \label{eq:singlegloballaw}
\big|\llangle (G(z) - M(z))B \rrangle\big| \prec \frac{1}{N}\,. 
\end{equation}

For our purposes, it is not sufficient to approximate only a single resolvent in the sense of \eqref{eq:singlegloballaw}. Instead, we need to establish the deterministic approximation to $\langle \bm x, G(z_1) B G(z_2) \bm y \rangle$ with two deterministic vectors $\bm x, \bm y$. This is the content of the following proposition, the proof of which is given in Appendix \ref{app:LL}.

\begin{proposition}[Isotropic two-resolvent global law for deformed Wigner matrices]\label{prop:LL}
Let $H := D + W$ be an $N\times N$ deformed Wigner matrix
 (as in Assumption \ref{ass:W}) with a bounded self-adjoint deformation $D\in\C^{N\times N}$. Pick $B\in\C^{N\times N}$ a deterministic matrix with $\|B\|\lesssim1$, deterministic vectors $\bm x,\bm y\in\C^N$ with $\|\bm x \|=\| \bm y\|=1$ as well as $z_1,z_2\in\C$ with $\min\{ |\Im z_1|,| \Im z_2|\}\ge1$.\footnote{As used also in \eqref{eq:LLlambda} below, the constant one can be replaced by any other $N$-independent $c > 0$.} Denote further $G_i=G(z_i)=(H-z_i)^{-1}$. Then,
\begin{equation}\label{eq:LL}
\Big|\langle\bm x,G_1BG_2\bm y\rangle-\Big\langle \bm x,\Big(M_1BM_2+\frac{M_1M_2 \llangle M_1BM_2 \rrangle }{1- \llangle M_1M_2 \rrangle}\Big)\bm y\Big\rangle\Big|\prec\frac{1}{\sqrt{N}}\,,
\end{equation}
where we denoted $M_i = M(z_i)$ with $M(z) \in \C^{N \times N}$ being the solution of \eqref{eq:MDEnolambda}.
\end{proposition}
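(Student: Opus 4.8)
The plan is to prove this by a fairly standard, but careful, cumulant-expansion (or equivalently, self-consistent resolvent) argument, exploiting that the spectral parameters are far from the real axis so that all control parameters are $O(1)$ and we avoid the delicate local-law stability analysis. First I would set $\underline{G} := G_1 B G_2$ and write the self-consistent equation for it. Using the resolvent identity $G_i = M_i - M_i(W + \llangle G_i \rrangle - \llangle M_i\rrangle)G_i$ — together with the averaged single-resolvent law \eqref{eq:singlegloballaw}, which already gives $\llangle G_i \rrangle = \llangle M_i \rrangle + O_\prec(1/N)$ and also the isotropic version $\langle \bm x, (G_i - M_i)\bm y\rangle \prec N^{-1/2}$ (which follows by the same methods, see \cite{slowcorr}) — I would expand $\langle \bm x, G_1 B G_2 \bm y\rangle$ around $\langle \bm x, M_1 B M_2 \bm y\rangle$. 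The first correction term is $-\langle \bm x, M_1 W G_1 B G_2 \bm y\rangle$ plus $-\langle \bm x, M_1 (\llangle G_1\rrangle - \llangle M_1 \rrangle) G_1 B G_2\bm y \rangle$ and similarly from the second resolvent; the latter is controlled by the single-resolvent law, while the former is handled by a cumulant expansion in the entries of $W$.

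The heart of the matter is the cumulant expansion of $\E[\langle \bm x, M_1 W G_1 B G_2 \bm y\rangle]$ (and its higher moments, to upgrade expectation control to high-probability control). The second-order term in the expansion produces the characteristic "$\llangle \cdot \rrangle$-loop": differentiating $G_1$ with respect to $w_{ab}$ brings down another $G_1$, and after averaging against $\E|w_{ab}|^2 = 1/N$ one obtains a factor $\llangle M_1 G_1 B G_2 \rrangle \approx \llangle M_1 M_1 B G_2\rrangle$, which in turn, iterating once more, produces the geometric series in $\llangle M_1 M_2 \rrangle$. Schematically, one arrives at a self-consistent equation of the form $\langle \bm x, \underline{G}\bm y\rangle = \langle \bm x, M_1 B M_2 \bm y\rangle + \langle \bm x, M_1 M_2\bm y\rangle \llangle M_1 B M_2 \rrangle + \llangle M_1 M_2 \rrangle \langle \bm x, \underline{G}'\bm y\rangle + (\text{error})$, whose resummation gives exactly the claimed deterministic approximation with the denominator $1 - \llangle M_1 M_2 \rrangle$. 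Crucially, since $\min\{|\Im z_1|, |\Im z_2|\} \ge 1$ and $\|D\| \lesssim 1$, one has the a priori bounds $\|M_i\| \lesssim 1$ and, more importantly, $|\llangle M_1 M_2 \rrangle| < 1 - c$ for some $c > 0$ — so the geometric series converges and the resummation is stable without any further input. This is where the "global" (as opposed to local) nature of the statement makes life much easier.

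The main obstacle, and the step requiring the most care, is the bookkeeping of the cumulant expansion to all orders together with the propagation of the $N^{-1/2}$ error through the self-consistent equation. One must check that every higher-cumulant term (third order and beyond, and the "non-leading" second-order pairings, e.g. those that do not close a trace loop) is genuinely smaller — typically $O_\prec(N^{-1})$ or better — so that after resummation the total error remains $O_\prec(N^{-1/2})$; the $N^{-1/2}$ rather than $N^{-1}$ rate is the price of the isotropic (single entry $\bm x, \bm y$) rather than averaged observable. To get high-probability rather than in-expectation bounds, I would run the expansion for $\E|\langle \bm x, (\underline G - \text{det.})\bm y\rangle|^{2p}$ for arbitrary $p$ and apply Markov; the $W$-derivatives of the high moment produce the same structure with extra resolvent factors, all bounded by $\|G_i\| \le 1$ since $|\Im z_i| \ge 1$. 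Because all a priori norm bounds are trivial here (no fluctuation averaging or local law needed for $\|G_i\|$), the argument is essentially self-contained given \eqref{eq:singlegloballaw}; I would expect the write-up in Appendix \ref{app:LL} to follow the template of the analogous computations in \cite{ETHpaper, slowcorr} with the deformation $D$ carried along inertly.
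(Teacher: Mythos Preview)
Your plan is correct and matches the paper's proof essentially line for line: the paper packages the cumulant expansion via the ``underline'' second-order renormalization $\underline{f(W)Wg(W)}$, derives exactly the self-consistent equation you describe (their Lemma~\ref{lem:underline} and Eq.~\eqref{eq:globallawfinal}), and inverts the stability operator $\mathcal{X}_{12}[B] = (1-\llangle M_1\cdot M_2\rrangle)^{-1}[B]$ --- which is your geometric-series resummation --- before running the $2p$-th moment cumulant expansion you outline. The only cosmetic difference is that the paper expands $G_2 = M_2 - M_2\underline{WG_2} + \cdots$ on the right rather than $G_1$ on the left, and organizes the higher-cumulant bookkeeping via the underline notation rather than spelling out each pairing; the stability bound $|\llangle M_1M_2\rrangle| < 1-c$ you identify is exactly what the paper invokes (citing \cite[Lemma~B.5]{iid}) for the boundedness of $\mathcal{X}_{12}$.
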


We now apply Proposition \ref{prop:LL} to resolvents $G_\lambda(z) := (H_\lambda -z)^{-1}$ of our concrete deformed Wigner random matrix $H_\lambda = H_0 + \lambda W$. First, the scaling by $\lambda$ leads to the following version of the MDE \eqref{eq:MDEnolambda}: 
\begin{equation} \label{eq:MDE}
	- \frac{1}{M_\lambda(z)} = z - H_0 + \lambda^2 \llangle M_\lambda(z) \rrangle, \quad \Im M_{\lambda}(z)\Im z >0\,. 
\end{equation}
Next, inspecting the proof of Proposition \ref{prop:LL} in Appendix \ref{app:LL}, we find that \eqref{eq:LL} becomes
\begin{equation}\label{eq:LLlambda}
	\begin{split}
	\Bigg|\big(&G_\lambda(z_1)BG_\lambda(z_2)\big)_{\bm x \bm y} \\
	 &-\left(M_\lambda(z_1)BM_\lambda(z_2)+\lambda^2\frac{M_\lambda(z_1)M_\lambda(z_2) \llangle M_\lambda(z_1)BM_\lambda(z_2) \rrangle}{1- \lambda^2\llangle M_\lambda(z_1)M_\lambda(z_2) \rrangle}\right)_{\bm x \bm y}\Bigg|\prec\frac{C(\lambda,c)}{\sqrt{N}}\,,
	\end{split}
\end{equation}
for $z_1, z_2 \in \C \setminus \R$ with $\min\{ |\Im z_1|,| \Im z_2|\}\ge c $ for some $c > 0$. The positive constant $C(\lambda,c)$ in \eqref{eq:LLlambda} depends only on its arguments $\lambda$ and $c$.\footnote{In fact, by examining the proof of Proposition \ref{prop:LL}, it can easily be seen that the dependence on both 
small parameters $\lambda$ and $c$ is at most (inverse) polynomial.}

For the proof of Theorem~\ref{thm:main}, we now employ \eqref{eq:LLlambda} as follows: Applying residue calculus allows to rewrite $\langle A\rangle_{P_\lambda(t)}$ as the contour integral
\begin{equation}\label{eq:contourint}
\langle A\rangle_{P_\lambda(t)}=\langle \ee^{-\ii t H_\lambda}P\ee^{\ii t H_\lambda}A\rangle= \frac{1}{(2 \pi \ii)^2 }\oint_{\gamma_1}\oint_{\gamma_2}\ee^{\ii t(z_1-z_2)}\Tr\big[ G_{\lambda}(z_1)AG_{\lambda}(z_2)P\big]\rd z_1\rd z_2
\end{equation}
where $\gamma_1$ and $\gamma_2$ are two semicircles (each being the complex conjugate of the other) with some large radius $ R  \gtrsim 1$ (see~Figure \ref{fig:contours} below). We further define the contours such that the distance between the flat part of the semicircles and the real line is~$t^{-1}$. Note that we have
\begin{equation}\label{eq:specinclusion}
\sigma(H_\lambda)\subseteq\sigma(H_0) + [-(2+\epsilon)\lambda, (2+\epsilon) \lambda]\quad \text{w.v.h.p.}
\end{equation}
for any fixed $\epsilon>0$ by standard perturbation theory, using that $\Vert W \Vert \le 2 + \epsilon$ with very high probability (see, e.g.,~\cite[Theorem~7.6]{semicirclegeneral}). In particular, the contours encircle the spectrum of $H_\lambda$ completely if $R$ is chosen large enough.

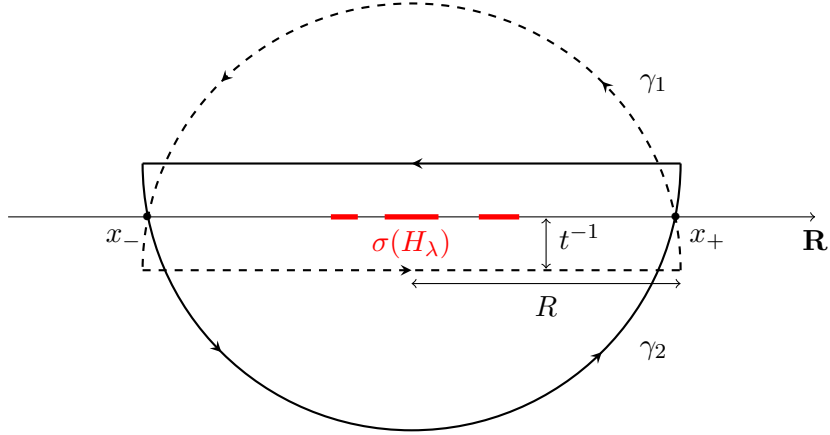
\begin{figure}[h]
\begin{center}
\begin{tikzpicture}[scale=\textwidth/21.2cm]
\nc
\draw[black,->] (-5,0) -- (10,0);
\draw (10,0) node[below=1pt] {\nc$\R$};

\draw[red,line width=0.7mm] (1,0) -- (1.5,0);
\draw[red,line width=0.7mm] (2,0) -- (3,0);
\draw[red,line width=0.7mm] (3.75,0) -- (4.5,0);
\draw (2.5,0) node[below=1pt] {\color{red}$\sigma(H_\lambda)$\nc};

\draw (6.5,2.5) node[right=1pt] {\nc $\gamma_1$};
\draw (6.5,-2.5) node[right=1pt] {\nc $\gamma_2$};

\draw[black,thick,dashed,postaction={on each segment={mid arrow=black}}] (7.5,-1) arc[start angle=0, end angle=180, radius=5];
\draw[black,thick,dashed,postaction={on each segment={mid arrow=black}}] (-2.5,-1) -- (7.5,-1);

\draw[black,<->] (2.5,-1.25) -- (7.5,-1.25);
\draw (5,-0.375) node[right=1pt] {\nc $ t^{-1}$};
\draw[black,<->] (5,-0.97) -- (5,-0.03);
\draw (5,-1.25) node[below=1pt] {\nc $R$};
\draw (8,0) node[below=1pt] {\nc $x_+$};
\draw (-2.85,0) node[below=1pt] {\nc $x_-$};
\node at (-2.41,-.05) {\textbullet};
\node at (7.41,-.05) {\textbullet};

\draw[black,thick,postaction={on each segment={mid arrow=black}}] (-2.5,1) arc[start angle=180, end angle=360, radius=5];
\draw[black,thick,postaction={on each segment={mid arrow=black}}] (7.5,1) -- (-2.5,1);
\end{tikzpicture}
\end{center}
\captionof{figure}{Sketch of the contours $\gamma_1$ and $\gamma_2$ from \eqref{eq:contourint}. The intersections of $\gamma_1$ (and $\gamma_2$) with the real line are denoted by $x_\pm$.}\label{fig:contours}
\end{figure} 

Writing $P=\sum_jp_j \ket{\bm p_j} \bra{\bm p_j}$ in spectral decomposition and using that the $p_j \in [0,1]$ sum to one by Assumption~\ref{ass:stateandobs}~(i), the global law~\eqref{eq:LLlambda}
applied to ${\bm x}={\bm y}={\bm p}_j$ implies\footnote{Note that the true bound in the averaged law is of order $N^{-1}$, i.e., by a factor $N^{-1/2}$ better than~\eqref{eq:globallambda}. However, for the following applications, our weaker bound is sufficient.}
\begin{equation} \label{eq:globallambda}
\begin{split}
\Bigg|\Tr[ &G_{\lambda}(z_1)AG_{\lambda}(z_2)P] \\
&-\Tr\left[M_\lambda(z_1)AM_\lambda(z_2)P+\lambda^2\frac{M_\lambda(z_1)M_\lambda(z_2)P \llangle M_\lambda(z_1)AM_\lambda(z_2) \rrangle}{1- \lambda^2\llangle M_\lambda(z_1)M_\lambda(z_2) \rrangle} \right]\Bigg|\prec\frac{C(\lambda, t)}{\sqrt{N}}
\end{split}
\end{equation}
uniformly for $z_1, z_2$ along the contours $\gamma_1, \gamma_2$ for any fixed $\lambda>0$. Just as in \eqref{eq:LLlambda}, $C(\lambda, t)$ denotes a positive constant depending only on $\lambda$ and $t$. Therefore, combining \eqref{eq:contourint} with \eqref{eq:globallambda}, we find that
\begin{equation} \label{eq:terms}
	\begin{split}
\langle A\rangle_{P_\lambda(t)}&=\frac{1}{(2 \pi \ii)^2 }\oint_{\gamma_1}\oint_{\gamma_2} \ee^{\ii t (z_1 - z_2)}\Tr \big[M_{\lambda}(z_1)AM_{\lambda}(z_2)P \big] \rd z_1\rd z_2\\
&\quad +\frac{1}{(2 \pi \ii)^2 }\oint_{\gamma_1}\oint_{\gamma_2}  \ee^{\ii t (z_1 - z_2)}\lambda^2\frac{\llangle[\bigl]M_{\lambda}(z_1)AM_{\lambda}(z_2)\rrangle \Tr \big[M_{\lambda}(z_1)PM_{\lambda}(z_2)\big]}{1-\lambda^2\llangle[\bigl]M_{\lambda}(z_1)M_{\lambda}(z_2)\rrangle[\bigr]} \rd z_1\rd z_2 \\
& \quad +\cO_\prec\left(\frac{C(\lambda, t)}{\sqrt{N}}\right)\,.
	\end{split}
\end{equation}
To establish~\eqref{eq:main}, our main task thus lies in evaluating the right-hand side of~\eqref{eq:terms}. For simplicity, we refer to the integrals in the first and second line of~\eqref{eq:terms} as the \textit{regular} and the \textit{singular} term, respectively.

\subsection{Step (ii): Evaluation of the regular and singular term and proof of Theorem \ref{thm:main}~(a)} \label{subsec:step2}
We organize the result of our computation of \eqref{eq:terms} in the following proposition. 
\begin{proposition}[Evaluation of the regular and singular term]\label{prop:regsing}
Under the assumptions of Theorem~\ref{thm:main} and letting $\gamma_1$, $\gamma_2$ be the contours in Fig.~\ref{fig:contours}, we have (recalling $\alpha = \pi \rho_0(E_0)$)
\begin{subequations}
\begin{equation}\label{eq:regterm}
	\begin{split}
	\frac{1}{(2 \pi \ii)^2 }\oint_{\gamma_1}\oint_{\gamma_2} \ee^{\ii t (z_1 - z_2)} & \Tr \big[M_{\lambda}(z_1)AM_{\lambda}(z_2)P \big] \rd z_1\rd z_2 \\[1mm]
	&=\ee^{-2 \alpha \lambda^2t}\langle A\rangle_{P_0(t)}+\cO(\cE_{\rm reg})\,, 
	\end{split}
\end{equation}
for the regular term and
\begin{equation}\label{eq:singterm}
	\begin{split}
		\tilsing:=	\frac{\lambda^2}{(2 \pi \ii)^2 }\oint_{\gamma_1}\oint_{\gamma_2}  \ee^{\ii t (z_1 - z_2)} &\frac{\llangle[\bigl]M_{\lambda}(z_1)AM_{\lambda}(z_2)\rrangle \Tr \big[M_{\lambda}(z_1)PM_{\lambda}(z_2)\big]}{1-\lambda^2\llangle[\bigl]M_{\lambda}(z_1)M_{\lambda}(z_2)\rrangle[\bigr]} \rd z_1\rd z_2 \\[1mm]
		&=\langle A\rangle_{\widetilde{P}_{\lambda,t}}+\cO(\cE_{\rm sing})
	\end{split}
\end{equation}
\end{subequations}
for the singular term, with some error terms $\cE_{\rm reg/sing} = \cE_{\rm reg/sing}(\lambda, t, \Delta, N)$ in \eqref{eq:regterm} and~\eqref{eq:singterm} satisfying \eqref{eq:triplelimvanish}. The explicit form of $\cE_{\rm reg}$ is given in \eqref{eq:Ereg} and \eqref{eq:Eregnew}, while the explicit form of $\cE_{\rm sing}$ is given in \eqref{eq:Fsing_replace}. 

\end{proposition}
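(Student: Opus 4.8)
The plan is to evaluate the two double contour integrals in \eqref{eq:terms} by deforming the contours $\gamma_1,\gamma_2$ to the real axis, thereby turning them into integrals against the spectral densities $\Im M_\lambda(x+\ii 0^+)$. The key analytic input is the \emph{Matrix Dyson Equation} \eqref{eq:MDE}: since $\lambda$ is fixed, $M_\lambda(z)$ is analytic off the real line and, on the relevant energy window $I_\Delta$, one has $M_\lambda(x+\ii\eta) \approx M_0(x+\ii(\eta+\alpha\lambda^2))$ up to errors controlled by $\epsilon_0$ and the regularity of $\rho_0$ near $E_0$ (Assumption \ref{ass:H0}); this is precisely where the imaginary shift $\ii\alpha\lambda^2$ with $\alpha=\pi\rho_0(E_0)$ in \eqref{eq:kernelstate}--\eqref{eq:kernelstateapprox} comes from, via $\Im\llangle M_\lambda\rrangle \approx \pi\rho_0$. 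So the first step is to record the needed a priori bounds on $M_\lambda$ and its deviation from $M_0(\cdot+\ii\alpha\lambda^2)$ on $I_\Delta$, together with crude bounds away from $I_\Delta$ (using the localization $P=\Pi_\Delta P\Pi_\Delta$ to suppress those contributions).

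For the \textbf{regular term}, I would collapse both contours onto the real axis. Writing the integrand via $M_\lambda(z_1)$ on the upper contour and $M_\lambda(z_2)$ on the lower one, and using $\ee^{\ii t(z_1-z_2)}$ to control convergence at the arcs (radius $R$), the double integral becomes $\int\!\int \ee^{\ii t(x_1-x_2)} \Tr[(\text{combination of }\Im M_\lambda(x_i)) A (\cdots) P]\,\rd x_1 \rd x_2$ type expressions. Using the spectral decomposition \eqref{eq:specdecH0mainres} of $H_0$ and the approximation $\Im M_\lambda(x+\ii 0) \approx \Im M_0(x+\ii\alpha\lambda^2)$, one recognizes that $\int \ee^{\ii t x}\,\frac{\alpha\lambda^2}{(x-\mu_j)^2+(\alpha\lambda^2)^2}\,\rd x = \pi \ee^{\ii t\mu_j}\ee^{-\alpha\lambda^2 t}$, i.e.\ each eigenmode picks up the free phase $\ee^{\ii t\mu_j}$ times the decay factor $\ee^{-\alpha\lambda^2 t}$; squaring (one factor from each contour) reproduces $\ee^{-2\alpha\lambda^2 t}\sum_{j,k} \ee^{\ii t(\mu_j-\mu_k)}\langle\bm u_j,A\bm u_k\rangle\langle\bm u_k,P\bm u_j\rangle = \ee^{-2\alpha\lambda^2 t}\langle A\rangle_{P_0(t)}$, which is the claimed leading term in \eqref{eq:regterm}. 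The error $\cE_{\rm reg}$ collects: the MDE approximation error on $I_\Delta$, the discrepancy between the Lorentzian and an exact delta (responsible for needing $\Delta\to 0$ so the level spacing is locally constant), the tails outside $I_\Delta$, and the contribution of the subleading real part of $M_\lambda$.

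For the \textbf{singular term}, the same contour-to-axis reduction applies, but now one must additionally handle the denominator $1-\lambda^2\llangle M_\lambda(z_1)M_\lambda(z_2)\rrangle$. As $z_1,z_2$ approach the real axis near $E_0$ from \emph{opposite} half-planes, $\lambda^2\llangle M_\lambda M_\lambda\rrangle \to 1$ (this is the stability/criticality of the MDE, and the source of the word ``singular''), so the denominator behaves like $\sim (x_1-x_2)$ plus an $\ii\,(2\alpha\lambda^2)$ regularization — this is exactly the mechanism producing the kernel $K_{\lambda,t}$ in \eqref{eq:kernel}: the Fourier transform in $t$ of $\frac{1}{u^2+(2\alpha\lambda^2)^2}$ against the oscillatory factors $\ee^{\ii t u}$ yields the $\frac{\sin(tu)}{tu}$, $\cos(tu)$, and $\ee^{-2\alpha\lambda^2 t}$ combination. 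Concretely, I would expand $\frac{1}{1-\lambda^2\llangle M_\lambda M_\lambda\rrangle}$ around its singularity, extract the explicit Lorentzian-in-$(x_1-x_2)$ piece, integrate out the $t$-oscillations by residues/Fourier analysis to produce $K_{\lambda,t}$, and identify the remaining $\Tr$ factors with $\Im M_0(x+\ii\alpha\lambda^2)$ and $\langle\Im M_0(y+\ii\alpha\lambda^2)\rangle_P$, matching \eqref{eq:kernelstate} after normalization by $r$ in \eqref{eq:r_def}. The error $\cE_{\rm sing}$ again absorbs MDE approximation errors, tail contributions, and the error in replacing the exact denominator by its leading singular profile.

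I expect the \textbf{main obstacle} to be the careful analysis of the singular term's denominator: one must show that $1-\lambda^2\llangle M_\lambda(z_1)M_\lambda(z_2)\rrangle$ does not vanish for $z_1,z_2$ strictly off the axis, quantify how close it gets to zero as $\Im z_i\to 0$ near $E_0$ (uniformly over the contour, including where $x_1,x_2$ wander away from $E_0$), and justify the contour deformation despite this near-degeneracy, all while tracking the $\lambda$- and $t$-dependence of every error explicitly so that \eqref{eq:triplelimvanish} holds in the stated order of limits. The regularity of $\rho_0$ on $I_\Delta$ from Assumption \ref{ass:H0}(ii) and the $\Delta\to 0$ limit are what make this quantitative control possible, but bookkeeping the interplay of the three small parameters $\lambda$, $1/N$ (already gone at this stage), and $\Delta$ against the large parameter $t=T/\lambda^2$ is the delicate part.
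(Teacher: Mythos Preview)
Your overall strategy is sound, and for the singular term you correctly identify the near-degeneracy of $1-\lambda^2\llangle M_\lambda(z_1)M_\lambda(z_2)\rrangle$ as the crux. However, the paper's execution differs from yours in two places, and in both cases its route is shorter.

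\textbf{Regular term.} Rather than collapsing the contours to the real axis and approximating $\Im M_\lambda(x+\ii 0)\approx \Im M_0(x+\ii\alpha\lambda^2)$, the paper keeps the contours $\gamma_1,\gamma_2$ fixed (flat parts at distance $1/t$ from $\R$) and replaces the self-energy $\lambda^2 m_\lambda(z)$ by the \emph{constant} $\lambda^2 m_0(E_0)$ (resp.\ its conjugate) directly inside $M_\lambda(z)=(H_0-z-\lambda^2 m_\lambda(z))^{-1}$. After this replacement the integrand is an explicit resolvent of $H_0$, and the double residue evaluates \emph{exactly} to $\ee^{-2\alpha\lambda^2 t}\langle A\rangle_{P_0(t)}$. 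The entire error $\cE_{\rm reg}$ is then the difference of two scalar contour integrals (Lemma~\ref{lem:replace}), estimated by splitting $\gamma$ into a near-$E_0$ piece, a far horizontal piece, and the arc. Your approach would work, but the bookkeeping is heavier: you would need boundary regularity of $M_\lambda$ on $\R$ and to control the Fourier-transform-of-Lorentzian approximation uniformly in the eigenvalue location, whereas the paper's constant replacement sidesteps both.

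\textbf{Singular term.} Here you miss the key algebraic shortcut. Subtracting two instances of the MDE \eqref{eq:MDE} gives the \emph{exact} identity
\[
\frac{M_\lambda(z_1)M_\lambda(z_2)}{1-\lambda^2\llangle M_\lambda(z_1)M_\lambda(z_2)\rrangle}
=\frac{M_\lambda(z_1)-M_\lambda(z_2)}{z_1-z_2}
=\frac{1}{\pi}\int_\R\frac{\Im M_\lambda(x)}{(x-z_1)(x-z_2)}\,\rd x,
\]
which completely removes the singular denominator in favor of a divided difference (and then a Stieltjes integral). The analogous identity for $M_0(\cdot_\lambda)$ shows that the correct replacement for the denominator is $(z_1-z_2)/(z_{1,\lambda}-z_{2,\lambda})$ with $z_{j,\lambda}=z_j+\lambda^2 m_0(E_0)$ (or conjugate), after which the $z_1,z_2$ contour integrals are again computable by residues, producing the kernel $K_{\lambda,t}$ in \eqref{eq:kernel} directly. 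Your plan to ``expand the denominator around its singularity'' and extract the Lorentzian profile would eventually arrive at the same structure, but the MDE divided-difference identity gives it for free and makes the replacement error (Lemma~\ref{lemma:Fsing_rep}) a bounded perturbation rather than a singular one. This is precisely the obstacle you flagged as the main difficulty; the identity dissolves it.
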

Plugging \eqref{eq:regterm} and~\eqref{eq:singterm} into \eqref{eq:terms}, we immediately conclude Theorem \ref{thm:main}~(a) after setting $\cE_0 := \cE_{\rm reg} + \cE_{\rm sing}$ and including the error term from \eqref{eq:terms} into $\cE$. \qed

It thus remains to give the proof of Proposition \ref{prop:regsing}, i.e.~its two parts \eqref{eq:regterm} and \eqref{eq:singterm}. This is done in Sections \ref{subsubsec:regterm} and \ref{subsubsec:singterm}, respectively. 

\subsubsection{Proof of \eqref{eq:regterm}} \label{subsubsec:regterm}
The main contribution to the integral in \eqref{eq:regterm} comes from the regime\footnote{Recall that the flat pieces of the contours $\gamma_1$ and $\gamma_2$ from Figure \ref{fig:contours} lie on the lower and upper half plane, respectively.} where $z_1$ and $z_2$ are close to $E_0$. Hence, as a first approximation we use the replacements $\llangle M_\lambda(z_1) \rrangle \approx \overline{m_0(E_0)}$ and $\llangle M_\lambda(z_2) \rrangle \approx {m_0(E_0)}$ in \eqref{eq:MDE}, which leads to
\begin{equation} \label{eq:Mapprox}
M_\lambda(z_1)  \approx \frac{1}{H_0 -z_1 - \lambda^2 \overline{m_0(E_0)}} \qquad \text{and} \qquad M_\lambda(z_2)  \approx \frac{1}{H_0 -z_2 - \lambda^2 {m_0(E_0)}}\,.
\end{equation}
{Applying the replacements in \eqref{eq:Mapprox} for the term in~\eqref{eq:regterm} yields}
\begin{equation} \label{eq:regcompute}
\begin{split}
	&\frac{1}{(2 \pi \ii)^2 }\oint_{\gamma_1}\oint_{\gamma_2} \ee^{\ii t (z_1 - z_2)}\Tr \left[\frac{1}{H_0 -z_1 - \lambda^2 \overline{m_0(E_0)}}A\frac{1}{H_0 -z_2 - \lambda^2 {m_0(E_0)}}P \right] \rd z_1\rd z_2 \\[2mm]
	= &\Tr \left[ \ee^{\ii t(H_0 - \lambda^2 \overline{m_0(E_0)})}A \ee^{- \ii t(H_0 - \lambda^2 {m_0(E_0)})} P \right] \\[2mm]
	= &\ee^{-2 \Im m_0(E_0)\lambda^2 t} \Tr \left[ \ee^{\ii tH_0 }A \ee^{- \ii tH_0 } P \right] = \ee^{-2 \alpha\lambda^2 t} \,  \langle A \rangle_{P_0(t)},
\end{split}
\end{equation}
since $\Im m_0(E_0) = \pi \rho_0(E_0)$, from simple residue calculus for $\lambda>0$ small enough, using that $|m_0(E_0)| \lesssim 1$ and $\gamma_1, \gamma_2$ encircle the spectrum of $H_0$. We have thus extracted the main term in \eqref{eq:regterm}, and it remains to estimate the errors resulting from the replacements in \eqref{eq:Mapprox}. 

Recall (see \eqref{eq:specdecH0mainres}) that $\mu_j$ and  $\bm u_j$ are the eigenvalues and the respective orthonormalized eigenvectors of $H_0$, i.e.
\begin{equation} \label{eq:specdecH0}
H_0 = \sum_j \mu_j \ket{\bm u_j} \bra{\bm u_j} \,. 
\end{equation}
Then, by means of~\eqref{eq:MDE}, spectral decomposition \eqref{eq:specdecH0} of $H_0$ and using Assumption \ref{ass:stateandobs}~(i) together with $[H_0, \Pi_\Delta] = 0$ and $\Pi_\Delta^2 = \Pi_\Delta$, we have that
\begin{equation} \label{eq:regtermspecdec}
\begin{split}
\text{lhs. of \eqref{eq:regterm}} = \Tr \big[\widetilde{\Theta}_1 A \widetilde{\Theta}_2 P\big] = \sum_{\mu_i,\mu_j \in I_\Delta} \langle \bm u_i, A \bm u_j \rangle \langle \bm u_j, P \bm u_i \rangle \widetilde{\vartheta}_{1}(i) \widetilde{\vartheta}_{2}(j) \,,
\end{split}
\end{equation}
where we denoted 
\begin{equation} \label{eq:specdecPsi}
\widetilde{\Theta}_1 := \sum_{\mu_j\in I_\Delta} \ket{\bm u_j} \bra{\bm u_j} \widetilde{\vartheta}_1(j) \quad \text{and} \quad \widetilde{\Theta}_2 := \sum_{\mu_j\in I_\Delta} \ket{\bm u_j} \bra{\bm u_j} \widetilde{\vartheta}_2(j) 
\end{equation}
with 
\begin{equation} \label{eq:thetadef}
\widetilde{\vartheta}_{1}(j) := \frac{1}{2 \pi \ii} \oint_{\gamma_1} \frac{\ee^{\ii t z_1}}{\mu_j - z_1 - \lambda^2 \llangle M_\lambda(z_1) \rrangle} \rd z_1\,, \quad \widetilde{\vartheta}_{2}(j) := \frac{1}{2 \pi \ii} \oint_{\gamma_2} \frac{\ee^{-\ii t z_2}}{\mu_j - z_2 - \lambda^2 \llangle M_\lambda(z_2) \rrangle} \rd z_2\,.
\end{equation}
Note that, by symmetry of the contours $\gamma_1$ and $\gamma_2$, we have that $\overline{\widetilde{\vartheta}_{1}(j)} = \widetilde{\vartheta}_{2}(j)$ and $\widetilde{\Theta}_1^* = \widetilde{\Theta}_2$. 

The key to approximating \eqref{eq:regtermspecdec} is the following lemma, whose proof is given at the end of the current Section \ref{subsubsec:regterm}. 
\begin{lemma}[First replacement lemma] \label{lem:replace}
Using the above notations and assumption, denote 
\begin{equation*} 
	\Theta_1 := \sum_{\mu_j \in I_\Delta} \ket{\bm u_j} \bra{\bm u_j} \vartheta_1(j) \quad \text{with} \quad \vartheta_{1}(j) := \frac{1}{2 \pi \ii} \oint_{\gamma_1} \frac{\ee^{\ii t z_1}}{\mu_j - z_1 - \lambda^2 \overline{m_0(E_0)}} \rd z_1
\end{equation*}
and $\Theta_2 := \Theta_1^*$ via $\vartheta_{2}(j) := \overline{\vartheta_{1}(j)}$, analogously to \eqref{eq:specdecPsi} and~\eqref{eq:thetadef}. Then it holds that
\begin{equation} \label{eq:replacereg}
\sup_{\mu_i \in I_\Delta} \left| \widetilde{\vartheta}_{1}(i) - \vartheta_1(i) \right| + \sup_{\mu_j \in I_\Delta} \left| \widetilde{\vartheta}_{2}(j) - \vartheta_2(j) \right| \lesssim \widetilde{\cE}_{\rm reg}
\end{equation}
for sufficiently small $\lambda > 0$ and $N$ large enough (dependent on $\lambda$, cf.~Lemma \ref{lemma:M_bounds}). Here, recalling \eqref{eq:rho0} for the definition of $\epsilon_0$, 
we denoted
\begin{equation} \label{eq:Ereg}
\widetilde{\cE}_{\rm reg} := \lambda^2 t \, \Delta + \lambda \,  (1 + \lambda^2 t)+ \frac{\lambda}{\Delta}\left(1 + \frac{\lambda}{\Delta}\right)+  \lambda^2 t \, \epsilon_0\,. 
\end{equation}
\end{lemma}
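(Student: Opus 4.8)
The plan is to bound the difference between $\widetilde{\vartheta}_1(i)$ and $\vartheta_1(i)$ by writing both as contour integrals over $\gamma_1$ of rational functions of $z_1$ that differ only in the ``self-energy'' term appearing in the denominator: $\lambda^2 \llangle M_\lambda(z_1)\rrangle$ versus $\lambda^2 \overline{m_0(E_0)}$. Subtracting the two integrands and using the elementary resolvent identity $\frac{1}{a} - \frac{1}{b} = \frac{b-a}{ab}$, one gets
\begin{equation*}
\widetilde{\vartheta}_1(i) - \vartheta_1(i) = \frac{\lambda^2}{2\pi\ii}\oint_{\gamma_1} \ee^{\ii t z_1}\, \frac{\llangle M_\lambda(z_1)\rrangle - \overline{m_0(E_0)}}{(\mu_i - z_1 - \lambda^2\llangle M_\lambda(z_1)\rrangle)(\mu_i - z_1 - \lambda^2\overline{m_0(E_0)})}\, \rd z_1\,.
\end{equation*}
So the whole lemma reduces to (i) controlling the numerator $\llangle M_\lambda(z_1)\rrangle - \overline{m_0(E_0)}$ along the contour, and (ii) controlling the two denominators from below, then (iii) integrating these bounds against $|\ee^{\ii t z_1}|$ over $\gamma_1$, keeping track of the lengths of the pieces and the exponential damping.

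For step (i): the curved (large-radius) part of $\gamma_1$ contributes negligibly because $|\ee^{\ii tz_1}|$ is exponentially small there and the integrand decays like $R^{-2}$; the flat part sits at distance $\eta = t^{-1}$ below $\R$, parametrized by $z_1 = x - \ii t^{-1}$ with $x$ ranging over an $O(1)$ interval. On this flat part I would estimate $|\llangle M_\lambda(z_1)\rrangle - \overline{m_0(E_0)}|$ by the triangle inequality through two intermediate quantities: first $|\llangle M_\lambda(z_1)\rrangle - m_0(z_1)|$, which is $O(\lambda^2) + O(\epsilon_0)$ by a perturbative/fixed-point analysis of the MDE \eqref{eq:MDE} versus the scalar equation for $m_0$ together with Assumption \ref{ass:H0}(i) (this is presumably what Lemma \ref{lemma:M_bounds}/Lemma \ref{lemma:M_bounds} referenced in the statement provides — I would invoke it); and second $|m_0(z_1) - \overline{m_0(E_0)}| = |m_0(x - \ii t^{-1}) - m_0(E_0 - \ii 0)|$, which is $O(|x - E_0|) + O(t^{-1})$ by the $C^{1,1}$ regularity of $\rho_0$ near $E_0$ guaranteed by $E_0 \in \sigma_{\mathrm{adm}}^{(\kappa_0,c_0)}$, since $E_0$ is in the bulk and $m_0$ is Lipschitz there. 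Because we only integrate over $x$ at distance $\lesssim \Delta$ from $E_0$ — wait, actually $x$ ranges over the whole flat contour, not just $I_\Delta$ — the localization to $I_\Delta$ enters through the $\mu_i \in I_\Delta$ constraint in the denominator, and the region of $x$ far from $I_\Delta$ has to be handled by the lower bound on the denominator. This splitting, near vs.\ far from $E_0$, is the bookkeeping core of the argument and produces the four terms in \eqref{eq:Ereg}: $\lambda^2 t\,\Delta$ from the $|x - E_0| \lesssim \Delta$ piece times the $\lambda^2$ prefactor times the $\int \rd x$ with the $\ee^{\ii tz}$ not yet decaying, $\lambda(1 + \lambda^2 t)$ and $\frac{\lambda}{\Delta}(1 + \frac{\lambda}{\Delta})$ from the $O(\lambda^2)$ self-energy piece together with near/far denominator estimates, and $\lambda^2 t\,\epsilon_0$ from the $\epsilon_0$ error in \eqref{eq:rho0}.

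For step (ii), the denominator lower bound: I need $|\mu_i - z_1 - \lambda^2\overline{m_0(E_0)}| \gtrsim \Im(z_1) + \lambda^2\Im m_0(E_0) \sim t^{-1} + \lambda^2\rho_0(E_0)$ whenever $\Re(z_1)$ is near $\mu_i$ (this uses $\Im m_0(E_0) = \pi\rho_0(E_0) > 0$, i.e.\ $E_0$ admissible), and $|\mu_i - z_1 - \lambda^2\overline{m_0(E_0)}| \gtrsim |\Re z_1 - \mu_i|$ when $\Re z_1$ is far from $\mu_i$; the same bound holds for the $\widetilde{\vartheta}$ denominator since $\Im\llangle M_\lambda(z_1)\rrangle \geq 0$ for $z_1$ in the lower half-plane with the correct sign convention (careful: $\gamma_1$ is in the \emph{lower} half-plane, so I must track signs — $\Im M_\lambda(z)\Im z > 0$ gives the right positivity). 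Combining the numerator bound of order $\lambda^2(|\cdot| + \ldots)$ with the product of two denominators each $\gtrsim (t^{-1} + \lambda^2) + |\Re z_1 - \mu_i|$ and integrating $\int |\ee^{\ii tz_1}|\,\rd x$ (which is $O(1)$ thanks to the $t^{-1}$ offset contributing an $O(1)$ factor, or more precisely the $\int \frac{\rd x}{(t^{-1} + |x - \mu_i|)^2} \sim t$ which pairs with the $\lambda^2$ to give $\lambda^2 t$) yields \eqref{eq:Ereg}.

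The main obstacle I expect is the careful orchestration of these estimates so that every contribution is genuinely bounded by one of the four explicit terms in $\widetilde{\cE}_{\rm reg}$ with the stated powers of $\lambda$, $\Delta$, $t$, $\epsilon_0$ — in particular getting the $\frac{\lambda}{\Delta}(1 + \frac{\lambda}{\Delta})$ term right, which presumably comes from a region where the spectral parameter is at distance $\sim \Delta$ from $E_0$ so that the denominator gives a gain of $\Delta$ but the price is that one cannot use the flat-contour smallness, combined with the fact that $\mu_i \in I_\Delta$ forces $|\Re z_1 - \mu_i|$ and $\Delta$ to interact. The verification that all of this indeed vanishes in the triple limit \eqref{eq:triplelimvanish} — i.e.\ first $N \to \infty$ (kills $\epsilon_0$), then $\lambda^2 t = T$ fixed with $\lambda \to 0$ (so $\lambda^2 t = T$ stays bounded, $\lambda \to 0$, $\lambda^2 t\,\Delta = T\Delta \to 0$ after, $\lambda(1 + T) \to 0$, $\frac{\lambda}{\Delta} \to 0$ requires $\lambda \ll \Delta$, which is fine since $\Delta$ is sent to zero \emph{last}), then $\Delta \to 0$ — is a sanity check rather than a difficulty, but it does pin down why the order of limits must be exactly as in \eqref{eq:triplelimvanish}. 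The deduction of $\sup_{\mu_j \in I_\Delta}|\widetilde{\vartheta}_2(j) - \vartheta_2(j)| \lesssim \widetilde{\cE}_{\rm reg}$ is then immediate by complex conjugation, since $\overline{\widetilde{\vartheta}_1(j)} = \widetilde{\vartheta}_2(j)$ and $\overline{\vartheta_1(j)} = \vartheta_2(j)$ by the contour symmetry noted after \eqref{eq:thetadef}.
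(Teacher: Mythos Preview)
Your approach matches the paper's: form the difference via the resolvent identity, split the flat part of $\gamma_1$ into a near piece ($\Re z \in I_{2\Delta}$) and a far piece, and treat the arc separately; on the near piece the paper bounds $|m_\lambda(z)-m_0(E_0)|\lesssim 1/t+\lambda+\Delta+\epsilon_0$ via Lemma~\ref{lemma:M_bounds} and then applies Cauchy--Schwarz to the integral (giving $\lambda^2 t(1/t+\lambda+\Delta+\epsilon_0)$), while on the far horizontal piece it uses only the crude bound $|m_\lambda|\le\lambda^{-1}$ together with $\mathrm{dist}(\mu_j,\Gamma_2)\gtrsim\Delta$, which is precisely where the $\frac{\lambda}{\Delta}(1+\frac{\lambda}{\Delta})$ term originates. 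Your attribution of the four error terms is slightly off in places (the MDE perturbation is $O(\lambda)$, not $O(\lambda^2)$, and the exponential is merely bounded---not small---on part of the arc), but the strategy and all essential ingredients are correct.
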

Therefore, by writing $\widetilde{\Theta} = \Theta + (\widetilde{\Theta} - \Theta)$ in \eqref{eq:regtermspecdec}, we find the lhs.~of \eqref{eq:regterm} to be given by
\begin{equation} \label{eq:regfourterms}
 \Tr \big[\Theta_1  A {\Theta}_2 P\big] + \Tr \big[(\widetilde{\Theta}_1- \Theta_1)  A \Theta_2 P\big] + \Tr \big[\Theta_1  A (\widetilde{\Theta}_2-\Theta_2) P\big] + \Tr \big[(\widetilde{\Theta}_1 - {\Theta}_1)  A (\widetilde{\Theta}_2-{\Theta}_2) P\big]\,. 
\end{equation}
The first term in \eqref{eq:regfourterms} precisely yields the result of \eqref{eq:regcompute} using Assumption \ref{ass:stateandobs}~(i). Using $\Vert A \Vert \lesssim 1$ and $\Tr[P] = 1$, the second and third
term in \eqref{eq:regfourterms} can be estimated by (a constant times)
\begin{equation*}
 \Vert \Theta_1 \Vert \,  \Vert \widetilde{\Theta}_2 - \Theta_2 \Vert +    \Vert \widetilde{\Theta}_1 - \Theta_1 \Vert \, \Vert \Theta_2 \Vert \lesssim \widetilde{\cE}_{\rm reg}\,. 
\end{equation*}
Here we used \eqref{eq:replacereg} and that  $\Vert \Theta_1 \Vert \le 1$ and $ \Vert \Theta_2 \Vert \le 1$
 as follows by the explicit expressions
\begin{equation*}
\Theta_1 = \ee^{\ii t(\Pi_\Delta H_0 \Pi_\Delta - \lambda^2 \overline{m_0(E_0)})} \quad \text{and} \quad \Theta_2 = \ee^{-\ii t(\Pi_\Delta H_0 \Pi_\Delta - \lambda^2 {m_0(E_0)})}\, 
\end{equation*}
and $\Im m_0(E_0)\ge 0$.
Similarly, applying \eqref{eq:replacereg} again, the fourth term in \eqref{eq:regfourterms} is bounded by $\mathcal{O}(\widetilde{\cE}_{\rm reg}^2)$. Collecting all four terms of \eqref{eq:Ereg}, this concludes the proof of \eqref{eq:regterm} with 
\begin{equation} \label{eq:Eregnew}
\cE_{\rm reg}:= \widetilde{\cE}_{\rm reg} + \widetilde{\cE}_{\rm reg}^2\,. 
\end{equation}

It remains to give the proof of Lemma \ref{lem:replace}.
\begin{proof}[Proof of Lemma \ref{lem:replace}] Since $\overline{\widetilde{\vartheta}_{1}(j)} = \widetilde{\vartheta}_{2}(j)$ and $\vartheta_{2}(j) := \overline{\vartheta_{1}(j)}$, we only estimate $\widetilde{\vartheta}_2(j)- \vartheta_2(j)$
 for arbitrary but fixed index $j$ such that  $\mu_j \in I_\Delta$. Moreover, for ease of notation, we completely drop the subscript $2$. 
	
As a first step, we split the contour into three parts: 
\begin{equation} \label{eq:contourdecomp}
\gamma  =\Gamma_1 \,  \dot{+} \, \Gamma_2 \,  \dot{+}\, \Gamma_3\,,
\end{equation}
where $\Gamma_1$ is the horizontal part of $\gamma$ with $\Re z \in I_{\Delc}$, $\Gamma_2$ is the horizontal part of $\gamma$ with $\Re z \notin I_{\Delc}$ and $\Gamma_3$ consists of the great arc of radius $R$ (cf.~Figure \ref{fig:contours}). We now estimate these three parts separately. 

For the first part, $\Gamma_1$, we have that (using the notation $m_\lambda(z) = \llangle M_{\lambda}(z) \rrangle$ from Lemma \ref{lemma:M_bounds})\footnote{Here and in the following, $|\rd z|$ denotes the total variation of the complex measure $\rd z$.}
\begin{equation} \label{eq:part1reg}
	\begin{split}
&\left| \int_{\Gamma_1}  \ee^{-\ii t z}  \left[\frac{1}{\mu_j - z - \lambda^2 m_{\lambda}(z)} - \frac{1}{\mu_j - z - \lambda^2
	 m_0(E_0)}\right] \rd z\right| \\
& \qquad \lesssim \int_{\Gamma_1} \frac{\lambda^2\,  \big(1/t + \lambda + \Delta + \epsilon_0\big)}{|\mu_j - z-\lambda^2 m_\lambda(z)| \, |\mu_j -z - \lambda^2 m_0(E_0)| } |\rd z| \lesssim \lambda^2t \,  \big(1/t + \lambda + \Delta + \epsilon_0\big)\,, 
	\end{split}
\end{equation}
uniformly in $\mu_j \in I_\Delta$. 
To go to the second line, we used that $|m_\lambda(z) - m_0(E_0)| \lesssim 1/t + \lambda + \Delta + \epsilon_0$. This follows by adding and subtracting $m_\lambda(E_0)$ and using $|m_\lambda(z) - m_\lambda(E_0)| \lesssim \Delta +1/t$ (using $|m_\lambda'(z)| \lesssim 1$ for $\Re z \in I_{2 \Delta}$; cf.~the last estimate in \eqref{eq:trM_local} from Lemma~\ref{lemma:M_bounds}) and $|m_\lambda(E_0) - m_0(E_0)| \lesssim \lambda + \epsilon_0$ (using that \eqref{eq:trM_est} holds down to the real line by combining it with \eqref{eq:Imm_lambdalower}).  For the final bound, we employed a Schwarz inequality for the integral and estimated the resulting integrals 
$$
  \int_{\Gamma_1} \frac{|\rd z|}{ |\mu_j - z-\lambda^2 m_\lambda(z)|^2} \lesssim (1 + \lambda^2) \, t \lesssim t\quad \text{and} \quad    \int_{\Gamma_1} \frac{|\rd z|}{ |\mu_j - z-\lambda^2 m_0(E_0)|^2} \lesssim t\,,
$$
by a change of variables $z \to z + \lambda^2 m_\lambda(z)$ using that $|m_\lambda'(z)| \lesssim 1$ for $z \in \Gamma_1$ by means of \eqref{eq:trM_local} together with $|\Im [z+\lambda^2 m_\lambda(z)] |\ge t^{-1}$, and $|m_0(E_0)| \lesssim 1$ together with 
$|\Im [z + \lambda^2m_0(E_0)] | \ge t^{-1}$,  respectively.

We now turn to the second part, i.e.~the integral similar to the left-hand side of ~\eqref{eq:part1reg}
but on the contour $\Gamma_2$. 
By means of $|m_0(E_0)| \lesssim 1$ and $|m_\lambda(z)| \le \lambda^{-1}$ (see the first estimate in \eqref{eq:trM_bounds}) we bound $|m_\lambda(z) - m_0(E_0)| \lesssim \lambda^{-1}$. Using $|m_0(E_0)| \lesssim 1$ and $|m_\lambda(z)| \le \lambda^{-1}$ again, together with $\mathrm{dist}(\mu_j, \Gamma_2) \gtrsim \Delta$, we find this second part to be bounded by (a constant times)
\begin{equation} \label{eq:part2reg}
\int_{\Gamma_2} \frac{\lambda}{|\mu_j - z|^2} |\rd z| \, \left(1 + \frac{\lambda}{\Delta}\right) \lesssim \frac{\lambda}{\Delta} \, \left(1 + \frac{\lambda}{\Delta}\right)\,, 
\end{equation}
again uniformly in $\mu_j \in I_\Delta$. 

Finally, we estimate the third part. By the exact same reasoning as for $\Gamma_2$, we arrive at the bound \eqref{eq:part2reg} with $\Delta$ replaced by $R$ and $\Gamma_2$ replaced by $\Gamma_3$. Hence, using that the radius $R$ of the semicircle is larger than one (see Figure \ref{fig:contours}),
 we find the third part to be bounded by $\mathcal{O}(\lambda/R)$,
 uniformly in $\mu_j \in I_\Delta$. 

Combining this with the error terms in \eqref{eq:part1reg} and~\eqref{eq:part2reg}, this concludes the proof. 
\end{proof}

\subsubsection{Proof of \eqref{eq:singterm}} \label{subsubsec:singterm}
Recall that $\tilsing$ denotes the singular term defined in \eqref{eq:singterm}. 
To carry out the analog of the approximation \eqref{eq:Mapprox}, we observe that the resolvent identity for $H_0$ implies
\begin{equation} \label{eq:M0_diff}
	\frac{z_{1,\lambda}-z_{2,\lambda}}{z_1-z_2}M_0(z_{1,\lambda})M_0(z_{2,\lambda})= \frac{M_0(z_{1,\lambda})-M_0(z_{2,\lambda})}{z_1-z_2} = \frac{1}{\pi}\int_\R \frac{\Im M_0(x_\lambda)}{(x-z_1)(x-z_2)}\mathrm{d}x,
\end{equation}
where we introduce the notation $z_{1,\lambda} := z_1+\lambda^2\overline{m_0(E_0)}$, $z_{2,\lambda} := z_2+\lambda^2m_0(E_0)$, and  $x_\lambda := x + \lambda^2m_0(E_0)$. 
Here, the second equality follows from the contour representation
of the resolvent $M_0$ of $H_0$, namely
\begin{equation}
	M_0(z) = \frac{1}{\pi}\int_\R \frac{\Im M_0(x+\ii\eta)}{x+\ii\eta-z}\mathrm{d}x, \quad \Im z >\eta>0.
\end{equation}
On the other hand, subtracting two instances of the MDE \eqref{eq:MDE} yields
\begin{equation} \label{eq:Mlambda_diff}
	\frac{M_\lambda(z_1)M_\lambda(z_2)}{1-\lambda^2\llangle M_\lambda(z_1)M_\lambda(z_2)\rrangle} = \frac{M_\lambda(z_1)-M_\lambda(z_2)}{z_1-z_2} = \frac{1}{\pi}\int_\R \frac{\Im M_\lambda(x)}{(x-z_1)(x-z_2)}\mathrm{d}x,
\end{equation}
where we used the matrix-valued analog of the Stieltjes representation for $M_\lambda(z)$ (cf.~\cite[Prop.~2.1]{AEK2020}),
\begin{equation}\label{eq:wStielt}
	M_\lambda(z)=\frac{1}{\pi}\int_\R\frac{\Im M_\lambda(x)}{x-z}\rd x, \quad z \in \C\setminus \R\,.
\end{equation}

In particular, identities \eqref{eq:M0_diff} and \eqref{eq:Mlambda_diff} suggest that the appropriate approximation for the factor $1-\lambda^2\llangle M_\lambda(z_1)M_\lambda(z_2)\rrangle$ in the denominator of \eqref{eq:singterm} is $(z_1-z_2)(z_{1,\lambda}-z_{2,\lambda})^{-1}$. Indeed, we prove that the following estimate holds.
\begin{lemma} \label{lemma:Fsing_rep}
	Under the Assumption \ref{ass:H0} and \ref{ass:stateandobs}, the singular term $\tilsing$ defined in \eqref{eq:singterm} satisfies
	\begin{equation} \label{eq:Fsing_replace}
		\bigl\lvert\tilsing - \sing\bigr\rvert \lesssim \mathcal{E}_{\mathrm{sing}} := (\Delta +\epsilon_0)(1+\lambda^2t)+ \lambda\bigl(1+\lambda^2t+\Delta^{-1}\log t\bigr)^2,
	\end{equation} 
	where the quantity $\sing$ is given by
	\begin{equation} \label{eq:Fsing_def}
		\begin{split}
			\sing := \frac{\lambda^2}{(2 \pi \ii)^2 }\oint_{\gamma_1}\oint_{\gamma_2}  &\ee^{\ii t (z_1 - z_2)} \frac{z_{1,\lambda}-z_{2,\lambda}}{z_1-z_2}\\&\times\llangle[\bigl]M_0(z_{1,\lambda})AM_0(_{2,\lambda})\rrangle \Tr \big[M_0(z_{1,\lambda})PM_0(z_{2,\lambda})\big] \rd z_1\rd z_2.
		\end{split}
	\end{equation}
\end{lemma}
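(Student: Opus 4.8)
The plan is to prove Lemma~\ref{lemma:Fsing_rep} by replacing, step by step, the three $M_\lambda$-factors in the integrand of $\tilsing$ by their $M_0$-counterparts evaluated at the shifted spectral parameters $z_{1,\lambda}$, $z_{2,\lambda}$, and controlling each replacement error on the three pieces of the contour $\gamma = \Gamma_1 \,\dot{+}\, \Gamma_2 \,\dot{+}\, \Gamma_3$ as in the proof of Lemma~\ref{lem:replace}. The key identity is that, in the relevant regime where $\Re z_1, \Re z_2$ are near $E_0$, the MDE~\eqref{eq:MDE} gives $M_\lambda(z_i) \approx M_0(z_{i,\lambda})$ because $\lambda^2 \llangle M_\lambda(z_i)\rrangle \approx \lambda^2 m_0(E_0)$ (resp.\ its conjugate on $\gamma_1$), with the correction governed by $|m_\lambda(z_i) - m_0(E_0)| \lesssim 1/t + \lambda + \Delta + \epsilon_0$, exactly the bound established inside the proof of Lemma~\ref{lem:replace}. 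Correspondingly, for the denominator, identities~\eqref{eq:M0_diff} and~\eqref{eq:Mlambda_diff} show that $\tfrac{M_\lambda(z_1)M_\lambda(z_2)}{1-\lambda^2\llangle M_\lambda(z_1)M_\lambda(z_2)\rrangle}$ and $\tfrac{z_{1,\lambda}-z_{2,\lambda}}{z_1-z_2}M_0(z_{1,\lambda})M_0(z_{2,\lambda})$ are both equal to a Stieltjes-type integral $\tfrac1\pi\int_\R \tfrac{\Im M_\bullet}{(x-z_1)(x-z_2)}\,\rd x$, so their difference is controlled by $\Vert \Im M_\lambda(x) - \Im M_0(x_\lambda)\Vert$ integrated against $|(x-z_1)(x-z_2)|^{-1}$.

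First I would split $\tilsing - \sing$ telescopically into three terms, each differing by a single factor: (i) replace $M_\lambda(z_1) \to M_0(z_{1,\lambda})$ keeping the other two factors as in $\tilsing$; (ii) then replace $M_\lambda(z_2) \to M_0(z_{2,\lambda})$; (iii) finally replace the denominator factor $\bigl(1-\lambda^2\llangle M_\lambda(z_1)M_\lambda(z_2)\rrangle\bigr)^{-1}$ by $\tfrac{z_{1,\lambda}-z_{2,\lambda}}{z_1-z_2}$, using~\eqref{eq:M0_diff}. For each term I perform the double contour integral and bound it. On $\Gamma_1 \times \Gamma_1$ (both real parts in $I_{2\Delta}$), the resolvent-type factors are integrated against $\ee^{\ii t(z_1-z_2)}$, and after a change of variables $z_i \to z_i + \lambda^2 m_\bullet(z_i)$ (valid since $|m_\bullet'| \lesssim 1$ there by~\eqref{eq:trM_local}) one gets $\int_{\Gamma_1}|\mu-z-\lambda^2 m_\bullet(z)|^{-2}|\rd z| \lesssim t$ and $\int_{\Gamma_1}|x-z|^{-2}|\rd z| \lesssim t$ type bounds; combined with the factor $|m_\lambda - m_0(E_0)| \lesssim 1/t + \lambda + \Delta + \epsilon_0$ and the extra $\lambda^2$ from the MDE correction, this produces the $(\Delta + \epsilon_0)(1 + \lambda^2 t)$ part and contributions of size $\lambda(1+\lambda^2 t)$. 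One subtlety: because two resolvent factors share the same contour variable, integrating $|x - z|^{-2}$ over the horizontal segment of length $\sim\Delta$ at height $t^{-1}$ gives a factor $\log t$ (not just $O(t)$ on one of them after a Schwarz step), which is the origin of the $\Delta^{-1}\log t$ inside the squared bracket in $\mathcal{E}_{\mathrm{sing}}$. On $\Gamma_2$ (real part outside $I_{2\Delta}$, so $\mathrm{dist}(\cdot,\mathrm{spec}) \gtrsim \Delta$) and on the great arc $\Gamma_3$ (distance $\gtrsim R \gtrsim 1$), I use the crude bounds $|m_\lambda| \le \lambda^{-1}$ and $|m_0(E_0)| \lesssim 1$ from Lemma~\ref{lemma:M_bounds}, giving a replacement error $\lesssim \lambda(1 + \lambda\Delta^{-1})$ per factor, squared because several such crude replacements can compound — this is the source of the squared factor $\lambda(1+\cdots)^2$ in~\eqref{eq:Fsing_replace}.

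The main obstacle I anticipate is organizing the denominator replacement in step~(iii) so that the resulting error is genuinely of the claimed order rather than, say, a naive $\lambda^2 \Vert M_\lambda \Vert \Vert M_0 \Vert |1-\lambda^2\llangle\cdots\rrangle|^{-2}$ estimate, which would blow up as the contours approach the spectrum. The clean way around this is to never invert the denominator directly: instead use the exact rewritings~\eqref{eq:M0_diff}--\eqref{eq:Mlambda_diff} to turn both ratios into Stieltjes integrals of $\Im M_\bullet/((x-z_1)(x-z_2))$, so that the difference is $\tfrac1\pi\int_\R \tfrac{\Im M_\lambda(x) - \Im M_0(x_\lambda)}{(x-z_1)(x-z_2)}\,\rd x$; one then estimates $\Vert \Im M_\lambda(x) - \Im M_0(x_\lambda)\Vert$ for $x$ near $E_0$ via the MDE stability (Lemma~\ref{lemma:M_bounds}) and for $x$ far away via the crude $\lambda^{-1}$ bound, and carries out the $z_1, z_2$ contour integrals last. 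A secondary bookkeeping difficulty is keeping track of which errors are already squared (the $\lambda(1+\lambda^2 t + \Delta^{-1}\log t)^2$ term) versus linear (the $(\Delta+\epsilon_0)(1+\lambda^2 t)$ term); I would handle this by first reducing to the contribution of $\Gamma_1 \times \Gamma_1$ plus an $O(\lambda(1+\lambda\Delta^{-1})^2)$ remainder from $\Gamma_2, \Gamma_3$, and then analyzing the $\Gamma_1 \times \Gamma_1$ part with the sharp oscillatory estimates. Finally I verify that $\mathcal{E}_{\mathrm{sing}}$ satisfies the triple-limit vanishing~\eqref{eq:triplelimvanish}: with $\lambda^2 t = T$ fixed, $\lambda^2 t\,\Delta \to 0$, $\lambda(1+\lambda^2 t) = \lambda(1+T) \to 0$, $\epsilon_0(1+T)\to 0$, and $\lambda\,\Delta^{-2}\log^2 t \to 0$ since $\log t \sim \log(T\lambda^{-2}) = O(\log \lambda^{-1})$ is beaten by any power of $\lambda$ once $\Delta$ is fixed before $\lambda \to 0$.
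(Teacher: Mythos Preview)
Your telescopic plan has the right ingredients but step~(iii), as you describe it, does not close. After steps~(i) and~(ii) all $M_\lambda$'s in the two traces have become $M_0(z_{i,\lambda})$, and what remains is the bare \emph{scalar} $(1-\lambda^2\llangle M_\lambda(z_1)M_\lambda(z_2)\rrangle)^{-1}$. The identities \eqref{eq:M0_diff}--\eqref{eq:Mlambda_diff} are \emph{matrix} identities: they express $\tfrac{M_\lambda M_\lambda}{1-\lambda^2\llangle\cdots\rrangle}$ and $\tfrac{z_{1,\lambda}-z_{2,\lambda}}{z_1-z_2}M_0 M_0$ as Stieltjes integrals, not the scalar prefactors alone, so once the numerator $M$'s are already $M_0$'s there is no $M_\lambda M_\lambda$ left to absorb the denominator into. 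If instead you compute the scalar difference directly from the traced version of \eqref{eq:Mlambda_diff}, you get
\[
\frac{1}{1-\lambda^2\llangle M_\lambda M_\lambda\rrangle} - \frac{z_{1,\lambda}-z_{2,\lambda}}{z_1-z_2}
= \frac{\lambda^2\bigl[(m_\lambda(z_1)-\overline{m_0(E_0)}) - (m_\lambda(z_2)-m_0(E_0))\bigr]}{z_1-z_2},
\]
which carries a $\tfrac{1}{z_1-z_2}$ singularity coupling the two contours; you do not explain how to integrate this against the product of four resolvent factors and still recover $\mathcal{E}_{\mathrm{sing}}$.

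The paper sidesteps this by using \eqref{eq:Mlambda_diff} \emph{before} any replacement, absorbing the denominator exactly into the $A$-trace:
\[
\tilsing = \sum_{j,k}\frac{\A_j\B_k}{N}\,\frac{1}{\pi}\int_\R \Im \widetilde{\nu}_j(x)\,\lambda^2\bigl\lvert \widetilde{\omega}_k(x)\bigr\rvert^2\,\rd x,
\qquad \widetilde{\omega}_k(x):=\frac{1}{2\pi\ii}\oint_{\gamma_2}\frac{\ee^{-\ii tz}\,\widetilde{\nu}_k(z)}{x-z}\,\rd z,
\]
with $\widetilde{\nu}_j(z)=(\mu_j-z-\lambda^2 m_\lambda(z))^{-1}$. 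The double contour integral has \emph{factorised} into a one-variable integral $\widetilde{\omega}_k$ and its conjugate. The same identity \eqref{eq:M0_diff} yields an identical formula for $\sing$ with $\widetilde{\nu}_j(x)\to(\mu_j-x_\lambda)^{-1}$ and $\widetilde{\omega}_k\to\omega_k$ (defined with $(\mu_k-z_\lambda)^{-1}$). The difference $\tilsing-\sing$ then splits into $\mathcal{E}_{\mathrm{sing},1}$ (from $\widetilde{\nu}_j\to(\mu_j-x_\lambda)^{-1}$) and $\mathcal{E}_{\mathrm{sing},2}$ (from $|\widetilde{\omega}_k|^2\to|\omega_k|^2$), each controlled by a \emph{one-dimensional} Second Replacement Lemma bounding $|\widetilde{\omega}_k-\omega_k|$ and $|\widetilde{\omega}_k|+|\omega_k|$ on $I_{3\Delta}$ and its complement. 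The $\Delta^{-1}\log t$ you anticipated enters through the $\Gamma_2\dot+\Gamma_3$ tail of $\widetilde{\omega}_k$, and the square through $|\widetilde{\omega}_k|^2-|\omega_k|^2$; the $(\Delta+\epsilon_0)$ term comes from an explicit $\int_\R\tfrac{\lambda^2}{|\mu_j-x_\lambda|^2}\tfrac{\lambda^2}{|\mu_k-x_\lambda|^2}\rd x$ bound. The organisational point you are missing is: absorb the denominator first, so the problem becomes one-dimensional.
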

We defer the proof of Lemma \ref{lemma:Fsing_rep} to the end of the current Section \ref{subsubsec:singterm}, and proceed to analyze the right-hand side of \eqref{eq:Fsing_def}.

Applying the identity \eqref{eq:M0_diff} to both traces in the integrand of \eqref{eq:Fsing_def}, we obtain the expression
\begin{equation} \label{eq:kernelproof}
	\sing = \int_\R\int_\R \llangle[\big]\Im M_0(x_\lambda)A \rrangle[\big]\Tr[\Im M_0(y_\lambda)P]F_{\lambda,t}(x,y)\rd x\rd y,
\end{equation}
where the function $F_{\lambda,t}(x,y)$ is defined as
\begin{equation} \label{eq:Flambdat}
	F_{\lambda,t}(x,y):=\frac{\lambda^2}{(2\pi\ii)^2\pi^2}\oint_{\gamma_1}\oint_{\gamma_2}\frac{ \ee^{\ii t (z_1 - z_2)} \, (z_1-z_2)}{(x-z_1)(x-z_2)(y-z_1)(y-z_2)(z_{1,\lambda}-z_{2,\lambda})}\rd z_2\rd z_1.
\end{equation}
Recall the contours $\gamma_1$ and $\gamma_2$ from Figure~\ref{fig:contours}, and that  $z_{1,\lambda}-z_{2,\lambda} = z_1-z_2-2\ii\alpha\lambda^2$.
Evaluating the contour integration over $\gamma_2$ in \eqref{eq:Flambdat} yields 
\begin{equation}\label{eq:residuesz2}
	\begin{split}
		F_{\lambda,t}(x,y) = \frac{\lambda^2}{\pi^2}\frac{1}{2\pi\ii}\oint_{\gamma_1}\biggl(&\frac{-\ee^{\ii t(z_1-x)}\chi(x)}{(x-y)(y-z_1)(x_\lambda-z_{1,\lambda})}
		+\frac{\ee^{\ii t(z_1-y)}\chi(y)}{(x-y)(x-z_1)(y_\lambda-z_{1,\lambda})}\\
		+&\frac{2\ii\alpha\lambda^2\ee^{-2\alpha\lambda^2t}}{(x-z_1)(y-z_1)(x_\lambda-z_{1,\lambda})(y_\lambda-z_{1,\lambda})}\biggr)\mathrm{d}z_1 - F_{\lambda,t}^{\mathrm{out}}(x,y),
	\end{split}
\end{equation}
where we define $\chi(z):=\mathds{1}_{\Omega_2}(z)$, and $\Omega_2$ is the compact connected component of $\C\backslash\gamma_2$, and the function $F_{\lambda,t}^{\mathrm{out}}(x,y)$ is defined as
\begin{equation}
	F_{\lambda,t}^{\mathrm{out}}(x,y) := \frac{\lambda^2}{\pi^2}\frac{1}{2\pi\ii}\oint_{\gamma_1}
	\frac{2\ii\alpha\lambda^2\ee^{-2\alpha\lambda^2t} \bigl(1-\chi(z_1-2\ii\alpha\lambda^2)\bigr)}{(x-z_1)(y-z_1)(x_\lambda-z_{1,\lambda})(y_\lambda-z_{1,\lambda})}\mathrm{d}z_1.
\end{equation}

We proceed to show that $F_{\lambda,t}^{\mathrm{out}}(x,y)$ contributes at most an $\mathcal{O}(\lambda^4(1+|\log\lambda|^2+ (\lambda^2t)^2))$ error to the right-hand side of \eqref{eq:kernelproof}.
Let $\gamma_{1,c}$ denote the set $\{z_1\in\gamma_1: z_1-2\ii\alpha\lambda^2 \notin \Omega_2\}$, then the contribution of $F_{\lambda,t}^{\mathrm{out}}(x,y)$ to the integral in \eqref{eq:kernelproof} is given by
\begin{equation}\label{eq:reserror}
\mathcal{E}_{\gamma_{1,c}}:=\frac{\alpha\lambda^4\ee^{-2\alpha\lambda^2t}}{\pi^3}\int_{\gamma_{1,c}}\int_\R\frac{\llangle[\big]\Im M_0(x_\lambda)A \rrangle[\big]\mathrm{d}x}{(x-z_1)(x_\lambda-z_{1,\lambda})}
\int_\R \frac{\Tr[\Im M_0(y_\lambda)P]\mathrm{d}y}{(y-z_1)(y_\lambda-z_{1,\lambda})}\rd z_1.
\end{equation}
Note that by choosing the radii of the contours $R\gtrsim 1$ large enough, we can assume that 
\begin{equation} \label{eq:dist_spec_gamma1c}
	\mathrm{dist}(\sigma(H_0),\gamma_{1,c})\gtrsim R.
\end{equation}
Using the spectral decomposition of $H_0$ from \eqref{eq:specdecH0}, the fact that $\int_\R \Im [(\mu_j-x_\lambda)]\mathrm{d}x = \pi$, and Assumption \ref{ass:stateandobs}, we conclude  that
\begin{equation} \label{eq:keyL1}
	\biggl\lvert\int_\R \llangle[\big]\Im M_0(x_\lambda)A\rrangle[\big]\rd x\biggr\rvert + \biggl\lvert\int_\R \Tr\bigl[\Im M_0(y_\lambda)P\bigr]\rd y\biggr\rvert \lesssim 1 + \lVert A\rVert \lesssim 1.
\end{equation}
Furthermore, the spectral decomposition for $H_0$ implies that for all $x$ with $\mathrm{dist}(x,\sigma(H_0)) \gtrsim 1$,
\begin{equation} \label{eq:keypointwise}
	\bigl\lvert\Tr\bigl[\Im M_0(x_\lambda)P\bigr]\bigr\rvert + \big|\llangle[\big]\Im M_0(x_\lambda)A \rrangle[\big]\big|\lesssim \lambda^2|x-E_0|^{-2}.
\end{equation}

Let $x_\pm$
denote the intersections of the contour $\gamma_1$ with $\R$ (see Figure \ref{fig:contours}), and define $\mathds{D}$ to be a union of two small disks around $x_\pm$,  $\mathds{D} := \{z\in \C : \min\{|z-x_-|,|z-x_+|\}\lesssim 1\}$.
Applying the Cauchy-Schwarz inequality to the $z_1$ integration in \eqref{eq:reserror}, and using the estimates \eqref{eq:dist_spec_gamma1c}-\eqref{eq:keypointwise} to bound the contribution coming from the outside of $\mathds{D}$, and applying the estimates \eqref{eq:dist_spec_gamma1c} and \eqref{eq:keypointwise} for all $x,y\in \R\cap\mathds{D}$, we obtain
\begin{equation} \label{eq:gamma1c_bound1}
	\bigl\lvert \mathcal{E}_{\gamma_{1,c}} \bigr\rvert \lesssim \lambda^4 \int_{\gamma_{1,c}\cap\mathds{D}} \biggl\lvert \frac{\lambda^2}{R^2}\int_{\R\cap\mathds{D}}\frac{\mathrm{d}x}{|x-z_1||x_\lambda-z_{1,\lambda}|} \biggr\rvert^2 |\mathrm{d}z_1| + \mathcal{O}\bigl( R^{-3}\lambda^4\bigr)\,.
\end{equation}

For $z_1$ on the horizontal linear segment of $\gamma_{1,c}\cap\mathds{D}$, we use that $\Im z_1 = -1/t$ to obtain
\begin{equation} \label{eq:line_bound}
	\frac{\lambda^2}{R^2}\int_{\R\cap\mathds{D}}\frac{\mathrm{d}x}{|x-z_1||x_\lambda-z_{1,\lambda}|} \lesssim \frac{1 + \lambda^2 t}{R^2},
\end{equation}
On the other hand, for $z_1$ lying on the circular arc parts of $\gamma_{1,c}\cap\mathds{D}$, we compute
\begin{equation} \label{eq:arc_bound}
	\frac{\lambda^2}{R^2}\int_{\R\cap\mathds{D}}\frac{\mathrm{d}x}{|x-z_1||x_\lambda-z_{1,\lambda}|} \lesssim \frac{\lambda^2}{R^2}\frac{
	|\log \lambda^2| + \bigl|\log |\eta_1|\bigr|+\bigl|\log|\eta_1 - 2\alpha\lambda^2|\bigr|}{|\eta_1| + \lambda^2}, 
\end{equation}
where $\eta_1:= \Im z_1$. Squaring the estimates \eqref{eq:line_bound} and \eqref{eq:arc_bound} and integrating them over the respective parts of $\gamma_{1,c}\cap\mathds{D}$, we conclude from \eqref{eq:gamma1c_bound1} that
\begin{equation} \label{eq:gamma_1cbound}
	\bigl\lvert \mathcal{E}_{\gamma_{1,c}} \bigr\rvert \lesssim R^{-3}\lambda^4 \bigl(1+ (\lambda^2t)^2+|\log\lambda|^2\bigr).
\end{equation}
Next, using residue calculus, we compute the first term on the right-hand side of \eqref{eq:residuesz2}, i.e., the contour integral over $\gamma_1$, to obtain
\begin{equation} \label{eq:F_approx}
	\begin{split}
		F_{\lambda,t}(x,y)=&~\frac{K_{\lambda,t}(x-y)}{\alpha\pi}-F_{\lambda,t}^{\mathrm{out}}(x,y) + \frac{K_{\lambda,t}(x-y)}{\alpha\pi}\bigl(\chi(x)\chi(y)-1\bigr)\\ &+ \frac{1}{2\pi^2}\frac{2\lambda^2\ee^{-2\alpha\lambda^2t}\bigl(\chi(x)-\chi(y)\bigr)^2}{|x-y|^2 + (2\alpha\lambda^2)^2}+ \frac{1}{\pi^2}\frac{2\ii\alpha\lambda^4\ee^{-2\alpha\lambda^2t}}{|x-y|^2+(2\alpha\lambda^2)^2}\frac{\chi(x)-\chi(y)}{x-y},
	\end{split}
\end{equation}
where $K_{\lambda,t}$ is the kernel defined in \eqref{eq:kernel}.
As we have proved above, the contribution of $F_{\lambda,t}^{\mathrm{out}}(x,y)$ to the integral in \eqref{eq:kernelproof}
admits the bound \eqref{eq:gamma_1cbound}. 
Similarly, using the estimates \eqref{eq:keyL1} and \eqref{eq:keypointwise}, it is straightforward to check that the third and fourth terms on the right-hand side of \eqref{eq:F_approx} contribute at most $\mathcal{O}(\lambda^4)$ to the right-hand side of \eqref{eq:kernelproof}, while the last term contributes at most $\mathcal{O}(\lambda^2)$. Therefore,
\begin{equation}
		\sing = \langle A \rangle_{\widetilde{P}_{\lambda,t}}\big(1+  \mathcal{O}(\epsilon_0 + \Delta + \lambda^2/\Delta)\big) + \mathcal{O}(\lambda^2),
\end{equation}
where we used the estimate $\pi\alpha = \pi^2\rho_0(E_0) = N^{-1}r\big(1+  \mathcal{O}(\epsilon_0 + \Delta + \lambda^2/\Delta)\big)$ that follows from Lemma \ref{lem:rho0approx} for $r$ defined in \eqref{eq:r_def}, 
and performed a change of variables $x \to x -  \lambda^2\Re m_0(E_0)$ and $y \to y - \lambda^2\Re m_0(E_0)$. We note that the $N^{-1}$ prefactor results from the different normalization of the trace in \eqref{eq:r_def} and \eqref{eq:rhoapprox}.
Furthermore, Proposition \ref{prop:kernel} below implies that under the Assumptions \ref{ass:H0} and \ref{ass:stateandobs}, $\lvert \langle A \rangle_{\widetilde{P}_{\lambda,t}} \rvert \lesssim 1$. Since the proof of Proposition \ref{prop:kernel} is independent of the statement of \eqref{eq:singterm}, this concludes the proof of \eqref{eq:singterm}. 

We proceed to prove Lemma \ref{lemma:Fsing_rep}.
\begin{proof}[Proof of Lemma \ref{lemma:Fsing_rep}]
Define the sequences of overlaps $\A_j := \langle \bm u_j, A\bm u_j\rangle$, and $\B_k := \langle \bm u_k, P\bm u_k\rangle$ where we recall from \eqref{eq:specdecH0} that $\bm u_j$'s are  the eigenvectors of $H_0$. 
Observe that the Assumption~\ref{ass:stateandobs} implies that 
\begin{equation} \label{eq:APbounds}
	\lVert \A\rVert_\infty \lesssim 1, \quad \B_k \ge 0 \quad \text{ and }\quad \lVert\B\rVert_1 = 1.
\end{equation}
Then, using the spectral decomposition  \eqref{eq:specdecH0}
of $H_0$  and the identity \eqref{eq:Mlambda_diff}, we rewrite $\tilsing$ in the following form,
\begin{equation}\label{eq:Fsing_expr}
	\tilsing = \sum_{j,k}\frac{\A_j\B_k}{N}\frac{1}{\pi}\int_\R \Im \widetilde{\nu}_j(x)\lambda^2\bigl\lvert \widetilde{\omega}_k(x)\bigr\rvert^2 \mathrm{d}x,
\end{equation}
where $\widetilde{\nu}_j(z) := (\mu_j-z-\lambda^2m_\lambda(z))^{-1}$ for $z \in \C$, and the functions $\widetilde{\omega}_k(x)$ are defined by the improper integrals\footnote{The integral in \eqref{eq:omega_tild_def} diverges logarithmically as $x$ approaches the intersection of the contour $\gamma_2$ with the real line. However, the contribution of such singularities to the $\mathrm{d}x$ integral on the right-hand side of \eqref{eq:Fsing_expr} is negligible. }
\begin{equation} \label{eq:omega_tild_def}
	\widetilde{\omega}_k(x) := \frac{1}{2\pi\ii}\oint\limits_{\gamma_2} \frac{\ee^{\ii t z}\widetilde{\nu}_k(z)}{x-z} \mathrm{d}z, \quad x \in \R.
\end{equation}
Here we adhere to the convention $m_\lambda(x) := \lim_{\eta\to +0} m_\lambda(x+\ii\eta)$.

The key estimates for proving \eqref{eq:Fsing_replace} are collected in the following Lemma that we prove at the end of the subsection.
\begin{lemma}[Second replacement lemma]  \label{lemma:replace2}
	Define the functions 
	\begin{equation}\label{eq:omega_def}
		\omega_k(x) := \frac{1}{2\pi\ii}\oint_{\gamma_2} \frac{\ee^{-\ii t z}}{x-z} \frac{1}{\mu_k - z_\lambda} \mathrm{d}z, \quad k\in[N],
	\end{equation}
	where we denote $z_\lambda := z + \lambda^2m_0(E_0)$. Then under the Assumptions \ref{ass:H0} and \ref{ass:stateandobs}, the estimates
	\begin{alignat}{2}
		\bigl\lvert \omega_k(x)-\widetilde{\omega}_k(x) \bigr\rvert &\lesssim \frac{(\Delta+\epsilon_0)(1+\lambda^2t)}{|\mu_k-x_\lambda|} + 1 + \lambda^2t+\Delta^{-1}\log t, \quad &&x\in I_{\Deld},\label{eq:omega_diff_I}\\
		\bigl\lvert \omega_k(x)\bigr\rvert  + \bigl\lvert\widetilde{\omega}_k(x) \bigr\rvert &\lesssim \frac{1}{|\mu_k-x_\lambda|} + 1 + \lambda^2t+\Delta^{-1}\log t, \quad &&x\in I_{\Deld},\label{eq:omega_sum_I}
	\end{alignat}
	\begin{equation} \label{eq:omega_sum_IC}
		\bigl\lvert \omega_k(x)\bigr\rvert  + \bigl\lvert\widetilde{\omega}_k(x) \bigr\rvert \lesssim  \Delta^{-1}\log t+R^{-1}, \quad x\in [-\tfrac{1}{2}R,\tfrac{1}{2}R]\backslash I_{\Deld},
	\end{equation}
 	hold for all $k$ with $\mu_k \in I_\Delta$.
\end{lemma}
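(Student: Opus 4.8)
The three bounds share the structure that $\omega_k(x)$ and $\widetilde{\omega}_k(x)$ are both contour integrals $\frac{1}{2\pi\ii}\oint_{\gamma_2}\frac{\ee^{-\ii tz}}{x-z}\,r_k(z)\,\rd z$, with $r_k=\nu_k:=(\mu_k-z-\lambda^2 m_0(E_0))^{-1}$ or $r_k=\widetilde{\nu}_k:=(\mu_k-z-\lambda^2 m_\lambda(z))^{-1}$; note $\widetilde{\nu}_k(z)=\langle\bm{u}_k,M_\lambda(z)\bm{u}_k\rangle$ by \eqref{eq:MDE}. I would first dispose of $\omega_k$ completely: the integrand of \eqref{eq:omega_def} is meromorphic inside $\gamma_2$ with only the simple poles $z=x$ and $z=\zeta_k:=\mu_k-\lambda^2m_0(E_0)$ (both enclosed, since $\Im\zeta_k=-\al\lambda^2<0$), so residue calculus yields the closed form $\omega_k(x)=\big(\ee^{-\ii tx}-\ee^{-\ii t\zeta_k}\big)/(\mu_k-x_\lambda)$ with $x_\lambda=x+\lambda^2 m_0(E_0)$. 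As $|\ee^{-\ii tx}|=1$ and $|\ee^{-\ii t\zeta_k}|=\ee^{-\al\lambda^2 t}\le1$, this gives $|\omega_k(x)|\le 2|\mu_k-x_\lambda|^{-1}$ for all $x$, which together with $|\mu_k-x_\lambda|\gtrsim\Delta$ for $\mu_k\in I_\Delta$, $x\notin I_{3\Delta}$, already settles the $\omega_k$-contribution to \eqref{eq:omega_sum_I} and \eqref{eq:omega_sum_IC}.

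Next I would obtain an integral representation of $\widetilde{\omega}_k$. Since $M_\lambda$ is a matrix-valued Herglotz function vanishing like $-z^{-1}$ at infinity, so is the scalar $\widetilde{\nu}_k$, hence the Stieltjes representation \eqref{eq:wStielt} gives $\widetilde{\nu}_k(z)=\frac1\pi\int_\R\frac{\rho_k(y)}{y-z}\,\rd y$ with $\rho_k(y):=\langle\bm{u}_k,\Im M_\lambda(y+\ii0)\bm{u}_k\rangle\ge0$ and $\tfrac1\pi\int_\R\rho_k=1$. Inserting this into \eqref{eq:omega_tild_def}, interchanging the $y$- and $z$-integrations (legitimate by absolute convergence, as $x$ stays at distance $\gtrsim R$ from $\gamma_2\cap\R$ while $\supp\rho_k$ is bounded and far from that point), and evaluating $\frac{1}{2\pi\ii}\oint_{\gamma_2}\frac{\ee^{-\ii tz}}{(x-z)(y-z)}\,\rd z=\frac{\ee^{-\ii tx}-\ee^{-\ii ty}}{y-x}$ by residues, one arrives at
\[
\widetilde{\omega}_k(x)=\frac1\pi\int_\R\rho_k(y)\,\frac{\ee^{-\ii tx}-\ee^{-\ii ty}}{y-x}\,\rd y .
\]
Combining $\big|\tfrac{\ee^{-\ii tx}-\ee^{-\ii ty}}{y-x}\big|\le\min(2|y-x|^{-1},t)$, the normalization of $\rho_k$, and the pointwise profile bound $\rho_k(y)=\tfrac{\lambda^2\Im m_\lambda(y+\ii0)}{|\mu_k-y-\lambda^2 m_\lambda(y+\ii0)|^2}\lesssim\min(\lambda^{-2},\lambda^2|y-\mu_k|^{-2})$ valid in the bulk (using $\Im m_\lambda\gtrsim1$ and $|m_\lambda|\lesssim1$ near $E_0$ from Lemma~\ref{lemma:M_bounds}), splitting the $y$-integral at $|y-x|=t^{-1}$ gives $|\widetilde{\omega}_k(x)|\lesssim|\mu_k-x_\lambda|^{-1}+1$ for $x\in I_{3\Delta}$ and $|\widetilde{\omega}_k(x)|\lesssim\Delta^{-1}$ otherwise. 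This completes \eqref{eq:omega_sum_I} and \eqref{eq:omega_sum_IC}.

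For the difference \eqref{eq:omega_diff_I} I would use the resolvent-type identity $\widetilde{\nu}_k-\nu_k=\lambda^2\big(m_\lambda(z)-m_0(E_0)\big)\widetilde{\nu}_k(z)\nu_k(z)$ and split $\gamma_2=\Gamma_1\,\dot{+}\,\Gamma_2\,\dot{+}\,\Gamma_3$ exactly as in the proof of Lemma~\ref{lem:replace}: $\Gamma_1$ the flat piece with $\Re z\in I_{4\Delta}$, $\Gamma_2$ the remaining flat piece, $\Gamma_3$ the great arc of radius $R$. On $\Gamma_2$ and $\Gamma_3$ the factors $\widetilde{\nu}_k,\nu_k$ are $\lesssim\Delta^{-1}$, resp.\ $\lesssim R^{-1}$ (since $\mu_k\in I_\Delta$ is far from these pieces, using $\lambda\lesssim\Delta$), while $|m_\lambda-m_0(E_0)|\lesssim\lambda^{-1}$ and $\int_{\Gamma_2\cup\Gamma_3}|x-z|^{-1}\,|\rd z|\lesssim\log t$; with the extra $\lambda^2$ these contribute $\lesssim\Delta^{-1}\log t+R^{-1}$, fitting inside \eqref{eq:omega_diff_I}. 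On $\Gamma_1$ one has $|m_\lambda(z)-m_0(E_0)|\lesssim\Delta+\epsilon_0+\lambda+t^{-1}$ (shown in the proof of Lemma~\ref{lem:replace}), and it remains to bound $\lambda^2\big|\int_{\Gamma_1}\frac{\ee^{-\ii tz}(m_\lambda(z)-m_0(E_0))\,\widetilde{\nu}_k(z)\nu_k(z)}{x-z}\,\rd z\big|$ by $C(\Delta+\epsilon_0)(1+\lambda^2t)|\mu_k-x_\lambda|^{-1}$, after which collecting the three pieces gives \eqref{eq:omega_diff_I}.

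\textbf{Main obstacle.} This last $\Gamma_1$-estimate is the crux. Parametrizing $\Gamma_1$ by $z=s+\ii t^{-1}$ one has $|\widetilde{\nu}_k(z)\nu_k(z)|\sim\big((\mu_k-s)^2+\eta^2\big)^{-1}$ with $\eta:=t^{-1}+\lambda^2$ (after the near-identity change of variables absorbing the $O(\lambda^2)$ real shifts, as in Lemma~\ref{lem:replace}), so the naive absolute-value bound produces, besides harmless $\frac{1}{|\mu_k-x||\mu_k-x_\lambda|}$-type terms, a piece $\sim\frac{\log(|\mu_k-x|/\eta)}{|\mu_k-x|^2}$ from the $(x-z)^{-1}$ near-singularity, whose $\log t$ is too large for \eqref{eq:omega_diff_I}. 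The remedy is to exploit the oscillation of $\ee^{-\ii tz}$ along $\Gamma_1$: on the sub-segment with $|s-x|\lesssim|\mu_k-x|$ the phase $tz$ varies by $\sim t|\mu_k-x|\gg1$, so a dyadic decomposition in $|s-x|$ with one integration by parts against $\ee^{-\ii tz}$ (the prefactor $(m_\lambda(z)-m_0(E_0))\widetilde{\nu}_k(z)\nu_k(z)$ being slowly varying on scale $t^{-1}$ since $\eta\ge t^{-1}$) replaces $\log(|\mu_k-x|/\eta)$ by $O(1)$; the remaining sub-segments have $(x-z)^{-1}$ non-singular and are estimated directly. Tracking the surviving factors — in particular $\lambda^2/\eta\lesssim1$ and $|\mu_k-x_\lambda|\sim|\mu_k-x|\vee\lambda^2$, and absorbing $\lambda+t^{-1}\lesssim\Delta$ — delivers the claimed bound on $\Gamma_1$. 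This oscillation mechanism is the two-resolvent analogue of the contour-straightening substitution $z\mapsto z+\lambda^2 m_\lambda(z)$ used for a single renormalized resolvent in Lemma~\ref{lem:replace}.
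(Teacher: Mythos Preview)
Your overall architecture is reasonable, but there is a genuine gap, and the paper's route is considerably simpler.

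\textbf{The gap in your bound on $|\widetilde\omega_k|$.} For \eqref{eq:omega_sum_I} you use the Stieltjes representation $\widetilde\omega_k(x)=\tfrac1\pi\int\rho_k(y)\tfrac{e^{-\ii tx}-e^{-\ii ty}}{y-x}\,\rd y$ and then bound the integrand in absolute value by $\rho_k(y)\min(2|y-x|^{-1},t)$. But with $\rho_k(y)\sim\lambda^2/((y-\mu_k)^2+\lambda^4)$ this produces, for $|x-\mu_k|\lesssim\lambda^2$, a contribution $\sim\lambda^{-2}\log_+(\lambda^2 t)$ from the region $t^{-1}<|y-x|<\lambda^2$, i.e.\ $\sim|\mu_k-x_\lambda|^{-1}\log_+ T$ rather than the claimed $|\mu_k-x_\lambda|^{-1}+1$. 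Since in the regime of interest $\lambda\to0$ with $T$ fixed but possibly large, the extra $\log T$ cannot be absorbed into the other terms of \eqref{eq:omega_sum_I}: you would need $\lambda^{-2}\log T\lesssim\lambda^{-2}+T+\Delta^{-1}\log t$, which fails for $T>e$ and $\lambda$ small. Your absolute-value estimate throws away the oscillation of $e^{-\ii ty}$, and the same mechanism you invoke for \eqref{eq:omega_diff_I} would in fact be needed already here.

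\textbf{What the paper does instead.} The paper bypasses both this issue and your ``main obstacle'' by introducing an intermediate
\[
\check\nu_k(z):=\bigl(\mu_k-z-\lambda^2 m_\lambda(\mu_k)\bigr)^{-1},\qquad
\check\omega_k(x):=\tfrac{1}{2\pi\ii}\oint_{\gamma_2}\tfrac{e^{-\ii tz}}{x-z}\,\check\nu_k(z)\,\rd z,
\]
i.e.\ freezing $m_\lambda$ at the eigenvalue $\mu_k$ rather than at $E_0$. Then $\check\omega_k$ has the same explicit residue formula as $\omega_k$, giving immediately $|\omega_k|+|\check\omega_k|\lesssim|\mu_k-x_\lambda|^{-1}$ and $|\omega_k-\check\omega_k|\lesssim\Delta(1+\lambda^2t)|\mu_k-x_\lambda|^{-1}$. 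The crucial gain is that on $\Gamma_1$ the Lipschitz bound $|m_\lambda(z)-m_\lambda(\mu_k)|\lesssim|z-\mu_k|$ yields $|\widetilde\nu_k-\check\nu_k|\lesssim\lambda^2|\mu_k-z_\lambda|^{-1}$, which is \emph{first order} in $|\mu_k-z_\lambda|^{-1}$; integrating $\lambda^2|x-z|^{-1}|\mu_k-z_\lambda|^{-1}$ over $\Gamma_1$ gives directly $1+\lambda^2 t$ with no logarithm. By contrast, your direct comparison $\widetilde\nu_k-\nu_k$ is \emph{second order}, $\sim\lambda^2(\Delta+\epsilon_0)|\mu_k-z_\lambda|^{-2}$, which forces the delicate oscillation/IBP argument you sketch. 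That argument is plausible but substantially harder to make rigorous (the ``slowly varying'' claim for $(m_\lambda-m_0(E_0))\widetilde\nu_k\nu_k$ near $s=x$ needs care, since $\widetilde\nu_k\nu_k$ varies by order one across the region $|s-x|\lesssim d$). The paper's centering trick makes all of \eqref{eq:omega_diff_I}--\eqref{eq:omega_sum_IC} follow from elementary absolute-value estimates, with no stationary-phase input.
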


Observe that applying the identities \eqref{eq:M0_diff}, \eqref{eq:Mlambda_diff}, and the spectral decomposition of $H_0$ to \eqref{eq:kernelproof} yields
\begin{equation} \label{eq:Fsing-Fsing}
	\tilsing - \sing = \mathcal{E}_{\mathrm{sing},1} + \mathcal{E}_{\mathrm{sing},2},
\end{equation}
where, recalling $x_\lambda := x-\lambda^2m_0(E_0)$, the quantities $\mathcal{E}_{\mathrm{sing},1}$ and $\mathcal{E}_{\mathrm{sing},2}$ are defined as
\begin{equation} \label{eq:sing_err_1}
	\mathcal{E}_{\mathrm{sing},1}:= \sum_{j,k}\frac{\A_j\B_k}{N}\frac{1}{\pi}\int_\R\Im\bigl[\widetilde{\nu}_j(x)- (\mu_j-x_\lambda)^{-1}\bigr] \lambda^2\bigl\lvert \widetilde{\omega}_k(x)\bigr\rvert^2\mathrm{d}x.
\end{equation}
\begin{equation} \label{eq:sing_err_2}
	\mathcal{E}_{\mathrm{sing},2} := \sum_{j,k}\frac{\A_j\B_k}{N}\frac{1}{\pi}\int_\R\Im\bigl[(\mu_j-x_\lambda)^{-1}\bigr]\lambda^2 \biggl(\bigl\lvert\widetilde{\omega}_k(x)\bigr\rvert^2- \bigl\lvert\omega_k(x)\bigr\rvert^2\biggr)\mathrm{d}x.
\end{equation}

First, we estimate the quantity $\mathcal{E}_{\mathrm{sing},1}$ defined in \eqref{eq:sing_err_1}.
In the regime $x\in I_{\Deld}$, the bounds in \eqref{eq:trM_local} imply that
\begin{equation} \label{eq:nu_diff_I}
	\bigl\lvert \widetilde{\nu}_j(x)-(\mu_j-x_\lambda)^{-1}\bigr\rvert \lesssim (\Delta+\epsilon_0) \frac{\lambda^2}{|\mu_j-x_\lambda|^2}, \quad x \in I_{\Deld}.
\end{equation}
On the other hand, it is straightforward to check that the regime $|x| \ge \frac{1}{2}R$ contributes at most $\mathcal{O}(\lambda^2 R^{-2})$ to the integral on the right-hand side of \eqref{eq:sing_err_1}. We note that the logarithmic singularity resulting from the contour $\gamma_2$ intersecting the real line is removed by the $x$ integration.

Therefore, estimates \eqref{eq:APbounds}, \eqref{eq:omega_sum_I}, \eqref{eq:omega_sum_IC}, and \eqref{eq:nu_diff_I} imply that
\begin{equation} \label{eq:Eps_sing1}
	\begin{split}
		\bigl\lvert \mathcal{E}_{\mathrm{sing},1} \bigr\rvert \lesssim&~ \sum_{j,k}\frac{|\A_j|\B_k}{N}\int_{I_{3\Delta}}\frac{\lambda^2(\Delta+\epsilon_0)}{|\mu_j-x_\lambda|^2} \frac{\lambda^2}{|\mu_k-x_\lambda|^2}\mathrm{d}x\\
		&+\lambda^2(1 + \lambda^2t+\Delta^{-1}\log t)^2\sum_{j}\frac{1}{N}\int_{\R}\Im\bigl[\widetilde{\nu}_j(x) + (\mu_j-x_\lambda)^{-1}\bigr] \mathrm{d}x\\
		\lesssim&~\sum_{j,k}\frac{|\A_j|\B_k}{N}\int_{\R} \frac{\lambda^2(\Delta+\epsilon_0)}{|\mu_j-x_\lambda|^2} \frac{\lambda^2}{|\mu_k-x_\lambda|^2}\mathrm{d}x + \lambda^2\bigl(1+\lambda^2 t + \Delta^{-1}\log t\bigr)^2,
	\end{split}
\end{equation}
where in the fist step we used the bound $|\Im[\widetilde{\nu}_j(x) - (\mu_j-x_\lambda)^{-1}]| \le \Im[\widetilde{\nu}_j(x) + (\mu_j-x_\lambda)^{-1}]$ to estimate the contribution coming form the second term on the right-hand side of \eqref{eq:omega_sum_I} and the regime $x \in [-\frac{1}{2}R, \frac{1}{2}R] \backslash I_{3\Delta}$; and in the second step we used that $\int_\R\Im[(\mu_j-x_\lambda)^{-1}] \mathrm{d}x = \pi$ and $\int_\R \frac{1}{N}\sum_j \Im\widetilde{\nu}_j(x)\mathrm{d}x = \int_R \Im m_\lambda(x)\mathrm{d}x = \pi$ (see, e.g., Proposition 2.1 and Eq.~(2.9) in \cite{AEK2020}).
Computing the integral in the second term on the right-hand side of \eqref{eq:Eps_sing1} explicitly, and using the spectral decomposition of $H_0$, we deduce from the admissibility of $E_0$ that
\begin{equation} \label{eq:sumCauchy_bound}
	\sum_{j,k}\frac{|\A_j|\B_k}{N}\int\limits_\R\frac{\lambda^2(\Delta+\epsilon_0)}{|\mu_j - x_\lambda|^2}  \frac{\lambda^2\mathrm{d}x}{|\mu_k-x_\lambda|^2} \lesssim (\Delta+\epsilon_0) \sup_{\mu_k\in I_{\Delta}} \llangle[\bigl] \Im M_0(\mu_k + 2\ii \lambda^2 \alpha) \rrangle[\bigr] \lesssim \Delta+\epsilon_0.
\end{equation}
Here we employed \eqref{eq:APbounds} and the estimate $\langle \A, X \B\rangle \lesssim \lVert \A \rVert_\infty \lVert \B \rVert_1 \sup\limits_{k\in\supp(\B)} \sum_j |X_{jk}|$. Hence, we conclude that 
\begin{equation} \label{eq:Eps_sing1_bound}
	\bigl\lvert \mathcal{E}_{\mathrm{sing},1} \bigr\rvert \lesssim \Delta +\epsilon_0+ \lambda^2\bigl(1+\lambda^2t+\Delta^{-1}\log t\bigr)^2.
\end{equation}

We proceed to estimate the quantity $\mathcal{E}_{\mathrm{sing},2}$ defined in \eqref{eq:sing_err_2}. 
We note again that the contribution of the regime $|x| \ge \frac{1}{2}R$ to the integral on the right-hand side of \eqref{eq:sing_err_2} is bounded by $\mathcal{O}(\lambda^2R^{-2})$. Therefore, combining the estimates \eqref{eq:omega_diff_I}, \eqref{eq:omega_sum_I}, and \eqref{eq:omega_sum_IC} yields the bound
\begin{equation} \label{eq:Eps_sing2_bound}
	\begin{split}
	\bigl\lvert \mathcal{E}_{\mathrm{sing},2}\bigr\rvert \lesssim 
	(\Delta+\epsilon_0)(1+\lambda^2t)
	+\lambda \bigl(1+\lambda^2t+\Delta^{-1}\log t\bigr)^2,
	\end{split}
\end{equation}
obtained similarly to \eqref{eq:Eps_sing1} and \eqref{eq:Eps_sing1_bound}.
Together with \eqref{eq:Fsing-Fsing}, the bound \eqref{eq:Eps_sing1_bound} and \eqref{eq:Eps_sing2_bound} conclude the proof of \eqref{eq:Fsing_replace}.
\end{proof}

It remains to prove Lemma \ref{lemma:replace2}.
\begin{proof}[Proof of Lemma \ref{lemma:replace2}]
	Throughout the proof we assume that $k\in [N]$ satisfies $\mu_k \in I_{\Delta}$, and $x\in \R$ satisfies $|x| \le \frac{1}{2}R$.   
	We introduce the auxiliary quantities 
	\begin{equation}
		\check{\nu}_k(z) := \frac{1}{\mu_k - z -\lambda^2m_\lambda(\mu_k)}, \quad 
		\check{\omega}_k(x) := \frac{1}{2\pi\ii}\oint_{\gamma_2}\frac{\ee^{-\ii tz}}{x-z}\check{\nu}_k(z)\mathrm{d}z.
	\end{equation}
	An explicit computation using the residue calculus reveals that 
	\begin{equation} \label{eq:omega_expr}
		\check{\omega}_k(x)  = \frac{\ee^{-\ii t x } - \ee^{-\ii t (\mu_k - \lambda^2m_\lambda(\mu_k))}}{\mu_k -x - \lambda^2m_\lambda(\mu_k)}, \quad \omega_k(x)  = \frac{\ee^{-\ii t x } - \ee^{-\ii t (\mu_k - \lambda^2 m_0(E_0))}}{\mu_k -x_\lambda}.
	\end{equation}	
	Furthermore, using the bound in \eqref{eq:trM_local}, we obtain
	\begin{equation} \label{eq:omega-check_bound}
		\bigl\lvert \omega_k(x) - \check{\omega}_k(x) \bigr\rvert \lesssim \biggl(\frac{1}{|\mu_k-x_\lambda|^2} + \frac{t}{|\mu_k-x_\lambda|}\biggr)\lambda^2|\mu_k - E_0| \lesssim \frac{\Delta (1+\lambda^2 t)}{|\mu_k-x_\lambda|},
	\end{equation}
	where we additionally applied the estimate
	\begin{equation} \label{eq:abs_sim}
		\bigl|y+\mathcal{O}(\eta)\bigr| + \eta \sim |y| + \eta, \quad y\in\R, \eta>0.
	\end{equation}
	
	We decompose the contour $\gamma_2 = \Gamma_1\,\dot{+}\,\Gamma_2 \,\dot{+}\,\Gamma_3$ according to \eqref{eq:contourdecomp}.
	It is straightforward to check that for $\nu^\#(z)$ denoting one of $\widetilde{\nu}_k(z)$, $\check{\nu}_k(z)$ or $(\mu_k - z_\lambda)^{-1}$,
	\begin{equation} \label{eq:Gamma2+3_tail}
		\biggl\lvert \int_{\Gamma_2\dot{+}\Gamma_3} \frac{\ee^{-\ii tz}}{x-z}\nu^\#_k(z)\mathrm{d}z \biggr\rvert \lesssim \frac{\log t}{\Delta} + \frac{1}{R},
	\end{equation}
	where we used that $\Im z = t^{-1}$ for all $z\in\Gamma_1$.
	Therefore, rewriting the left-hand sides of \eqref{eq:omega_diff_I}-\eqref{eq:omega_sum_IC} using the integral definitions \eqref{eq:omega_tild_def} and\eqref{eq:omega_def}, it suffices to estimate the contributions coming from the segment $\Gamma_1\subset \gamma_2$.

	Using \eqref{eq:abs_sim} and \eqref{eq:trM_local}, we deduce that for all $z\in \Gamma_1$, defined in \eqref{eq:contourdecomp}, 
	\begin{equation} \label{eq:nu_check_bound_I}
		\bigl\lvert \check{\nu}_k(z) - \widetilde{\nu}_k(z) \bigr\rvert \lesssim \frac{\lambda^2}{|\mu_k-z_\lambda|}.
	\end{equation}
	Integrating the bound \eqref{eq:nu_check_bound_I} then yields
	\begin{equation}
		\biggl\lvert \int_{\Gamma_1} \frac{\ee^{-\ii tz}}{x-z}\bigl[\widetilde{\nu}_k(z)-\check{\nu}_k(z)\bigr]\mathrm{d}z \biggr\rvert \lesssim \bigl( 1 + \lambda^2t \bigr)\mathds{1}_{x\in I_{\Deld}} + \bigl(\Delta^{-1}\log t\bigr) \mathds{1}_{x\notin I_{\Deld}},
	\end{equation}
	which, together with \eqref{eq:omega-check_bound}, \eqref{eq:Gamma2+3_tail} immediately implies \eqref{eq:omega_diff_I} after writing $\widetilde{\omega}_k - \omega_k = (\widetilde{\omega}_k-\check{\omega}_k) + (\check{\omega}_k-\omega_k) $.
	
	On the other hand, noting that $|\omega_k(x)|+ |\check{\omega}_k(x)| \lesssim |\mu_k-x_\lambda|^{-1}$ by \eqref{eq:omega_expr}, and combining the estimates \eqref{eq:omega-check_bound} and \eqref{eq:Gamma2+3_tail} yields \eqref{eq:omega_sum_I} and \eqref{eq:omega_sum_IC}. This concludes the proof of Lemma \ref{lemma:replace2}.
\end{proof}

\subsection{Step (iii): Limiting behavior of the singular term and proof of Theorem \ref{thm:main}~(b)} \label{subsec:step3} 
We organize the result of approximating $\langle A \rangle_{\widetilde{P}_{\lambda,t}}$ in the following proposition. 
\begin{proposition} \label{prop:kernel}
	Under the assumptions of Theorem~\ref{thm:main}, and with $\widetilde{P}_\lambda$ defined as in \eqref{eq:kernelstateapprox}, we have that, for any fixed $T \in (0,\infty)$ and recalling $\alpha = \pi \rho_0(E_0)$
	\begin{equation*}
		\limsup_{\Delta\rightarrow0}\limsup_{\substack{t\rightarrow\infty\\\lambda\rightarrow0\\\lambda^2 t=T}}\limsup_{N\rightarrow\infty} \bigl\lvert\langle A \rangle_{\widetilde{P}_{\lambda,t}} - (1- \ee^{-2\alpha\lambda^2t})\langle A \rangle_{\widetilde{P}_\lambda} \bigr\rvert \lesssim T\ee^{-2\alpha T}\,.
	\end{equation*}
\end{proposition}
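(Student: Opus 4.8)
The plan is to work directly from the definition \eqref{eq:kernelstate} of $\widetilde{P}_{\lambda,t}$ and the explicit kernel \eqref{eq:kernel}, comparing it to $\widetilde{P}_\lambda$ from \eqref{eq:kernelstateapprox}. Writing out $\langle A \rangle_{\widetilde P_{\lambda,t}}$ and $\langle A\rangle_{\widetilde P_\lambda}$, the difference has a common denominator $r$ (cf.~\eqref{eq:r_def}), and subtracting $(1-\ee^{-2\alpha\lambda^2 t})\langle A\rangle_{\widetilde P_\lambda}$ amounts to replacing the kernel $K_{\lambda,t}(x-y)$ by the difference $K_{\lambda,t}(x-y) - (1-\ee^{-2\alpha\lambda^2 t})\delta(x-y)$; since $\int_\R K_{\lambda,t}(u)\,\rd u = \sqrt{2\pi}\,\widehat K_{\lambda,t}(0) = 1-\ee^{-2\alpha\lambda^2 t}$, one sees that the relevant object is precisely $\mathfrak R_{\lambda,t}$ from \eqref{eq:tildephi}, i.e.~the error we must control is exactly \eqref{eq:Rexplicit}, $\mathcal R = r^{-1}\sum_{j,k}\A_j \B_k\,\mathfrak R_{\lambda,t}(\mu_j-\mu_k)$ with $\A_j=\langle\bm u_j,A\bm u_j\rangle$ and $\B_k=\langle\bm u_k,P\bm u_k\rangle$. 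By Lemma \ref{lem:rho0approx} we have $r\sim N\rho_0(E_0)>0$, so the task reduces to bounding the spectral sum, using $\lVert\A\rVert_\infty\lesssim 1$ and $\lVert\B\rVert_1=1$ (only $\mu_k\in I_\Delta$ contributes) — hence it suffices to estimate $\sup_{\mu_k\in I_\Delta}\frac1N\sum_j \bigl|\mathfrak R_{\lambda,t}(\mu_j-\mu_k)\bigr|$.

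Next I would pass from the eigenvalue sum to an integral against the density of states. Using $\mu_j\in I_{2\Delta}$ (outside a neighborhood of $E_0$ the Lorentzian factor $\tfrac{2\alpha\lambda^2}{u^2+(2\alpha\lambda^2)^2}$ and the local regularity of $\rho_0$ make the tail negligible, of order $\lambda^2/\Delta$), Assumption \ref{ass:H0} gives $\frac1N\sum_j f(\mu_j) = \int \rho_0(\mu)f(\mu)\,\rd\mu + \mathcal O(\text{errors})$; for $\mu_k$ near $E_0$ and $\mu$ on scale $\Delta$ around $E_0$ we may further replace $\rho_0(\mu)$ by the constant $\rho_0(E_0)=\alpha/\pi$ up to a $C^{1,1}$-error $\mathcal O(\Delta)$. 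Thus, up to errors vanishing in the stated triple limit, $\frac1N\sum_j|\mathfrak R_{\lambda,t}(\mu_j-\mu_k)| \lesssim \int_\R |\mathfrak R_{\lambda,t}(u)|\,\rd u$ (after extending the $u$-integral to all of $\R$, the tail again costing $\mathcal O(\lambda^2/\Delta)$ since $\mathfrak R_{\lambda,t}$ is integrable). It remains to show
\begin{equation*}
	\int_\R \bigl|\mathfrak R_{\lambda,t}(u)\bigr|\,\rd u \lesssim \lambda^2 t\,\ee^{-2\alpha\lambda^2 t} = T\,\ee^{-2\alpha T}\quad\text{when}\quad \lambda^2 t = T.
\end{equation*}

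The main work — and the main obstacle — is this $L^1$ bound on $\mathfrak R_{\lambda,t}$. Here the cancellation $\int_\R \mathfrak R_{\lambda,t}(u)\,\rd u = 0$ (equivalently $\widehat K_{\lambda,t}(0) = \sqrt{2\pi}^{-1}(1-\ee^{-2\alpha\lambda^2 t})$, i.e.~the constant the kernel converges to) is crucial: it prevents a naive triangle-inequality estimate from losing the exponential prefactor. I would substitute $u = v/(2\alpha\lambda^2)$ to get $\mathfrak R_{\lambda,t}(u)\,\rd u = \pi\,\ee^{-2\alpha\lambda^2 t}\,\tfrac{1}{1+v^2}\bigl(1-\cos(2\alpha T v) - \tfrac{\sin(2\alpha T v)}{v}\bigr)\,\rd v$, so that the claim becomes $\int_\R \tfrac{1}{1+v^2}\bigl|1-\cos(2\alpha Tv) - \tfrac{\sin(2\alpha Tv)}{v}\bigr|\,\rd v \lesssim T$. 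For $|v|\lesssim 1/T$ the bracket is $\mathcal O((Tv)^2)$ by Taylor expansion, giving a contribution $\mathcal O(T^2)\cdot\mathcal O(1/T)=\mathcal O(T)$; for $|v|\gtrsim 1/T$ one splits $1-\cos(\cdot)$ (bounded, against $\tfrac{1}{1+v^2}$ on $|v|\gtrsim 1/T$, giving $\mathcal O(T)$) from the $\tfrac{\sin(\cdot)}{v}$ term (bounded by $\min(|v|^{-1},T)$, against $\tfrac{1}{1+v^2}$, also $\mathcal O(T\log(1/T))$ or better after being slightly more careful — an oscillatory/integration-by-parts argument on $\int \tfrac{\sin(2\alpha Tv)}{v(1+v^2)}\rd v$ removes the log). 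Combining, $\int_\R|\mathfrak R_{\lambda,t}(u)|\,\rd u \lesssim T\ee^{-2\alpha T}$ as required. Assembling all error terms (the $\mathcal O(\Delta)$ from flattening $\rho_0$, the $\mathcal O(\epsilon_0)$ from \eqref{eq:rho0}, the $\mathcal O(\lambda^2/\Delta)$ tails, and the $\mathcal O(N^{-1})$-type discretization errors, all of which vanish in $\limsup_{\Delta\to0}\limsup_{t\to\infty,\lambda\to0}\limsup_{N\to\infty}$) yields the proposition; the definition \eqref{eq:Rdef} of $\mathcal R$ referenced in Remark \ref{rmk:relax}(ii) is then read off from this computation.
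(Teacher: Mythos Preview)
Your overall structure matches the paper's: you correctly identify the difference as $\mathcal{R} = r^{-1}\sum_{j,k}\A_j\B_k\,\mathfrak{R}_{\lambda,t}(\mu_j-\mu_k)$, reduce via the Young-type estimate $|\mathcal{R}|\le \lVert\A\rVert_\infty\lVert\B\rVert_1\cdot\sup_{k}\frac{1}{r}\sum_j|\mathfrak{R}_{\lambda,t}(\mu_j-\mu_k)|$, and recognize that $r\sim N$ via Lemma~\ref{lem:rho0approx}. The $L^1$ bound $\int_\R|\mathfrak{R}_{\lambda,t}(u)|\,\rd u\lesssim T\ee^{-2\alpha T}$ is also correct.

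The gap is the step ``Assumption~\ref{ass:H0} gives $\frac1N\sum_j f(\mu_j) = \int \rho_0(\mu)f(\mu)\,\rd\mu + \mathcal O(\text{errors})$''. Assumption~\ref{ass:H0} only controls the Stieltjes transform $\llangle M_0(z)\rrangle\approx m_0(z)$ at scale $|\Im z|\ge\eta_0$; it does \emph{not} directly yield a Riemann-sum approximation for a general test function. In the proof of Corollary~\ref{cor:relax}(c) this conversion is done for $\mathfrak{R}_{\lambda,t}$ itself (not $|\mathfrak{R}_{\lambda,t}|$) via a contour argument that requires holomorphicity in a strip, which $|\mathfrak{R}_{\lambda,t}|$ lacks. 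Since $|\mathfrak{R}_{\lambda,t}|$ oscillates on scale $1/t$ and has a Lorentzian peak on scale $\lambda^2$, there is no free pass from the eigenvalue sum to $\int|\mathfrak{R}_{\lambda,t}|\rho_0$.

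The paper sidesteps this entirely with a pointwise bound: using the elementary inequality $\phi_\xi(u)(1-\cos(tu)) \le C\,\xi t\,\phi_{1/t}(u)$ one gets
\[
|\mathfrak{R}_{\lambda,t}(u)| \le 2\pi\alpha\lambda^2 t\,\ee^{-2\alpha\lambda^2 t}\bigl(C\phi_{1/t}(u)+\phi_{2\alpha\lambda^2}(u)\bigr),
\]
after which the eigenvalue sum is \emph{exact}: $\frac{1}{N}\sum_j\phi_\eta(\mu_j-\mu_k)=\Im\llangle M_0(\mu_k+\ii\eta)\rrangle$, and this is $\lesssim 1$ by admissibility of $E_0$ once $1/t,\lambda^2\ge\eta_0$. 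This both closes the argument and gives the crucial factor $T$ for small $T$ (which your naive bound $|\mathfrak{R}_{\lambda,t}|\lesssim \ee^{-2\alpha T}\phi_{2\alpha\lambda^2}$ from $|1-\cos|\le 2$ alone would miss). Your $L^1$ computation is morally the integrated version of this pointwise bound, but you need the pointwise bound, not the $L^1$ bound, to control the discrete sum.
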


Given Proposition \ref{prop:kernel}, Theorem \ref{thm:main}~(b) immediately follows. \qed
\begin{proof}[Proof of Proposition~\ref{prop:kernel}]
	First, we observe that representing $\Im M_\lambda$ in spectral decomposition of $H_0$, the quantity $\langle A \rangle_{\widetilde{P}_{\lambda,t}}$ with $P_{\lambda, t}$ defined in \eqref{eq:kernelstate}, can be rewritten in the from 
	\begin{equation} \label{eq:APtildet}
		\langle A \rangle_{\widetilde{P}_\lambda,t} = \frac{1}{r}\sum_{j,k}\A_j\B_k\int_{\R} \phi_{\alpha\lambda^2}(x-\mu_j) \bigl(K_{\lambda,t} * \phi_{\alpha\lambda^2}\bigr)(x-\mu_k) \mathrm{d}x,
	\end{equation}
	where $r=\int_\R \Tr[\Im M_0(x + \ii \alpha\lambda^2)]\langle\Im M_0(x + \ii \alpha \lambda^2)\rangle_P\mathrm{d}x > 0$ has already been introduced in Remark \ref{rmk:relax}~(ii), and we denoted $\phi_{\eta} := \Im[(x-\ii\eta)^{-1}]$. Recall that $\mu_j, \bm u_j$ are the eigenvalues and the respective eigenvectors of $H_0$, and $\A_j := \langle \bm u_j, A\bm u_j\rangle$, $\B_j := \langle \bm u_j, P\bm u_j\rangle$.
	Applying the Parseval-Plancherel identity to the right-hand side of \eqref{eq:APtildet} yields
	\begin{equation} \label{eq:Shat}
		\langle A \rangle_{\widetilde{P}_{\lambda,t}} = \frac{1}{r}\sum_{j,k}\A_j\B_k \Phi_{\lambda,t}(\mu_j-\mu_k), \quad \Phi_{\lambda,t}(u) := 
		\frac{\pi^2}{(2\pi)^{1/2}}\int_\R \ee^{-2\alpha\lambda^2|q|-\ii u q}\, \widehat{K_{\lambda,t}}(q)\mathrm{d}q,
	\end{equation}
	where we used the fact that $\widehat{\cau_\eta}(q) = (\frac{\pi}{2})^{1/2}\ee^{-\eta|q|},\, \eta>0$ (recall Footnote \ref{ftn:FT}).
	
	A direct computation starting with \eqref{eq:kernel} reveals that 
	\begin{equation} \label{eq:kernelHat}
		\widehat{K_{\lambda,t}}(q) = \begin{cases}
			(2\pi)^{-1/2}\bigl(1 - \ee^{-2\alpha\lambda^2 (t-|q|)} \bigr) \quad &\text{for} \quad |p| \le t\,, \\
			0 \quad &\text{for} \quad |p| > t\,,
		\end{cases}
	\end{equation}
	and implies that $\Phi_{\lambda,t}(u)$ admits the explicit expression 
	\begin{equation} \label{eq:Phi_fomula}
			\Phi_{\lambda,t}(u) = \bigl(1-\ee^{-2\alpha\lambda^2 t}\bigr) \pi\cau_{2\alpha\lambda^2}(u)+ \mathfrak{R}_{\lambda,t}(u),
	\end{equation}
	where the function $\mathfrak{R}_{\lambda, t}(u)$ is defined by
	\begin{equation} \label{eq:Phitilde}
		\mathfrak{R}_{\lambda,t}(u) := \pi\ee^{-2\alpha\lambda^2 t}\phi_{2\alpha\lambda^2}(u)\biggl(1-\cos(tu)-2\alpha\lambda^2 t \frac{\sin(tu)}{tu}\biggr).
	\end{equation}
	Observe that the contribution of the first term on the right-hand side of \eqref{eq:Phi_fomula} to $\langle A \rangle_{\widetilde{P}_{\lambda,t}}$ is given by $(1-\ee^{-2\lambda^2\alpha t})\langle A \rangle_{\widetilde{P}_\lambda}$, since
	\begin{equation}
		\langle A \rangle_{\widetilde{P}_\lambda} =
		\frac{1}{r}\sum_{j,k}\A_j\B_k\int_\R \cau_{\alpha\lambda^2}(x-\mu_j) \cau_{\alpha\lambda^2}(x-\mu_k)\mathrm{d}x.
	\end{equation}
	Here we used the definition of the state $\widetilde{P}_\lambda$ in \eqref{eq:kernelstateapprox}, and the Parseval-Plancherel identity.
	
	The key observation is that the contribution of the remaining $\mathfrak{R}_{\lambda,t}(\mu_j-\mu_k)$ term 
	\begin{equation} \label{eq:Rdef}
		\mathcal{R} := \sum_{j,k}\frac{\A_j\B_k}{r} \mathfrak{R}_{\lambda,t}(\mu_j-\mu_k)
	\end{equation}
	in \eqref{eq:Phi_fomula} to $\langle A \rangle_{\widetilde{P}_{\lambda,t}}$ admits the bound\footnote{Inequality \eqref{eq:youngPhi} can be interpreted as a discrete analog of Young's convolution inequality, which can not be evoked directly since the eigenvalues $\mu_j$ do not form a group under addition.}
	\begin{equation} \label{eq:youngPhi}
		\big\lvert\mathcal{R}\big\rvert \le \sup_{k\in \supp(\B)}\frac{1}{r}\sum_{j} \bigl|\mathfrak{R}_{\lambda,t}(\mu_j-\mu_k) \bigr|\cdot\lVert\A\rVert_\infty \lVert\B\rVert_1\,.
	\end{equation}
	
	Observe that there exists a constant $C>0$ such that for any $\xi >0$ and $t>0$, we have that $\cau_{ \xi}(u) (1-\cos(tu)) \le C\xi t \cau_{1/t}(u)$ for all $u \in \R$. This follows immediately from the fact that the function $s \mapsto (s^2+1)(1-\cos s)/s^2$ is uniformly bounded on $\R$. Therefore, the function $\mathfrak{R}$ admits the bound
	\begin{equation} \label{eq:Phi_bound}
		|\mathfrak{R}_{\lambda,t}(u)| \le 2\pi\alpha\lambda^2 t\,\ee^{-2\alpha\lambda^2 t}\biggl(C\phi_{1/t}(u)+\phi_{2\alpha\lambda^2}(u)\biggr), \quad u\in\R.
	\end{equation}
	Summing the bound \eqref{eq:Phi_bound} over $u = \mu_j$ yields
	\begin{equation}
		\frac{1}{r}\sum_{j} \bigl|\mathfrak{R}_{\lambda,t}(\mu_j-\mu_k) \bigr| \lesssim \lambda^2 t \,\ee^{-2\alpha\lambda^2 t}\frac{N}{r}\Im\llangle M_0(\mu_k+2\ii\alpha\lambda^2)+M_0(\mu_k+\ii/t)\rrangle. 
	\end{equation}
	Using the localization of the state $P$ as in \eqref{eq:Plocalize}, the admissibility of $E_0$ in \eqref{eq:E0}, and the first line of \eqref{eq:rhoapprox} to deduce that $r \sim N(1+ \mathcal{O}(\epsilon_0 + \Delta + \lambda^2/\Delta))$, we obtain
	\begin{equation} \label{eq:Rsum_est}
		\sup_{k\in \supp(\B)}\frac{1}{r}\sum_{j} \bigl|\mathfrak{R}_{\lambda,t}(\mu_j-\mu_k) \bigr| \lesssim \lambda^2 t\,\ee^{-2\alpha\lambda^2 t}\bigl(1 + \epsilon_0\bigr)\big(1+\epsilon_0 + \Delta + \lambda^2/\Delta\big).
	\end{equation}
	This concludes the proof of Proposition \ref{prop:kernel}.	
\end{proof}

\subsection{Relaxation formula: Proof of Corollary \ref{cor:relax}} \label{subsec:relaxproof} 
Estimates \eqref{eq:short} and \eqref{eq:long} in items  (a) and  (b), respectively, follow immediately from Theorem \ref{thm:main}.

To prove \eqref{eq:LORRsmall} in item (c), observe that plugging the estimate \eqref{eq:LORP} from the Definition \ref{def:LOR} of local overlap regularity into \eqref{eq:Rdef} yields
\begin{equation} \label{eq:LOR_Rsums}
	\begin{split}
		\bigl\lvert\mathcal{R}\bigr\rvert \lesssim&~ |\mathfrak{A}| \sup\limits_{k\in\supp(\B)} \biggl\lvert\frac{1}{N}\sum_{\mu_j\in I_{2\Delta}} \mathfrak{R}_{\lambda,t}(\mu_j-\mu_k)\biggr\rvert 
		+ \sup\limits_{k\in\supp(\B)} \biggl\lvert\sum_{\mu_j\in I_{2\Delta}} \frac{\A_j-\mathfrak{A}}{N} \mathfrak{R}_{\lambda,t}(\mu_j-\mu_k)\biggr\rvert\\ 
		&+ \sup\limits_{k\in\supp(\B)} \biggl\lvert \frac{1}{N}\sum_{\mu_j\notin I_{2\Delta}} \A_j \mathfrak{R}_{\lambda,t}(\mu_j-\mu_k) \biggr\rvert,
	\end{split}
\end{equation}
where we used that  $|r| \sim N\, (1+ \mathcal{O}(\epsilon_0 + \Delta + \lambda^2/\Delta))$ by the first line of \eqref{eq:rhoapprox} from Lemma \ref{lem:rho0approx}.

Applying the estimates analogous to \eqref{eq:youngPhi} and \eqref{eq:Rsum_est} to the second sum on the right-hand side of \eqref{eq:LOR_Rsums}, we deduce the bound
\begin{equation} \label{eq:cE_LOR_error}
	\sup\limits_{k\in\supp(\B)} \biggl\lvert\sum_{\mu_j\in I_{2\Delta}} \frac{\A_j-\mathfrak{A}}{N} \mathfrak{R}_{\lambda,t}(\mu_j-\mu_k)\biggr\rvert \lesssim \big| \cE_{{\rm LOR}} \big| \cdot\big(1+\lambda^2/\Delta\big)
\end{equation}

Note that by \eqref{eq:APbounds} and the uniform bound 
$$
|\mathfrak{R}_{\lambda,t}(u)| \lesssim \frac{\lambda^2}{\Delta^2}, \quad\mbox{for} \quad  |u| \gtrsim \Delta
$$
following from~\eqref{eq:Phi_bound}, the tail sum, i.e., the second line of \eqref{eq:LOR_Rsums}, admits the estimate
\begin{equation} \label{eq:LOR_tail_error}
	\sup\limits_{k\in\supp(\B)} \biggl\lvert \frac{1}{N}\sum_{\mu_j\notin I_{2\Delta}} \A_j \mathfrak{R}_{\lambda,t}(\mu_j-\mu_k) \biggr\rvert \lesssim \frac{\lambda^2}{ \Delta^2}.
\end{equation}

Therefore, it remains to estimate the first term on the right-hand side of \eqref{eq:LOR_Rsums}.
Since the function $\mathfrak{R}_{\lambda,t}(u)$ is holomorphic in $u$ for $|\Im u| \le \alpha\lambda^2$, we obtain the following series of estimates,
\begin{equation} \label{eq:Riemann_sum}
	\begin{split}
			\frac{1}{N}\sum_{\mu_j\in I_{2\Delta}} \mathfrak{R}_{\lambda,t}(\mu_j-\mu_k) =&~ \frac{1}{2\pi\ii} \oint_{\gamma} \mathfrak{R}_{\lambda,t}(z-\mu_k)\llangle M_0(z)\rrangle\mathrm{d}z + \mathcal{O}\biggl(\frac{\lambda^2 }{\Delta^2}\biggr)\\
			=&~\frac{1}{2\pi\ii} \oint_{\gamma} \mathfrak{R}_{\lambda,t}(z-\mu_k)m_0(z)\mathrm{d}z + \mathcal{O}\biggl(\eta_0+\frac{\epsilon_0}{\lambda^2}+\frac{\lambda^2 }{\Delta^2}\biggr)\\
			=&~\int_{I_{2\Delta}} \mathfrak{R}_{\lambda,t}(u-\mu_k)\rho_0(u)\mathrm{d}u + \mathcal{O}\biggl(\eta_0+\frac{\epsilon_0}{\lambda^2}+\frac{\lambda^2 }{\Delta^2}\biggr),
	\end{split}
\end{equation}
where the contour $\gamma$ is defined to be a rectangle of height $2\eta_0$ and width $4C$ centered at $E_0$, and the constant $C\sim 1$ is chosen in such a way that $\sigma(H_0) \subset [E_0-C,E_0+C]$. Here, in the first step, we used residue calculus and an estimate analogous to \eqref{eq:LOR_tail_error} to extend the sum to all $\mu_j$'s. The second step follows by integrating the estimate \eqref{eq:rho0} on the horizontal segments of $\gamma$ and bounding the contribution of the vertical segments of the contour $\gamma$ by $\mathcal{O}(\eta_0)$. Finally, the third step is a consequence of the Stieltjes representation \eqref{eq:m0} and $|\mathfrak{R}_{\lambda,t}(u)| \lesssim \Delta^{-2}\lambda^2$ for $|u| \gtrsim \Delta$. Using the estimate $\rho_0(u) = \rho_0(E_0) + \mathcal{O}(\Delta)$ for all $u\in I_{2\Delta}$ by admissibility of $E_0$ as in  \eqref{eq:admiss_spec}, we conclude that 
\begin{equation}
	\int_{I_{2\Delta}} \mathfrak{R}_{\lambda,t}(u-\mu_k)\rho_0(u)\mathrm{d}u =\rho_0(E_0) \int_{I_{2\Delta}} \mathfrak{R}_{\lambda,t}(u-\mu_k)\mathrm{d}u  + \mathcal{O}(\Delta)\,,
\end{equation}
where we used $\int_\R |\mathfrak{R}_{\lambda, t}(u)| \rd u \lesssim \lambda^2 t\,\ee^{-2\alpha\lambda^2 t} \lesssim 1$ as a consequence of \eqref{eq:Phi_bound}.
Moreover, a direct computation starting with \eqref{eq:Phitilde} reveals that 
\begin{equation} \label{eq:R_integrals}
	\int_\R \mathfrak{R}_{\lambda,t}(u-\mu_k)\mathrm{d}u = 0 \qquad 
	\text{and} \qquad \int_{\R\backslash I_{2\Delta}} \bigl\lvert\mathfrak{R}_{\lambda,t}(u-\mu_k)\bigr\rvert\mathrm{d}u \lesssim \frac{\lambda^2}{\Delta} \,.
\end{equation}
Hence, combining estimates \eqref{eq:LOR_Rsums}--\eqref{eq:R_integrals} yields
\begin{equation}
	\bigl\lvert\mathcal{R}\bigr\rvert \lesssim \Delta + \eta_0+\lambda^{-2}\epsilon_0+\Delta^{-2}\lambda^2 \,,
\end{equation}
which implies the $\mathcal{R}$-part of \eqref{eq:LORRsmall}; the $\mathcal{E}$-part is an immediate consequence of Theorem \ref{thm:main}~(a).

To complete the proof under the weaker assumption on $\langle \bm u_j, A\bm u_j\rangle$, stated in Footnote \ref{ftn:LOR}, we first \emph{uniformly} approximate $\mathfrak{A}_N$ by a real analytic function $\mathfrak{A}_{N, \ell} : I_{2 \Delta} \to \R$ with $\ell = \ell(\lambda, t)> 0$ (to be chosen below), which can be analytically extended to $\{z \in \C : \mathrm{dist}(z, I_{2 \Delta}) < \ell\}$ and satisfy $\sup_{N \in \N}\Vert \mathfrak{A}_{N, \ell} - \mathfrak{A}_N \Vert_{\infty} \to 0$ as $\ell \to 0$. Such $\mathfrak{A}_{N, \ell}$ can be explicitly constructed, e.g., by convolution of $\mathfrak{A}_N$ with a Gaussian having variance of order $\ell$. For ease of notation, we shall now drop the subscript $N$. Then, the error term $\mathfrak{A} - \mathfrak{A}_\ell$ is easily seen to give a vanishing contribution (as $\ell \to 0$) by means of \eqref{eq:Rsum_est}. Next, observe that using analyticity of $\mathfrak{A}_\ell$ and reasoning as in the proof for the case of $\mathfrak{A}$ being constant above, we obtain
\begin{equation} \label{eq:LOR_analytic}
	\bigl\lvert\mathcal{R}\bigr\rvert \lesssim \sup_{\mu_k\in I_\Delta}\biggl\lvert \int_{I_{2\Delta}} \bigl(\mathfrak{A}_\ell(u)-\mathfrak{A}_\ell(\mu_k)\bigr)\mathfrak{R}_{\lambda,t}(u-\mu_k)\mathrm{d}u \biggr\rvert + \Delta + \eta_0+\lambda^{-2}\epsilon_0+\Delta^{ -2}\lambda^2.
\end{equation}
Since $\mathfrak{A}_\ell(z)$ is analytic in the strip of width $\ell$, \eqref{eq:Phi_bound} implies that the integral on the right-hand side of \eqref{eq:LOR_analytic} is bounded by 
\begin{equation}
	\frac{1}{\ell}\int_{I_{2\Delta}} \biggl(\frac{\lambda^2 |u-\mu_k|}{|u-\mu_k|^2+(2\alpha\lambda^2)^2} + \frac{t^{-1} |u-\mu_k|}{|u-\mu_k|^2+t^{-2}}\biggr)\mathrm{d}u \lesssim \frac{\lambda^2 |\log \lambda| + t^{-1} \log t}{\ell},
\end{equation}
uniformly in $k$ such that $\mu_k \in I_\Delta$. Hence, choosing, say, $\ell := \lambda + t^{-1/2}$, this concludes the proof of Corollary \ref{cor:relax}. \qed

\subsection{Microcanonical average: Proof of Theorem \ref{thm:PreT2}} \label{subsec:proofmc}
Using \eqref{eq:kernelstateapprox}, we start by writing out
\begin{equation} \label{eq:mcproofstart}
\langle A \rangle_{\widetilde{P}_\lambda} = \frac{\int_\R  \llangle \Im M_0(x + \ii \alpha \lambda^2) A \rrangle \,  \langle \Im M_0(x + \ii \alpha \lambda^2) \rangle_P  \, \rd x }{\int_\R \llangle \Im M_0(x + \ii \alpha \lambda^2) \rrangle \,  \langle \Im M_0(x + \ii \alpha \lambda^2)  \rangle_P \,  \rd x }\,. 
\end{equation}
For the denominator, we have 
\begin{equation} \label{eq:mcdenom}
	\int_\R \llangle \Im M_0(x + \ii \alpha \lambda^2) \rrangle \,  \langle \Im M_0(x + \ii \alpha \lambda^2)  \rangle_P \,  \rd x  = \pi \llangle \Im M_0(E_0 + \ii \alpha \lambda^2) \rrangle + \mathcal{O}\big(\epsilon_0 + \Delta + \lambda^2/\Delta\big)
\end{equation}
from Lemma \ref{lem:rho0approx}. For the numerator, we use the assumption that $h(x) =  \llangle \Im M_0(x + \ii \alpha \lambda^2) A \rrangle$ has uniformly bounded Lipschitz constant for $x \in I_\Delta$ (recall \eqref{eq:hC1 norm}). Hence we find
\begin{equation} \label{eq:mcnumer}
\int_\R  h(x) \,  \langle \Im M_0(x + \ii \alpha \lambda^2) \rangle_P  \, \rd x = \pi h(E_0) + \mathcal{O}\big(\Delta + \lambda^2 /\Delta\big)
\end{equation}
completely analogously to \eqref{eq:rhointegral} and~\eqref{eq:rhocauchy}, using Assumption \ref{ass:stateandobs}~(i). 

Therefore, plugging \eqref{eq:mcdenom} and~\eqref{eq:mcnumer} into \eqref{eq:mcproofstart}, and using Assumption \ref{ass:H0}~(ii) together with Lemma \ref{lem:rho0approx}, we obtain
\begin{equation*}
\langle A \rangle_{\widetilde{P}_\lambda} = \langle A \rangle_{{P}^{\rm (mc)}_{\lambda}} + \mathcal{O}(\cE_{\rm mc}) \quad \text{with} \quad \cE_{\rm mc} := \epsilon_0 + \Delta + \lambda^2/\Delta\,. 
\end{equation*}
This concludes the proof of Theorem \ref{thm:PreT2}. \qed

\appendix
\section{Auxiliary results and additional proofs} \label{app:app}
\subsection{Auxiliary results}
In this subsection, we derive two technical lemmas, which are frequently used throughout the main text. 

The first one (Lemma \ref{lemma:M_bounds}) is concerned with properties of the self-consistent resolvent $M_\lambda(z)$ from the $\lambda$-dependent Quadratic Matrix Equation \eqref{eq:MDE} using Assumption \ref{ass:H0} on the unperturbed matrix $H_0$. {Recall that $\Delbound$ denotes the upper bound for the energy width $\Delta$ (cf.~Assumption~\ref{ass:stateandobs}).} 
\begin{lemma} \label{lemma:M_bounds}
	Let $z:= E+\ii \eta$ be a spectral parameter in $\C$ with $|z|\le C$, then the solution $M_{\lambda}(z)$ to \eqref{eq:MDE} satisfies the bounds
	\begin{equation} \label{eq:trM_bounds}
		\llangle |M_{\lambda}(z)|^2\rrangle \le \lambda^{-2}, \quad \llangle |M_{\lambda}(z)|\rrangle \le \lambda^{-1}, \quad \Im z >0.
	\end{equation}
	Moreover, assuming that $\Im z \ge 0$, $|\Re z - E_0| \le \Dela$, and $E_0$ lies in the admissible spectrum $\sigma_{\mathrm{adm}}^{(\kappa_0, c_0)}$ of $H_0$ the following estimates
	\begin{equation} \label{eq:trM_local}
		\Im m_\lambda(z)\gtrsim 1,\quad |m_\lambda(z)| \lesssim 1, \quad  |m_\lambda'(z)| \lesssim 1,
	\end{equation}
	with $m_\lambda(z) := \llangle M_{\lambda}(z) \rrangle $ hold for any fixed $0 < \lambda \le \lambda_*$, and all $N \ge N_\lambda \in \N$, uniformly in $z$. 
	
\end{lemma}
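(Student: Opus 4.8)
The plan is to exploit the structure of the MDE \eqref{eq:MDE}, which upon taking the normalized trace becomes a scalar self-consistent equation for $m_\lambda(z) = \llangle M_\lambda(z)\rrangle$, namely
\begin{equation*}
	m_\lambda(z) = \llangle \frac{1}{H_0 - z - \lambda^2 m_\lambda(z)}\rrangle = \int_\R \frac{\rho_0^{(N)}(\omega)}{\omega - z - \lambda^2 m_\lambda(z)}\,\rd\omega\,,
\end{equation*}
where $\rho_0^{(N)}$ is the (exact, $N$-dependent) spectral measure of $H_0$. First I would establish the crude bounds \eqref{eq:trM_bounds}. These are completely general and follow from the defining property $\Im M_\lambda(z)\Im z>0$: writing $S := \lambda^2\llangle M_\lambda(z)\rrangle\mathds{1}$ for the (scalar) self-energy and using the MDE in the form $M_\lambda = (H_0 - z - S)^{-1}$, one takes imaginary parts to get $\Im M_\lambda = M_\lambda^*(\eta + \lambda^2\Im m_\lambda)M_\lambda \ge \eta\, M_\lambda^* M_\lambda = \eta |M_\lambda|^2$ as positive matrices (here $\Im m_\lambda \ge 0$ since $\eta>0$). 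Taking the normalized trace, $\eta\,\llangle|M_\lambda|^2\rrangle \le \llangle\Im M_\lambda\rrangle = \Im m_\lambda \le |m_\lambda|$. On the other hand, from $m_\lambda = \llangle M_\lambda\rrangle$ and Cauchy--Schwarz, $|m_\lambda|^2 \le \llangle|M_\lambda|^2\rrangle \le \eta^{-1}|m_\lambda|$, hence $|m_\lambda|\le\eta^{-1}$ and $\llangle|M_\lambda|^2\rrangle \le \eta^{-2}$. This is not quite \eqref{eq:trM_bounds}; to get the $\lambda^{-2}$ bound I would instead argue that $\eta_{\rm eff} := \eta + \lambda^2\Im m_\lambda$ plays the role of the effective imaginary part, so that $\llangle|M_\lambda|^2\rrangle \le \eta_{\rm eff}^{-1}\Im m_\lambda \le \lambda^{-2}$ directly (using $\eta_{\rm eff}\ge \lambda^2\Im m_\lambda$), and similarly $|m_\lambda|\le \llangle|M_\lambda|^2\rrangle^{1/2}\le\lambda^{-1}$; these hold for all $\Im z>0$ with $|z|\le C$ since $\|H_0\|\lesssim 1$ ensures the integrand makes sense.

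For the local bounds \eqref{eq:trM_local} the strategy is a perturbative/continuity argument off the unperturbed Stieltjes transform $m_0$ from \eqref{eq:m0}. The admissibility of $E_0\in\sigma_{\rm adm}^{(\kappa_0,c_0)}$ means that $\rho_0$ is bounded below by $c_0$ and is $C^{1,1}$ on $[E_0-\kappa_0,E_0+\kappa_0]$; by standard properties of the Stieltjes transform of such a locally regular measure, $m_0(z)$ extends continuously to $\{|\Re z - E_0|\le \Dela,\ \Im z\ge 0\}$ with $\Im m_0(z)\gtrsim 1$, $|m_0(z)|\lesssim 1$ and $|m_0'(z)|\lesssim 1$ there (the Lipschitz derivative bound gives the $C^1$ control on $m_0$ up to the real axis). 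Combining this with Assumption \ref{ass:H0}(i), i.e. $\llangle M_0(z)\rrangle = m_0(z) + \mathcal{O}(\epsilon_0)$ for $\Im z \ge \eta_0$, one first gets $\Im m_\lambda$-type bounds for $M_0$ itself, then transfers them to $M_\lambda$. Concretely, I would set up a fixed-point argument: define $\Phi(m) := \int \rho_0^{(N)}(\omega)(\omega - z - \lambda^2 m)^{-1}\rd\omega$ and show that on a small complex disk around $m_0(E_0)$ of radius $\sim 1$ (intersected with $\{\Im m\ge 0\}$), $\Phi$ is a contraction with Lipschitz constant $O(\lambda^2)\ll 1$, so its unique fixed point $m_\lambda(z)$ satisfies $|m_\lambda(z) - m_0(z)| \lesssim \lambda^2 + \epsilon_0/\eta_{\rm eff} + \dots$, which is $\lesssim \lambda$ for $\lambda$ small and $N$ large; here one uses that the unperturbed quantity $\int\rho_0^{(N)}(\omega)(\omega-z)^{-1}\rd\omega = \llangle M_0(z)\rrangle$ is close to $m_0(z)$ by \eqref{eq:rho0}. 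This yields $\Im m_\lambda\gtrsim 1$ and $|m_\lambda|\lesssim 1$. The derivative bound then follows by differentiating the self-consistent equation: $m_\lambda'(z) = \frac{\partial_z\Phi}{1-\lambda^2\partial_m\Phi}$, and since $|\partial_z\Phi| = \llangle|M_\lambda(z - \text{shift})|^2\rrangle \lesssim 1$ (now using $\Im m_\lambda\gtrsim 1$, so the effective spectral parameter has imaginary part $\gtrsim\lambda^2$ — wait, that only gives $\lambda^{-2}$; the point is that near $E_0$, $\llangle|(H_0 - z - \lambda^2 m_\lambda)^{-1}|^2\rrangle \lesssim 1$ because $\rho_0$ is \emph{regular} there, i.e. the measure has no atoms and bounded density, making this average $\lesssim 1$ uniformly) and $|\lambda^2\partial_m\Phi|\ll 1$.

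The main obstacle I anticipate is the last point: controlling $\llangle|M_\lambda(z)|^2\rrangle \lesssim 1$ (rather than the trivial $\lesssim\lambda^{-2}$) near the admissible energy, which is what underlies both $|m_\lambda'|\lesssim 1$ and the self-consistent improvement of $\Im m_\lambda$. This requires genuinely using the local regularity of $\rho_0$ encoded in \eqref{eq:admiss_spec} together with the approximation \eqref{eq:rho0} at the \emph{exact} scale $\eta_0$, and a bootstrap: one shows $\Im m_\lambda \gtrsim 1$ forces the effective imaginary part $\eta + \lambda^2\Im m_\lambda$ to be $\gtrsim\lambda^2$, but then improves this using that $\int \rho_0^{(N)}(\omega)\big((\omega - E)^2 + \eta_{\rm eff}^2\big)^{-1}\rd\omega \lesssim \|\rho_0\|_\infty \cdot \frac{1}{\eta_{\rm eff}}\cdot\eta_{\rm eff} + (\text{tails}) \lesssim 1$ on the regular interval — i.e. the density being bounded kills the $\eta_{\rm eff}^{-1}$ blow-up on the bulk part, and the part of $\rho_0^{(N)}$ outside $I_{\kappa_0}$ contributes $O(1)$ since there $|\omega - E|\gtrsim 1$. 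Making the interplay of the three limiting parameters $\lambda\to 0$, then $N\to\infty$ (with $\eta_0(N),\epsilon_0(N)\to 0$), precise — i.e. identifying the threshold $\lambda_*$ and $N_\lambda$ — is the bookkeeping-heavy part, but conceptually it is the standard stability analysis of the MDE adapted to the weak-coupling scaling $\lambda^2$.
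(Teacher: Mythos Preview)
Your argument for the crude bounds \eqref{eq:trM_bounds} is correct and matches the paper's. The outline for $\Im m_\lambda\gtrsim 1$ and $|m_\lambda|\lesssim 1$ is also essentially right: the paper does not set up a contraction but simply substitutes the crude bound $\lambda^2|m_\lambda|\le\lambda$ into $m_\lambda(z)=\llangle(H_0-z-\lambda^2 m_\lambda)^{-1}\rrangle=m_0(z+\lambda^2 m_\lambda)+\mathcal{O}(\epsilon_0)$ and uses $|m_0'|\lesssim 1$ to get $m_\lambda=m_0+\mathcal{O}(\lambda+\epsilon_0)$ for $\Im z\ge\eta_0$; it then reaches the real axis by showing $\lambda^{4/3}\Im m_\lambda$ is $1/3$-H\"older (from the crude derivative bound $|m_\lambda'|\lesssim\lambda^{-4}(\Im m_\lambda)^{-2}$), which is a useful device your plan lacks.

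There is, however, a genuine error in your treatment of $|m_\lambda'|\lesssim 1$. You claim that $\llangle|M_\lambda(z)|^2\rrangle\lesssim 1$ near $E_0$ because $\rho_0$ is bounded there, and your heuristic ``$\|\rho_0\|_\infty\cdot\eta_{\rm eff}^{-1}\cdot\eta_{\rm eff}$'' is meant to justify this. But that estimate is wrong: from the identity $\Im m_\lambda=\eta_{\rm eff}\llangle|M_\lambda|^2\rrangle$ one has $\llangle|M_\lambda|^2\rrangle=\Im m_\lambda/\eta_{\rm eff}\sim\lambda^{-2}$ once $\Im m_\lambda\gtrsim 1$ and $\eta\to 0$, regardless of how smooth $\rho_0$ is (indeed $\int\rho_0(\omega)|\omega-\zeta|^{-2}\rd\omega\sim\pi\rho_0(\Re\zeta)/\Im\zeta$). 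What one actually needs is the weaker statement $|\llangle M_\lambda(z)^2\rrangle|\lesssim 1$ (no absolute value inside the trace), since $m_\lambda'=\llangle M_\lambda^2\rrangle/(1-\lambda^2\llangle M_\lambda^2\rrangle)$. The paper obtains this by writing $\llangle(H_0-\zeta)^{-2}\rrangle=\int\rho_0(x)(x-\zeta)^{-2}\rd x+\mathcal{O}(\lambda^{-2}\epsilon_0)$, integrating by parts to $\int\rho_0'(x)(x-\zeta)^{-1}\rd x$, and then using the Lipschitz continuity of $\rho_0'$ encoded in the $C^{1,1}$ assumption of \eqref{eq:admiss_spec} to control the resulting principal-value-type integral. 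This is precisely where the $C^{1,1}$ regularity (rather than mere boundedness) of $\rho_0$ enters, and without it the derivative bound fails. Note also that your contraction constant ``$O(\lambda^2)$'' relies on the same false bound $\llangle|M_\lambda|^2\rrangle\lesssim 1$; with the correct $\lambda^{-2}$ the Lipschitz constant of $\Phi$ is only $O(1)$, so the fixed-point scheme as stated does not close.
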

\begin{proof}[Proof of Lemma \ref{lemma:M_bounds}]
	First, we prove the \eqref{eq:trM_bounds} for $\Im z > 0$. Taking the imaginary part of the MDE \eqref{eq:MDE}, multiplying by $|M_\lambda|^2$ and taking the averaged trace yields
	\begin{equation}
		\Im m_\lambda(z) = \bigl(\Im z + \lambda^2\Im m_\lambda(z)\bigr)\llangle |M_{\lambda}(z)|^2\rrangle,
	\end{equation}
	which immediately implies the first bound in \eqref{eq:trM_bounds}. The second estimate in \eqref{eq:trM_bounds} follows from the first one by the Cauchy--Schwarz inequality. To extend \eqref{eq:trM_bounds} down to $\Im z = 0$, we first address the regularity\footnote{The comprehensive analysis of the MDE in \cite{AEK2019, AEK2020} shows that for any fixed $\lambda$, under the additional boundedness assumption $\lVert M_\lambda \rVert \lesssim 1$, the operator $M_\lambda(z)$ is $1/3$-H\"{o}lder continuous with a $\lambda$-dependent constant. However, for the purposes of proving the regularity of $m_\lambda$, the operator norm bound is not necessary.} of $m_\lambda(z) = \llangle M_\lambda(z)\rrangle$. 
	
	Differentiating the MDE \eqref{eq:MDE}, taking the trace, and evoking the first bound in \eqref{eq:trM_bounds}, we obtain
	\begin{equation} \label{eq:M'_bound}
		\bigl|m'_\lambda(z)\bigr| = \frac{\bigl|\llangle M_\lambda(z)^2 \rrangle\bigr|}{\bigl| 1 - \lambda^2 \llangle M_\lambda(z)^2 \rrangle\bigl|} 
		\le \frac{1}{2\lambda^4\llangle[\bigl] (\Im M_\lambda(z))^2 \rrangle[\bigr]} \lesssim \frac{1}{\lambda^4(\Im m_\lambda(z))^2} ,
	\end{equation}
	where we used \eqref{eq:trM_bounds} to deduce that $| 1 - \lambda^2 \llangle M_\lambda(z)^2 \rrangle| \ge 1 - \lambda^2 \Re\llangle M_\lambda(z)^2 \rrangle \ge 2\lambda^2\llangle (\Im M_\lambda(z))^2 \rrangle$, and the positive-definiteness of $\Im M_\lambda(z)$ to obtain the last inequality in \eqref{eq:M'_bound}.
	
	Together with the first estimate in \eqref{eq:trM_bounds}, \eqref{eq:M'_bound} implies that $\lambda^{4/3} \Im m_\lambda(z)$ is uniformly $1/3$-H\"{o}lder continuous in $\{z\in \C: |\Im z| \ge 0, |z| \le C\}$. This concludes the proof of \eqref{eq:trM_bounds} for $\Im z \ge 0$.

	Next, we turn to proving the first estimate in \eqref{eq:trM_local}. It follows from the MDE \eqref{eq:MDE} that 
	\begin{equation} \label{eq:MDE_traced}
		m_\lambda(z) = \llangle[\bigl] \bigl(H_0 - z - \lambda^2m_\lambda(z)\bigr)^{-1} \rrangle[\bigr],
	\end{equation}
	and $\Im z + \lambda^2\Im m_\lambda(z) \ge \Im z$.
	
	First, assume that $\Im z \ge \eta_0$ (recall \eqref{eq:rho0}). It follows from \eqref{eq:trM_bounds} that $\lambda^2|m_\lambda(z)| \le \lambda$, hence by suitably shrinking the threshold $\lambda_*$, we can assume that $|\Re [z+\lambda^2 m_\lambda(z)] - E_0| \le \Delb$ 
	for all $z$ satisfying $\Im z \ge \eta_0$, $|z|\le C$ and $|\Re z - E_0| \le \Dela$. Therefore
	\begin{equation} \label{eq:trM_est}
		m_\lambda(z) = m_0\bigl(z+ \lambda^2m_\lambda(z)\bigr) + \mathcal{O}(\epsilon_0) = m_0(z) + \mathcal{O}(\lambda + \epsilon_0),
	\end{equation}
	where the first step follows by \eqref{eq:rho0} and \eqref{eq:trM_bounds}, and in the second estimate we used that $E_0 \in \sigma_{\mathrm{adm}}^{(\kappa_0, c_0)}$, defined in \eqref{eq:admiss_spec}. In particular, taking the imaginary part of \eqref{eq:trM_est}, and using the positivity of $\rho_0$ in the admissible spectrum yields $\Im m_\lambda(z) \gtrsim 1 + \mathcal{O}(\lambda + \epsilon_0)$. Hence, from the $1/3$-H\"{o}lder continuity of $\lambda^{4/3}\Im m_\lambda(z)$, and \eqref{eq:trM_est} we deduce that
	\begin{equation} \label{eq:Imm_lambdalower}
		\lambda^2\Im m_\lambda(z) \gtrsim \lambda^2 + \mathcal{O}(\lambda^3 + \lambda^2\epsilon_0+\lambda^{2/3}\eta_0^{1/3}).
	\end{equation}
	for all $z$ with $|\Re z - E_0| \le \Dela$, $\Im z \ge 0$, and $|z|\le C$. Therefore, for a suitably small threshold $\lambda_*$ and any fixed $0<\lambda\le\lambda_*$, the first estimate in \eqref{eq:trM_local} is established for all $N$ satisfying $\eta_0(N)^{1/3} \lesssim \lambda^2$ with the implicit constant depending only on the constant in \eqref{eq:M'_bound}. Since $\eta_0(N)$ converges to zero, there exists $N_\lambda\in \N$ such that all $N\ge N_\lambda$ satisfy $\eta_0(N)^{1/3} \lesssim \lambda^2$. Furthermore, we obtain $\Im z + \lambda^2\Im m_\lambda(z) \gtrsim \lambda^2 \ge \eta_0$, hence the second estimate in \eqref{eq:trM_local} follows from \eqref{eq:rho0} and \eqref{eq:MDE_traced}.
	
	Finally, we prove the third estimate in \eqref{eq:trM_local}. Differentiating \eqref{eq:MDE_traced} with respect to $z$ yields
	\begin{equation} \label{eq:trMderiv}
		m'_\lambda(z) = \llangle[\bigl] \bigl(H_0 - z - \lambda^2m_\lambda(z)\bigr)^{-2} \rrangle[\bigr]\bigl(1+\lambda^2	m'_\lambda(z)\bigr),
	\end{equation}
	Im particular, the first factor on the right-hand side of \eqref{eq:trMderiv} is a normalized trace of $(H_0-\zeta)^{-2}$ with $\zeta := z + \lambda^2m_\lambda(z)$, satisfying $|\Re \zeta - E_0| \le \Delb$ and $\Im \zeta \ge 2c \lambda^2 \ge \eta_0$ for some positive constant $c\sim 1$ by the first and second estimates in \eqref{eq:trM_local}. Hence, integrating \eqref{eq:rho0}, we deduce that
	\begin{equation}
		\llangle (H_0-\zeta)^{-2} \rrangle = \int_ {\R+\ii c\lambda^2} \frac{ \Im\llangle  (H_0-w)^{-1} \rrangle}{\pi\,(w-\zeta)^2}\mathrm{d}w = \int_{\R+\ii c\lambda^2} \frac{ \Im m_0(w) }{\pi\,(w-\zeta)^2}\mathrm{d}w + \mathcal{O}\bigl(\lambda^{-2}\epsilon_0\bigr).
	\end{equation}
	Moreover, using the fact that $m_0(w)$ is the Stieltjes transform of the limiting density $\rho_0$, we conclude that
	\begin{equation} \label{eq:trMsquare}
		\llangle (H_0-\zeta)^{-2} \rrangle = \frac{1}{\pi}\int_{\R} \frac{ \rho_0(x) }{(x-\zeta)^2}\mathrm{d}x + \mathcal{O}\bigl(\lambda^{-2}\epsilon_0\bigr),
	\end{equation}
	In particular, since $\Re \zeta$ lies in the admissible spectrum of $H_0$, the integral on the right-hand side of \eqref{eq:trMsquare} admits the estimate
	\begin{equation} \label{eq:rhoderivint}
		\biggl|\frac{1}{\pi}\int_{\R} \frac{ \rho_0(x) }{(x-\zeta)^2}\mathrm{d}x\biggr| = \biggl|\int_{J}\frac{\rho_0'(x) -\rho_0'(\Re\zeta)}{x-\zeta} \mathrm{d}x\biggr| + \biggl|\int_{J} \frac{\rho_0'(\Re\zeta)\,\Im\zeta}{|x-\zeta|^2}\mathrm{d}x \biggr| + \mathcal{O}(\kappa_0^{-2}),
	\end{equation}
	where $J:= [\Re\zeta-c\kappa_0, \Re\zeta+c\kappa_0] \subset I_{\kappa_0}$. Here we used that the kernel $(x-\Re\zeta)/|x-\zeta|^2$ is odd around $\Re\zeta$ and the integrability of $\rho_0$ to estimate the integral over $\R\backslash J$, while the boundary term resulting from integration by parts over $J$ is bounded by $\kappa_0^{-1}\mathcal{O}(\lVert \rho_0\rVert_{C^1(J)}) \lesssim \kappa_0^{-1}$. Observe that the second term on the right-hand side of \eqref{eq:rhoderivint} is bounded by $\mathcal{O}(|\rho_0'(\Re\zeta)|)$, while the first term is bounded by $\mathcal{O}(L)$, where $L$ is the Lipschitz constant of $\rho_0'$ on the interval $[E_0-\kappa_0, E_0+\kappa_0]$. Therefore, $|\llangle (H_0-\zeta)^{-2} \rrangle | = \mathcal{O}(1)$.
	Finally, rearranging the identity \eqref{eq:trMderiv} now yields
	\begin{equation}
		m'_\lambda(z) = \frac{\llangle[\bigl] \bigl(H_0 - z - \lambda^2m_\lambda(z)\bigr)^{-2} \rrangle[\bigr]}{1 - \lambda^2\llangle[\bigl] \bigl(H_0 - z - \lambda^2m_\lambda(z)\bigr)^{-2} \rrangle[\bigr]} = \mathcal{O}(1).
	\end{equation}
	This concludes the proof of Lemma \ref{lemma:M_bounds}.
\end{proof}
 
 We conclude this section by evaluating the denominator in \eqref{eq:kernelstate}. 
 
 \begin{lemma} \label{lem:rho0approx}
Under Assumptions \ref{ass:H0} and \ref{ass:stateandobs}~(i) (recalling the notation $\alpha = \pi \rho_0(E_0)$) it holds that 
\begin{equation} \label{eq:rhoapprox}
	\begin{split}
\left| \int_{\R} \llangle \Im M_0(x + \ii \alpha \lambda^2) \rrangle \langle \Im M_0(x +\ii \alpha \lambda^2) \rangle_P \,  \rd x -   \pi^2 \rho_0(E_0) \right| =  \mathcal{O}\big(\epsilon_0 + \Delta + \lambda^2/\Delta\big) \,, \\
\left| \pi^2 \rho_0(E_0) -   \pi \llangle \Im M_0(E_0 + \ii \alpha \lambda^2) \rrangle\right| = \mathcal{O}\big(\epsilon_0 + \Delta + \lambda^2/\Delta\big) \,. 
	\end{split}
\end{equation}
 \end{lemma}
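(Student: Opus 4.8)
The plan is to prove both estimates in \eqref{eq:rhoapprox} by reducing the integrals to a pointwise evaluation of $\rho_0$ at $E_0$, using that $\Im M_0(x+\ii\alpha\lambda^2)$ is an approximate identity of width $\lambda^2$ once we are at an admissible energy, together with the localization of $P$ in $I_\Delta$. The starting point is the spectral decomposition $H_0 = \sum_j \mu_j \ket{\bm u_j}\bra{\bm u_j}$ from \eqref{eq:specdecH0}, which gives $\llangle \Im M_0(x+\ii\eta)\rrangle = \frac1N\sum_j \cau_\eta(x-\mu_j)$ with $\cau_\eta(u) = \eta/(u^2+\eta^2)$, and likewise $\langle \Im M_0(x+\ii\eta)\rangle_P = \sum_k \B_k\, \cau_\eta(x-\mu_k)$ with $\B_k = \langle \bm u_k, P\bm u_k\rangle \ge 0$, $\sum_k \B_k = 1$, and $\B_k = 0$ unless $\mu_k \in I_\Delta$ by \eqref{eq:Plocalize}.

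For the first bound, I would first carry out the $x$-integration of the product of the two Poisson kernels. Using $\int_\R \cau_{\eta_1}(x-a)\cau_{\eta_2}(x-b)\,\rd x = \pi\, \cau_{\eta_1+\eta_2}(a-b)$ (a convolution identity for Poisson kernels), the integral becomes $\frac{\pi}{N}\sum_{j,k}\B_k\, \cau_{2\alpha\lambda^2}(\mu_j-\mu_k) = \pi \int_\R \langle \Im M_0(y+2\ii\alpha\lambda^2)\rangle_P\, \llangle\cdot\rrangle$-type expression; more directly it equals $\pi\, \langle \Im M_0(\,\cdot\, + 2\ii\alpha\lambda^2)\rangle$ averaged against $\B$, i.e. $\pi \sum_k \B_k \llangle \Im M_0(\mu_k + 2\ii\alpha\lambda^2)\rrangle$. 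Now for each $k$ with $\mu_k \in I_\Delta$, the energy $\mu_k$ lies within $\Delta < \tfrac16\kappa_0$ of $E_0 \in \sigma_{\mathrm{adm}}^{(\kappa_0,c_0)}$, so by \eqref{eq:rho0} and the regularity of $\rho_0$ near $E_0$ we have $\llangle \Im M_0(\mu_k + 2\ii\alpha\lambda^2)\rrangle = \Im m_0(\mu_k + 2\ii\alpha\lambda^2) + \mathcal{O}(\epsilon_0)$, and then $\Im m_0(\mu_k + 2\ii\alpha\lambda^2) = \pi \rho_0(\mu_k) + \mathcal{O}(\lambda^2/\Delta + \ldots)$ — here the $\mathcal{O}(\lambda^2)$-width smoothing of a $C^{1,1}$ density costs $\mathcal{O}(\lambda^2)$ from the part of the integral $|x-\mu_k|\le\Delta$ and $\mathcal{O}(\lambda^2/\Delta)$ from the tail $|x-\mu_k|>\Delta$ where $\cau_{2\alpha\lambda^2}$ is no longer concentrated — and finally $\rho_0(\mu_k) = \rho_0(E_0) + \mathcal{O}(\Delta)$ by admissibility \eqref{eq:admiss_spec}. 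Summing over $k$ against $\B_k \ge 0$ with $\sum_k \B_k = 1$ collapses everything to $\pi \cdot \pi\rho_0(E_0) + \mathcal{O}(\epsilon_0 + \Delta + \lambda^2/\Delta)$, which is the claim. (One must also note $\alpha = \pi\rho_0(E_0) \lesssim 1$ by Assumption \ref{ass:H0}, so the smoothing scale $\alpha\lambda^2$ is indeed $\sim \lambda^2$.)

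The second bound is the special case $P \rightsquigarrow$ the deterministic point $E_0$: it is just the statement $\pi \llangle \Im M_0(E_0 + \ii\alpha\lambda^2)\rrangle = \pi\cdot\big(\pi\rho_0(E_0) + \mathcal{O}(\epsilon_0 + \lambda^2/\Delta + \Delta)\big)$, which follows from exactly the same two substitutions as above (replacing $\llangle \Im M_0\rrangle$ by $\Im m_0$ via \eqref{eq:rho0}, then approximating the Poisson smoothing of the $C^{1,1}$ density $\rho_0$ near $E_0$), with no $\B$-summation needed. Combining the two displays in \eqref{eq:rhoapprox} (equivalently, using the second to rewrite the target constant $\pi^2\rho_0(E_0)$ in the first) then gives the stated form.

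The only genuinely delicate point is the error bookkeeping in the Poisson-smoothing step: showing that convolving the $C^{1,1}$ density $\rho_0$ (and, more precisely, its Stieltjes transform evaluated at imaginary part $\alpha\lambda^2$) against a kernel of width $\alpha\lambda^2$ produces an error $\mathcal{O}(\lambda^2/\Delta)$ rather than something worse. This requires splitting the defining integral for $\Im m_0(\mu_k + \ii\alpha\lambda^2)$ at $|x - \mu_k| = \Delta$: on the inner region one uses the Lipschitz bound on $\rho_0'$ from \eqref{eq:admiss_spec} and oddness of the displacement kernel to gain an extra power, producing $\mathcal{O}(\lambda^2)$; on the outer region one uses $\int_{|u|>\Delta}\cau_{\alpha\lambda^2}(u)\,\rd u = \mathcal{O}(\lambda^2/\Delta)$ together with the uniform boundedness and integrability of $\rho_0$. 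I expect this to be the main obstacle, but it is a standard estimate and the error structure $\epsilon_0 + \Delta + \lambda^2/\Delta$ in the statement already tells us exactly which three mechanisms are at play: the global-law error $\epsilon_0$ from \eqref{eq:rho0}, the macroscopic regularity of $\rho_0$ on scale $\Delta$, and the mismatch between the microscopic smoothing scale $\lambda^2$ and the localization width $\Delta$.
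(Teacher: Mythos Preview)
Your proposal is correct and follows essentially the same route as the paper. The paper first applies \eqref{eq:rho0} to replace $\llangle \Im M_0\rrangle$ by $\Im m_0$ (using $\int_\R \langle \Im M_0(x+\ii\alpha\lambda^2)\rangle_P\,\rd x = \pi$ to control the $\epsilon_0$ error), then writes $\Im m_0$ via the Stieltjes representation and performs the Poisson-kernel convolution, whereas you convolve first and apply \eqref{eq:rho0} afterwards at the point $\mu_k+2\ii\alpha\lambda^2$; both orderings land on the same integral $\int_\R \rho_0(y)\,\pi^{-1}\cau_{2\alpha\lambda^2}(y-\mu_k)\,\rd y$ and estimate it by splitting near and far from $E_0$, and the paper likewise treats the second line of \eqref{eq:rhoapprox} as the analogous special case.
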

\begin{proof}
We only prove the first relation in \eqref{eq:rhoapprox}. The argument for the second estimate is analogous and hence omitted.

We have that
\begin{equation} \label{eq:rhointegral}
\begin{split}
&\int_{\R} \llangle \Im M_0(x + \ii \alpha \lambda^2) \rrangle \langle \Im M_0(x +\ii \alpha \lambda^2) \rangle_P \,  \rd x \\
= &\int_{\R}  \Im m_0(x + \ii \alpha \lambda^2)  \langle \Im M_0(x +\ii \alpha \lambda^2) \rangle_P \,  \rd x + \mathcal{O}(\epsilon_0) \\
= &  \sum_{\mu_j \in I_\Delta} \langle \bm u_j ,P \bm u_j \rangle \int_{\R}  \rd y \rho_0(y)\int_{\R} \rd x  \frac{\alpha \lambda^2}{(x - y)^2 + (\alpha \lambda^2)^2} \frac{\alpha \lambda^2}{(x - \mu_j)^2 + (\alpha \lambda^2)^2} + \mathcal{O}(\epsilon_0) \\
= & \pi^2 \sum_{\mu_j \in I_\Delta} \langle \bm u_j ,P \bm u_j \rangle \int_{\R}  \rd y \rho_0(y) \frac{1}{\pi}\frac{2\alpha \lambda^2}{(y - \mu_j)^2 + (2\alpha \lambda^2)^2} + \mathcal{O}(\epsilon_0) \,,
\end{split}
\end{equation}
where we used $\int_\R \langle \Im M_0(x +\ii \alpha \lambda^2) \rangle_P \,  \rd x = \pi$ and \eqref{eq:rho0} to go to the second line. To go to the third line, we employed spectral decomposition \eqref{eq:specdecH0} of $H_0$ and used Assumption \ref{ass:stateandobs}~(i). Next, using that $\mu_j \in I_\Delta$ and regularity of $\rho_0$ within $I_{2 \Delta}$, we can evaluate the integral in the last line of \eqref{eq:rhointegral} as
\begin{equation} \label{eq:rhocauchy}
\int_{\R}  \rd y \rho_0(y) \frac{1}{\pi}\frac{2\alpha \lambda^2}{(y - \mu_j)^2 + (2\alpha \lambda^2)^2} = \rho_0(E_0) + \mathcal{O}\big(\Delta + \lambda^2/\Delta\big)\,,
\end{equation}
by adding and subtracting $\rho_0(E_0)$ and estimating  $I_{2\Delta}$ and  $I_{2 \Delta}^c$ separately. 

Combining \eqref{eq:rhointegral} with \eqref{eq:rhocauchy} and using that $\Tr[P] = 1$ concludes the argument. 
\end{proof}

\subsection{Proof of Proposition \ref{prop:LL}} \label{app:LL}Our proof closely follows \cite[Appendix~B]{multiG}, \cite[Section~5.2]{iid}, and \cite[Section~6.2]{equipart},\footnote{In fact, for $z_1, z_2$ in the \emph{bulk} of the self consistent density of states $\rho(x) := \pi^{-1} \lim_{\eta \to 0^+} \llangle \Im M(x + \ii \eta) \rrangle$, Proposition \ref{prop:LL} has already been proven for so called \emph{regular} matrices $A$ in \cite[Proposition 4.4]{equipart}. Here, we provide the proof uniformly in the spectrum and for arbitrary matrices.} hence, we only give the main steps. Note that we are only interested in a \emph{global law}, i.e.~the spectral parameters $z_1, z_2$ have imaginary parts uniformly bounded away from zero. In particular, we can simply afford the norm bounds $\Vert G_i \Vert \lesssim 1$ and $\Vert M_i \Vert \lesssim 1$. 

As a preparation for our argument, we recall the definition of the \emph{second order renormalization}, denoted by \emph{underline}, from \cite{ETHpaper}. For functions $f(W), g(W)$ of the random matrix $W$, we define 
\begin{equation} \label{eq:underline}
	\underline{f(W) W g(W)} := f(W) W g(W) - \widetilde{\E} \big[  (\partial_{\widetilde{W}}f)(W)\widetilde{W} g(W) + f(W)  \widetilde{W} (\partial_{\widetilde{W}}g)(W) \big]\,,
\end{equation}
where $\partial_{\widetilde{W}}$ denotes the directional derivative in the direction of the GUE matrix $\widetilde{W}$ that is independent of $W$. The expectation is taken w.r.t.~the matrix $\widetilde{W}$. Note that if $W$ itself is a GUE matrix, then $\E \underline{f(W)Wg(W)} = 0$, while for $W$ with a general distribution, this expectation is independent of the first two moments of $W$. In other words, the underline renormalizes the product $f(W)W g(W)$ to the second order.  We remark that underline \eqref{eq:underline} is a well-defined notation if the `middle' $W$ to which the renormalization refers is unambiguous. This is the case in our proof, since the functions $f, g$ are resolvents, i.e.~not involving explicitly monomials in $W$.

Moreover, we note that $\widetilde{\E} \widetilde{W} R \widetilde{W} = \llangle R \rrangle$ and furthermore, that the directional derivative of the resolvent is given by $\partial_{\widetilde{W}} G = -G \widetilde{W} G$.
For example, in the special case $f(W) = 1$ and $g(W) = (W + D -z)^{-1} = G$, we thus have
\begin{equation*}
	\underline{WG} = WG + \llangle G \rrangle G
\end{equation*}
by definition of the underline in \eqref{eq:underline}. 
Using this underline notation in combination with the identity $G (W + D - z) = I $ and the defining equation \eqref{eq:MDEnolambda} for $M$, we have
\begin{equation} \label{eq:start}
	G = M - M \underline{W G} + M \llangle G-M \rrangle G  = M - \underline{GW}M + G \llangle G-M \rrangle M\,. 
\end{equation}

Moreover, we have the following lemma, the proof of which is given at the end of this section. 
\begin{lemma}[Representation as full underlined, cf.~Lemma 5.6 in \cite{iid}] \label{lem:underline}Under the notations and assumptions of Proposition \ref{prop:LL}, we have that
\begin{equation} \label{eq:underlinerep}
\left( G_1BG_2-\Big(M_1BM_2+\frac{M_1M_2 \llangle M_1BM_2 \rrangle }{1- \llangle M_1M_2 \rrangle}\Big)\right)_{\bm x \bm y} = - \big( \underline{G_1 B' M_2 W G_2}\big)_{\bm x \bm y} + \mathcal{O}_\prec \left(\mathcal{E}_{\rm iso}\right)
\end{equation}
with $\mathcal{E}_{\rm iso} := 1/\sqrt{N}$ for some bounded deterministic matrix $B' \in \C^{N \times N}$.  
\end{lemma}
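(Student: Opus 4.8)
The plan is to prove Lemma~\ref{lem:underline} by iterating the self-consistent equation \eqref{eq:start} for one of the two resolvents and carefully isolating the full underlined term, following the ``master-inequality / self-improving'' scheme of \cite{iid,equipart,multiG}. Concretely, I would start from the expansion
\begin{equation*}
	G_1 = M_1 - \underline{G_1 W} M_1 + G_1 \llangle G_1 - M_1 \rrangle M_1
\end{equation*}
obtained from \eqref{eq:start}, multiply on the right by $B G_2$, and take the isotropic entry $(\,\cdot\,)_{\bm x\bm y}$. The term $(M_1 B G_2)_{\bm x\bm y}$ is then treated by the single-resolvent global law \eqref{eq:singlegloballaw} (in its isotropic form, which holds for $|\Im z|\gtrsim 1$), producing $(M_1 B M_2)_{\bm x\bm y}$ up to $\mathcal{O}_\prec(1/\sqrt N)$. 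The term $G_1\llangle G_1 - M_1\rrangle M_1 B G_2$ carries the small factor $\llangle G_1 - M_1\rrangle \prec 1/N$ from \eqref{eq:singlegloballaw}, so with the a priori bounds $\Vert G_i\Vert, \Vert M_i\Vert \lesssim 1$ valid away from the real axis it is absorbed into $\mathcal{O}_\prec(\mathcal{E}_{\rm iso})$. The remaining term is $-(\underline{G_1 W}\, M_1 B G_2)_{\bm x\bm y}$, and the point is to move the action of $W$ past $M_1 B$: writing $M_1 B G_2$ and using the resolvent expansion $G_2 = M_2 - \underline{G_2 W}M_2 + \cdots$ once more, one peels off $M_1 B M_2$ and is left with a term of the form $-(\underline{G_1 W M_1 B' M_2 W G_2})$-type object, plus again small corrections. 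Collecting the $\llangle M_1 B M_2\rrangle$ contributions generated by the Gaussian integration-by-parts (the $\widetilde\E$ terms in the definition \eqref{eq:underline}) is exactly what produces the denominator $1 - \llangle M_1 M_2\rrangle$ after resumming a geometric series; this is the algebraic heart of the computation and must be done by hand but is routine.

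The second ingredient — which I would state separately and is in fact the quantitative core — is the bound $|(\underline{G_1 B' M_2 W G_2})_{\bm x\bm y}| \prec 1/\sqrt N$. This is a ``fluctuation'' estimate for a single underlined $W$ sandwiched between resolvents, and it is proved by a cumulant expansion (or, for the Gaussian part, a direct integration by parts) together with a Ward identity $G^* G = (\Im G)/\Im z$ and the global-law bound on quadratic forms $\langle \bm x, G_i Y G_j \bm y\rangle$. Since we are in the global regime, $\Im z_i$ are order one, so all the Ward-identity gains are harmless and the naive power counting already gives $1/\sqrt N$; no spectral-scale $\eta$-dependence needs to be tracked, which is why the proof is much softer than its local-law counterparts. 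I would organize this as a short standalone lemma (the analogue of \cite[Lemma 5.5]{iid}) and then invoke it.

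The main obstacle — though it is more bookkeeping than genuine difficulty — is the precise tracking of which underlined/non-underlined $W$ the renormalization refers to when one substitutes the resolvent expansion twice, so that the notation \eqref{eq:underline} stays unambiguous, and the correct identification of the matrix $B'$ and of all the scalar prefactors that combine into $\llangle M_1 B M_2\rrangle/(1-\llangle M_1 M_2\rrangle)$. One has to be careful that $B'$ remains deterministic and uniformly bounded (it will be a fixed linear combination of $M_1 B M_2$-type expressions and $M_1, M_2, B$), so that Lemma~\ref{lem:underline}'s fluctuation estimate applies verbatim. A secondary point is to make sure the error terms coming from $\llangle G_i - M_i \rrangle$ are genuinely $\mathcal{O}_\prec(1/N)$ and not merely $\mathcal{O}_\prec(1/\sqrt N)$ — this is fine by \eqref{eq:singlegloballaw}, and it is what lets us declare $\mathcal{E}_{\rm iso} = 1/\sqrt N$ with the loss coming solely from the single underlined term. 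Once Lemma~\ref{lem:underline} is in place, Proposition~\ref{prop:LL} follows immediately by combining it with the fluctuation bound on $(\underline{G_1 B' M_2 W G_2})_{\bm x\bm y}$.
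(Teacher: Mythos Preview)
Your overall strategy---expand a resolvent via \eqref{eq:start}, extend the underline, and control the remainder by the single-resolvent law---is correct, and you correctly identify that $B'$ must come out bounded and deterministic and that the fluctuation bound on the underlined term is a separate step belonging to Proposition~\ref{prop:LL} rather than to this lemma. However, the execution you sketch is muddled in a way that would not produce the stated identity.

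First, expanding \emph{both} resolvents is unnecessary, and the object you aim for, ``$\underline{G_1 W M_1 B' M_2 W G_2}$'', is not even well-defined in the underline formalism \eqref{eq:underline}: with two explicit $W$'s there is no unambiguous ``middle'' $W$ to renormalize, a point the paper flags explicitly. If you extend the underline correctly after expanding $G_1$, you obtain a \emph{single}-$W$ term $\underline{G_1 W M_1 B G_2}$ together with a feedback term $\llangle M_1 B G_2\rrangle\,G_1G_2$; no second expansion of $G_2$ inside the underline is needed or meaningful. Second, the appearance of the denominator $1-\llangle M_1 M_2\rrangle$ is not obtained by resumming an infinite geometric series: it comes from applying the \emph{same} one-step identity once more with $B=I$ to express $G_1G_2$ and then solving the resulting linear equation, or---as the paper does, more elegantly---by replacing $B$ from the outset with $\widetilde B:=\mathcal{X}_{12}[B]=(1-\llangle M_1\,\cdot\,M_2\rrangle)^{-1}[B]$, so that after a single expansion of $G_2$ and one underline extension the feedback term cancels exactly and one reads off
\[
B'=\mathcal{X}_{12}[B]=B+\frac{\llangle M_1BM_2\rrangle}{1-\llangle M_1M_2\rrangle}.
\]
This stability-operator trick is the key simplification you are missing; it collapses your ``double expansion plus resummation'' into one algebraic step, and the boundedness of $B'$ then follows from the boundedness of $\mathcal{X}_{12}$ in the global regime.
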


Having this approximate representation of the lhs.~of \eqref{eq:LL} as a full underlined term at hand, we turn to the proof of \eqref{eq:LL} via a (minimalistic) cumulant expansion; see \cite[Eq.~(4.32)]{multiG} and \cite[Eq.~(5.24)]{iid}. 

	Let $p \in \N$ be arbitrary. Then, abbreviating the lhs.~of \eqref{eq:underlinerep} by $\mathcal{Q}_{\bm x \bm y}$, we obtain
	\begin{equation} \label{eq:minexp} 
		\E \big|\mathcal{Q}_{\bm x \bm y} \big|^{2p} 
		\lesssim  \, \E \,  \widetilde{\Xi} \,  \big\vert \mathcal{Q}_{\bm x \bm y} \big\vert^{2p-2} + \sum_{|\boldsymbol{l}| + \sum(J \cup J_*) \ge 2} \E \, \Xi(\boldsymbol{l}, J, J_*) \big\vert  \mathcal{Q}_{\bm x \bm y} \big\vert^{2p-1 - | J \cup J_*|} + \mathcal{O}_\prec\big(\mathcal{E}_{\rm iso}^{-2p}\big) \,,
	\end{equation}
	where the summation in  \eqref{eq:minexp} is taken over tuples $\boldsymbol{l} \in \Z^2_{\ge 0}$ and multisets of tuples $J, J_* \subset \Z^2_{\ge 0} \setminus \{(0,0)\}$, for which we set $\partial^{(l_1,l_2)} := \partial_{ab}^{l_1} \partial_{ba}^{l_2}$, $|(l_1, l_2)| = l_1 + l_2$, $\sum J = \sum_{\boldsymbol{j} \in J} |\boldsymbol{j}|$. 
	Moreover, we denoted
	\begin{align}  \label{eq:tildeXi}
		\widetilde{\Xi} := &\frac{\left\vert \big(  G_1 {B}' G_1 B G_2\big)_{\boldsymbol{x}\boldsymbol{y}} \big( G_1  G_2 \big)_{\boldsymbol{x}\boldsymbol{y}} \right\vert + \left\vert\big( G_1 {B}'  G_2 \big)_{\boldsymbol{x}\boldsymbol{y}} \big(G_1 B G_2  G_2  \big)_{\boldsymbol{x}\boldsymbol{y}}\right\vert}{N}  \\[2mm]
		&+ \frac{ \left\vert \big( G_1 B' G_2^*B^* G_1^*\big)_{\boldsymbol{x}\boldsymbol{x}} \big(G_2^*  G_2\big)_{\boldsymbol{y}\boldsymbol{y}} \right\vert + \left|  \big(G_1 B'  G_1^*\big)_{\boldsymbol{x}\boldsymbol{x}} \big( G_2^*A^* G_1^*  G_2\big)_{\boldsymbol{y}\boldsymbol{y}}\right|}{N}\,, \nonumber
	\end{align}
and defined $\Xi(\boldsymbol{l}, J, J_*)$ via
	\begin{align} \label{eq:Xi}
		\Xi:=N^{-(|\boldsymbol{l}| + \sum(J \cup J_*) + 1)/2} \sum_{ab} &\big|\partial^{\boldsymbol{l}} \big[(G_1 {B}')_{\boldsymbol{x}a}\big(G_2\big)_{b \boldsymbol{y}} \big]\big|  \\
		&\times \prod_{\boldsymbol{j} \in J} \big|\partial^{\boldsymbol{j}}  \big(G_1 B G_2\big)_{\boldsymbol{x}\boldsymbol{y}}\big|   \prod_{\boldsymbol{j} \in J_*} \big|\partial^{\boldsymbol{j}} \big(G_2^* B^* G_2^*\big)_{\boldsymbol{y}\boldsymbol{x}}  \big| \,.  \nonumber
	\end{align}

Now, by a simple norm bound, $\Vert G \Vert \lesssim 1$, we find that 
\begin{equation} \label{eq:Xitildebound}
\widetilde{\Xi} \prec \mathcal{E}_{\rm iso}^2\,. 
\end{equation}

For $\Xi$, our goal is to show that 
\begin{equation} \label{eq:Xibound} 
\Xi(\bm l, J, J_*) \prec \mathcal{E}_{\rm iso}^{1 + | J \cup J_*|} \,. 
\end{equation}
First, we have the naive bounds
\begin{equation} \label{eq:naive}
\big|\partial^{\boldsymbol{l}} \big[(G_1 {B}')_{\boldsymbol{x}a}\big(G_2\big)_{b \boldsymbol{y}} \big]\big|  + \big|\partial^{\boldsymbol{j}}  \big(G_1 B G_2\big)_{\boldsymbol{x}\boldsymbol{y}}\big| +  \big|\partial^{\boldsymbol{j}} \big(G_2^* B^* G_2^*\big)_{\boldsymbol{y}\boldsymbol{x}}  \big| \prec 1
\end{equation}
and hence
\begin{equation*}
\Xi \prec N^{-(|\boldsymbol{l}| + \sum(J \cup J_*) +1)/2} N^2 = N^{(4 - |\boldsymbol{l}| )/2}N^{-(1+ \sum(J \cup J_*))/2} = N^{(4 - |\boldsymbol{l}| )/2} \mathcal{E}_{\rm iso}^{1+ \sum(J \cup J_*)} \,. 
\end{equation*}
Thus, for $|\bm l| \ge 4$, we find the naive bounds \eqref{eq:naive} to be sufficient for \eqref{eq:Xibound},  since trivially $| J \cup J_*| \le \sum (J \cup J_*)$. For $|\bm l| \le 3$, we perform the summation $\sum_{ab}$ a bit more carefully, e.g., recalling the norm bound $\Vert G \Vert \lesssim 1$, as 
\begin{equation*}
\sum_{a} |G_{\bm x a}| \le N^{1/2} \sqrt{\sum_{a} |G_{\bm x a}|^2} = N^{1/2} (|G|^2)_{\bm x \bm x} \lesssim N^{1/2}
\end{equation*}
instead of the naive $\sum_{a} |G_{\bm x a}| \lesssim N$. Indeed, using the condition $|\bm l| + \sum (J \cup J_*) \ge 2$, we can check all the cases $|\bm l| \le 3$ explicitly and find 
\begin{equation*}
\sum_{ab} \big|\partial^{\boldsymbol{l}} \big[(G_1 {B}')_{\boldsymbol{x}a}\big(G_2\big)_{b \boldsymbol{y}} \big]\big|  \prod_{\boldsymbol{j} \in J} \big|\partial^{\boldsymbol{j}}  \big(G_1 B G_2\big)_{\boldsymbol{x}\boldsymbol{y}}\big|   \prod_{\boldsymbol{j} \in J_*} \big|\partial^{\boldsymbol{j}} \big(G_2^* B^* G_2^*\big)_{\boldsymbol{y}\boldsymbol{x}}  \big| \prec N^{2} \, N^{-(4 - |\boldsymbol{l}| )/2}\,,
\end{equation*}
from which we conclude \eqref{eq:Xibound}. 

Plugging \eqref{eq:Xibound} together with \eqref{eq:Xitildebound} into \eqref{eq:minexp}, using a Young inequality and recalling that $p$ was arbitrary, we deduce that 
\begin{equation*}
|\mathcal{Q}_{\bm x \bm y}| \prec \mathcal{E}_{\rm iso} = \frac{1}{\sqrt{N}}\,, 
\end{equation*}
i.e.~we have proven Proposition \ref{prop:LL}. \qed

 It remains to give the proof of Lemma \ref{lem:underline}. 
\begin{proof}[Proof of Lemma \ref{lem:underline}]
Applying \eqref{eq:start} to $G_2$, we thus find that
\begin{equation*}
	G_1 \widetilde{B} G_2 = G_1 \widetilde{B} M_2 - G_1 \widetilde{B} M_2 \underline{W G_2} + G_1 \widetilde{B} M_2 \llangle G_2-M_2\rrangle G_2
\end{equation*}
for  $\widetilde{B} = \mathcal{X}_{12}[B]$, where we introduced the linear operator
\begin{equation} \label{eq:X12def}
	\mathcal{X}_{12}[C] := \big(1 - \llangle M_1 \, \cdot \, M_2\rrangle \big)^{-1}[C]\quad \text{for} \quad C \in \C^{2N \times 2N}\,.
\end{equation}
Extending the underline to the whole product, we obtain
\begin{align*}
	G_1 \widetilde{B} G_2 = &M_1 \widetilde{B} M_2 + (G_1 - M_1) \widetilde{B} M_2 - \underline{G_1 \widetilde{B} M_2 W G_2} \\
	&+ G_1 \widetilde{B} M_2 \llangle G_2 - M_2 \rrangle G_2 + G_1 \llangle G_1 \widetilde{B} M_2\rrangle G_2\,,
\end{align*}
from which we conclude 
\begin{align} \label{eq:globallawfinal}
	G_1 BG_2 = &M_1 \mathcal{X}_{12}[B] M_2 + (G_1 - M_1) \mathcal{X}_{12}[B] M_2 - \underline{G_1 \mathcal{X}_{12}[B]M_2 W G_2} \\
	&+ G_1 \mathcal{X}_{12}[B] M_2 \llangle G_2 - M_2 \rrangle G_2 + G_1 \llangle (G_1-M_1) \mathcal{X}_{12}[B] M_2 \rrangle  G_2 \,. \nonumber
\end{align}
To proceed, we note that 
$$
B' := \mathcal{X}_{12}[B] = B+\frac{ \llangle M_1BM_2 \rrangle }{1- \llangle M_1M_2 \rrangle}
$$ 
has bounded norm, $\Vert B' \Vert \lesssim1 $, since for $|\Im z_1|, |\Im z_2| \gtrsim 1$, $\mathcal{X}_{12}$ is a bounded operator (see \cite[Lemma B.5]{iid} and \cite[Appendix~A.2]{equipart}). Then, using the norm bounds $\Vert G_i \Vert \lesssim 1$ and $\Vert M_i \Vert \lesssim 1$ together with the \emph{single resolvent global law} \eqref{eq:singlegloballaw} (see also \cite[Theorem~2.1]{slowcorr}) for the second, fourth and fifth term in \eqref{eq:globallawfinal}, we conclude the desired. 
\end{proof}

\renewcommand*{\bibname}{References}

\bibliographystyle{amsplain}
\bibliography{RefPreT}

\providecommand{\bysame}{\leavevmode\hbox to3em{\hrulefill}\thinspace}
\providecommand{\MR}{\relax\ifhmode\unskip\space\fi MR }
\providecommand{\MRhref}[2]{%
  \href{http://www.ams.org/mathscinet-getitem?mr=#1}{#2}
}
\providecommand{\href}[2]{#2}
\begin{thebibliography}{10}

\bibitem{AdRHH2016}
D.~Abanin, W.~De Roeck, W.~W. Ho, and F.Huveneers, \emph{A rigorous theory of
  many-body prethermalization for periodically driven and closed quantum
  systems}, Commun. Math. Phys. \textbf{354} (2017), 809--827.

\bibitem{AEK2019}
O.H. Ajanki, L.~Erd\H{o}s, and T.~Kr\"{u}ger, \emph{Stability of the matrix
  {D}yson equation and random matrices with correlated entries}, Probab. Theory
  Relat. Fields \textbf{173} (2019), 293--373.

\bibitem{AEK2020}
J.~Alt, L.~Erd\H{o}s, and T.~Krüger, \emph{The {D}yson equation with linear
  self-energy: Spectral bands, edges and cusps}, Doc. Math. \textbf{25} (2020),
  1421--1539.

\bibitem{edgelocallaw}
J.~Alt, L.~Erd\H{o}s, T.~Krüger, and D.~Schröder, \emph{Correlated random
  matrices: band rigidity and edge universality}, Ann. Probab. \textbf{48}
  (2020), no.~2, 963--1001.

\bibitem{BalzReimann}
B.~N. Balz and P.~Reimann, \emph{Typical relaxation of isolated many-body
  systems which do not thermalize}, Phys. Rev. Lett. \textbf{118} (2017),
  190601.

\bibitem{BBW2004}
J.~Berges, S.~Bors\'{a}nyi, and C.~Wetterich, \emph{Prethermalization}, Phys.
  Rev. Lett. \textbf{93} (2004), no.~14, 142002.

\bibitem{BKL2010}
G.~Biroli, C.~Kollath, and A.~M. Läuchli, \emph{Effect of rare fluctuations on
  the thermalization of isolated quantum systems}, Phys. Rev. Lett.
  \textbf{105} (2010), no.~25, 250401.

\bibitem{BISZ2016}
F.~Borgonovi, F.~M. Izrailev, and L.~F. Santosand V.~G. Zelevinsky,
  \emph{Quantum chaos and thermalization in isolated systems of interacting
  particles}, Phys. Rep. \textbf{626} (2016), 1--58.

\bibitem{equipart}
G.~Cipolloni, L.~Erd\H{o}s, J.~Henheik, and O.~Kolupaiev, \emph{Gaussian
  fluctuations in the equipartition principle for {W}igner matrices}, Forum
  Math. Sigma \textbf{11} (2023), E74.

\bibitem{iid}
G.~Cipolloni, L.~Erd\H{o}s, J.~Henheik, and D.~Schröder, \emph{Optimal lower
  bound on eigenvector overlaps for non-{H}ermitian random matrices}, Preprint,
  arXiv:2301.03549 (2023).

\bibitem{ETHpaper}
G.~Cipolloni, L.~Erd\H{o}s, and D.~Schröder, \emph{Eigenstate thermalization
  hypothesis for {W}igner matrices}, Commun. Math. Phys. \textbf{388} (2021),
  1005--1048.

\bibitem{multiG}
G.~Cipolloni, L.~Erd\H{o}s, and D.~Schröder, \emph{Optimal multi-resolvent
  local laws for {W}igner matrices}, Electron. J. Probab. \textbf{27} (2022),
  1--38.

\bibitem{thermalization}
\bysame, \emph{Thermalisation for {W}igner matrices}, J. Funct. Anal.
  \textbf{282} (2022), 109394.

\bibitem{DabelowThesis}
L.~Dabelow, \emph{Predicting quantum many-body dynamics out of equilibrium},
  Ph.D. thesis, Universität Bielefeld, 2020.

\bibitem{DRpretherm}
L.~Dabelow and P.~Reimann, \emph{Typicality of prethermalization}, Phys. Rev.
  Let. \textbf{122} (2019), no.~14, 140601.

\bibitem{DRecho}
\bysame, \emph{Predicting imperfect echo dynamics in many-body quantum
  systems}, Z. Naturforsch. A \textbf{75} (2020), no.~5, 403--411.

\bibitem{DRrelax1}
\bysame, \emph{Relaxation theory for perturbed many-body quantum systems versus
  numerics and experiment}, Phys. Rev. Let. \textbf{124} (2020), 120602.

\bibitem{DRrelax2}
\bysame, \emph{Typical relaxation of perturbed quantum many-body systems}, J.
  Stat. Mech.: Theory Exp. (2021), 013106.

\bibitem{Deutsch1991}
J.~Deutsch, \emph{Quantum statistical mechanics in a closed system}, Phys. Rev.
  A \textbf{43} (1991), 2046--2049.

\bibitem{DieplingerBera}
J.~Dieplinger and S.~Bera, \emph{Finite-size prethermalization at the
  chaos-to-integrable crossover}, Preprint, arXiv:2303.16019 (2023).

\bibitem{semicirclegeneral}
L.~Erd\H{o}s, A.~Knowles, H.-T. Yau, and J.~Yin., \emph{The local semicircle
  law for a general class of random matrices}, Electron. J. Probab. \textbf{18}
  (2013), no.~59, 1--58.

\bibitem{slowcorr}
L.~Erd\H{o}s, T.~Krüger, and D.~Schröder, \emph{Random matrices with slow
  correlation decay}, Forum Math. Sigma \textbf{7} (2019), E8.

\bibitem{GalloneLangella}
M.~Gallone and B.~Langella, \emph{Prethermalization and conservation laws in
  quasi-periodically driven quantum systems}, Preprint, arXiv:2306.14022
  (2023).

\bibitem{Gogolin}
C.~Gogolin and J.~Eisert, \emph{Equilibration, thermalisation, and the
  emergence of statistical mechanics in closed quantum systems}, Rep. Prog.
  Phys. \textbf{79} (2016), no.~5, 056001.

\bibitem{KWE2011}
M.~Kollar, F.~A. Wolf, and M.~Eckstein, \emph{Generalized {G}ibbs ensemble
  prediction of prethermalization plateaus and their relation to nonthermal
  steady states in integrable systems}, Phys. Rev. B \textbf{84} (2011),
  054304.

\bibitem{LGS2016}
T.~Langen, T.~Gasenzer, and J.~Schmiedmayer, \emph{Prethermalization and
  universal dynamics in near-integrable quantum systems}, J. Stat. Mech.
  \textbf{2016} (2016), 064009.

\bibitem{MRdR2019}
K.~Mallayya, M.~Rigol, and W.~De Roeck, \emph{Prethermalization and
  thermalization in isolated quantum systems}, Phys. Rev. X \textbf{9} (2019),
  021027.

\bibitem{MK2008}
M.~Moeckel and S.~Kehrein, \emph{Interaction quench in the {H}ubbard model},
  Phys. Rev. Let. \textbf{100} (2008), 175702.

\bibitem{MIKUreview}
T.~Mori, T.N. Ikeda, E.~Kaminishi, and M.~Ueda, \emph{Thermalization and
  prethermalization in isolated quantum systems: a theoretical overview}, J.
  Phys. B: At. Mol. Opt. Phys. \textbf{51} (2018), no.~11, 112001.

\bibitem{NationPorras}
C.~Nation and D.~Porras, \emph{Quantum chaotic fluctuation-dissipation theorem:
  Effective {B}rownian motion in closed quantum systems}, Phys. Rev. E
  \textbf{99} (2019), 052139.

\bibitem{Reimann2016}
P.~Reimann, \emph{Typical fast thermalization processes in closed many-body
  systems}, Nat. Comm. \textbf{7} (2016), 10821.

\bibitem{Srednicki}
M.~Srednicki, \emph{Chaos and quantum thermalization}, Phys. Rev. E \textbf{50}
  (1994), 888--901.

\bibitem{Ueda2020}
M.~Ueda, \emph{Quantum equilibration, thermalization and prethermalization in
  ultracold atoms}, Nat. Rev. Phys. \textbf{2} (2020), 669--681.

\bibitem{Wigner}
E.~P. Wigner, \emph{Characteristic vectors of bordered matrices with infinite
  dimensions}, Ann. of Math. \textbf{62} (1955), no.~2, 548--564.

\end{thebibliography}

\end{document}